\newcounter{algorithmicH}
\let\oldalgorithmic\algorithmic
\renewcommand{\algorithmic}{%
  \stepcounter{algorithmicH}
  \oldalgorithmic}
\renewcommand{\theHALG@line}{ALG@line.\thealgorithmicH.\arabic{ALG@line}}
\newcommand{\nbdash}{\nobreakdash}
\newcommand{\cmark}{\ding{51}}
\newcommand{\xmark}{\ding{55}}
\newcommand{\Wlog}{w.l.o.g.\ }
\newcommand{\Wrt}{w.r.t.\ }
\newcommand{\cnf}{\mathit{CNF}}
\newcommand{\dnf}{\mathit{DNF}}
\newcommand{\DM}{\updownarrow}
\newcommand{\Hyp}{\mathscr{H}}
\newcommand{\tuple}[1]{\langle #1 \rangle}
\newcommand{\assign}[1]{\tuple{#1}}
\newcommand{\emptyassign}{\sigma_\varepsilon}
\newcommand{\addassign}{+}
\newcommand{\In}{\mathit{In}}
\newcommand{\Ex}{\mathit{Ex}}
\newcommand{\reverse}{\overline}
\renewcommand{\emptyset}{\varnothing}
\newcommand{\wt}{\widetilde}
\newcommand{\wh}{\widehat}
\newcommand{\compl}[1]{{\overline{#1}}}
\newcommand{\G}{\mathcal{G}}
\renewcommand{\H}{\mathcal{H}}
\newcommand{\F}{\mathcal{F}}
\newcommand{\Tr}{\mathit{tr}}
\newcommand{\NullID}{\ensuremath{\mathit{NIL}}\xspace}
\newcommand{\C}{\mathcal{C}}
\newcommand{\A}{\mathcal{A}}
\newcommand{\I}{\mathcal{I}}
\newcommand{\InputSize}{N}
\newcommand{\NodesTree}{N}
\newcommand{\Tree}{\mathcal{T}}
\newcommand{\Com}{\mathit{Com}}
\newcommand{\Sep}{\mathit{Sep}}
\newcommand{\PathToNode}{\mathscr{N}}
\newcommand{\ExcNode}[1]{\ensuremath{-#1}}
\newcommand{\IncNodeCrit}[2]{\ensuremath{(#1,#2)}}
\newcommand{\SetOfLogSeqs}{\mathscr{S}\ensuremath{^{\log}}}
\newcommand{\LabelsOfHyp}{\mathscr{L}}
\newcommand{\dblsubeq}{\sqsubseteq}
\newcommand{\dblsub}{\sqsubset}
\newcommand{\Freq}{\mathit{Freq}}
\newcommand{\Infreq}{\mathit{Infreq}}
\newcommand{\DUAL}{{\mbox{\sc Dual}}\xspace}
\newcommand{\coDUAL}{\ensuremath{\overline{\mbox{\sc Dual}}}\xspace}
\newcommand{\DualProbHyp}{\DUAL}
\newcommand{\NonDualProbHyp}{\coDUAL}
\newcommand{\Dual}{\DUAL}
\newcommand{\coDual}{\coDUAL}
\newcommand{\valtrue}{\textnormal{\texttt{true}}\xspace}
\newcommand{\valfalse}{\textnormal{\texttt{false}}\xspace}
\newcommand{\NDAlg}{ND-NotDual}
\newcommand{\ComputeNT}{ComputeNT}
\newcommand{\CheckIP}{Check-Intersection\-Property}
\newcommand{\CheckSimpleIP}{Check-Simple-And-Intersection}
\newcommand{\CheckConsistencySet}{Check-Consistency}
\newcommand{\CheckDoubleWitnessAug}{Check-Aug-DoubleWitness}
\newcommand{\Vertex}{\mathit{Vertex}}
\newcommand{\Hyper}{\mathit{Hyp}}
\newcommand{\Edge}{\mathit{EdgeOf}}
\newcommand{\Incidence}{\mathit{In}}
\newcommand{\Sone}{S_1}
\newcommand{\Stwo}{S_2}
\newcommand{\Plus}{\mathit{PLUS}}
\newcommand{\Succ}{\mathit{SUCC}}
\newcommand{\Prec}{<}
\newcommand{\Bit}{\mathit{BIT}}
\newcommand{\Simple}{\mathit{simple}}
\newcommand{\IntersectionProperty}{\mathit{intersection\mhyphen{}property}}
\newcommand{\NotConsistency}{\mathit{inconsistent}}
\newcommand{\CongruentGuess}{\mathit{congruentGuess}}
\newcommand{\ConsistentGuess}{\mathit{consistentGuess}}
\newcommand{\EGuess}{\mathit{E\mhyphen{}guess}}
\newcommand{\IGuess}{\mathit{I\mhyphen{}guess}}
\newcommand{\ComHAfterGuess}{\mathit{com}}
\newcommand{\Half}{\mathit{half}}
\newcommand{\CountCompEdges}{\mathit{count\mhyphen{}com}}
\newcommand{\CountCompEdgesIncl}{\mathit{count\mhyphen{}com\mhyphen{}inc}}
\newcommand{\ALHCompEdges}{\mathit{freq}}
\newcommand{\EModified}{\mathit{E}\mhyphen{}aug}
\newcommand{\IModified}{\mathit{I}\mhyphen{}aug}
\newcommand{\ComHModified}{\mathit{com\mhyphen{}aug}}
\newcommand{\SepGModified}{\mathit{sep\mhyphen{}aug}}
\newcommand{\CheckGuessAugDoubleWitness}{\mathit{CheckGuessAugDoubleWitness}}
\newcommand{\Alg}{Det-Dual}
\newcommand{\DetNewTrAlg}{New-Tr}
\newcommand{\FreeSet}{\mathit{Free}}
\newcommand{\IncAlg}{\mathit{Incl}}
\newcommand{\ExcAlg}{\mathit{Excl}}
\newcommand{\FreeAlg}{\mathit{Free}}
\newcommand{\ComAlg}{\Com_\H}
\newcommand{\SepAlg}{\Sep_\G}
\newcommand{\TC}[1]{\textnormal{\mbox{{TC}$^{#1}$}}\xspace}
\newcommand{\PP}{\textnormal{{PTIME}}\xspace}
\newcommand{\GC}[2]{\textnormal{\textrm{{GC}\ensuremath{(}#1{},\linebreak[0] #2{}\ensuremath{)}}}\xspace}
\newcommand{\LogSpace}{\textnormal{{LOGSPACE}}\xspace}
\newcommand{\DSpace}{\textnormal{{DSPACE}}\xspace}
\newcommand{\FO}{FO\xspace}
\newcommand{\FOM}{FOM\xspace}
\newcommand{\FOC}{FO(COUNT)\xspace}
\newcommand{\pol}{{\rm pol}}
\newcommand{\SUPER}{\mbox{\rm $[\![$LOGSPACE$_\pol]\!]^{\log}$}}
\newcommand{\dspace}[1]{{\mbox{\rm\sc DSPACE[}#1{\rm]}}}
\mathchardef\mhyphen="2D
\theoremstyle{plain}
\newtheorem{theorem}{Theorem}[section]
\newtheorem{lemma}[theorem]{Lemma}
\newtheorem{corollary}[theorem]{Corollary}
\theoremstyle{definition}
\title{Achieving New Upper Bounds for the Hypergraph Duality Problem through Logic\footnote{This is an extended version of a paper that will shortly appear in SIAM Journal on Computing.}}
\date{}
\author[1]{Georg Gottlob}
\author[2]{Enrico Malizia}
\affil[1]{%
Department of Computer Science\\
University of Oxford, UK
}
\affil[2]{%
Department of Computer Science\\
University of Exeter, UK
}
\begin{document}
\maketitle

\begin{abstract}
  The hypergraph duality problem \DualProbHyp is defined as follows: given two simple hypergraphs $\G$ and $\H$, decide whether $\H$ consists precisely of all minimal transversals of $\G$ (in which case we say that $\G$ is the dual of $\H$, or, equivalently, the transversal hypergraph of $\H$).
  This problem is equivalent to deciding whether two given non-redundant monotone DNFs/CNFs are dual.
  It is known that $\coDual$, the complementary problem to \DualProbHyp, is in \GC{$\log^2 n$}{\PP}, where \GC{$f(n)$}{$\C$} denotes the complexity class of all problems that after a nondeterministic guess of $O(f(n))$ bits can be decided (checked) within complexity class $\C$.
  It was conjectured that $\coDual$ is in \GC{$\log^2 n$}{\LogSpace}.
  In this paper we prove this conjecture and actually place the $\coDual$ problem into the complexity class \GC{$\log^2 n$}{\TC0} which is a subclass of \GC{$\log^2 n$}{\LogSpace}.
  We here refer to the logtime-uniform version of \TC0, which corresponds to FO(COUNT), i.e., first order logic augmented by counting quantifiers.
  We achieve the latter bound in two steps.
  First, based on existing problem decomposition methods, we develop a new nondeterministic algorithm for $\coDual$ that requires to guess $O(\log^2 n)$ bits.
  We then proceed by a logical analysis of this algorithm, allowing us to formulate its deterministic part in FO(COUNT).
  From this result, by the well known inclusion $\TC0\subseteq\LogSpace$, it follows that \Dual belongs also to $\DSpace[\log^2 n]$.
  Finally, by exploiting the principles on which the proposed nondeterministic algorithm is based, we devise a deterministic algorithm that, given two hypergraphs $\G$ and $\H$, computes in quadratic logspace a transversal of $\G$ missing in $\H$.
\end{abstract}

\section{Introduction}\label{sec:intro}
The hypergraph duality problem $\Dual$ is one of the most mysterious and challenging decision problems of Computer Science, as its complexity has been intensively investigated for almost 40 years without any indication that the problem is tractable, nor any evidence whatsoever, why it should be intractable.
Apart from a few significant upper bounds, which we review below, and a large number of restrictions that make the problem tractable, progress on pinpointing the complexity of $\Dual$ has been rather slow.
So far, the problem has been placed in relatively low complexity nondeterministic classes within coNP.
It is the aim of this paper to further narrow it down by using logical methods.

\medskip

{\bf The hypergraph duality problem.}
A hypergraph $\G$ consists of a finite set $V$ of vertices and a set $E\subseteq 2^V$ of (hyper)edges.
$\G$ is {\em simple} (or {\em Sperner}) if none of its edges is contained in any other of its edges.
A {\em transversal} or {\em hitting set} of a hypergraph $\G=\tuple{V,E}$ is a subset of $V$ that meets every edge in $E$.
A transversal of $\G$ is \emph{minimal}, if none of its proper subsets is a transversal.
The set of minimal transversals of a hypergraph $\G=\tuple{V,E}$ is denoted by $tr(\G)$.
Note that $tr(\G)$, which is referred to as the {\em dual\footnote{Note that sometimes in the literature the dual hypergraph of $\G$ was defined as the hypergraph derived from $\G$ in which the roles of the vertices and the edges are ``interchanged'' (see, e.g.,~\citep{Berge1989,Scheinerman1997}), and this is different from the transversal hypergraph. Nevertheless, lately in the literature the name ``dual hypergraph'' has been used with the meaning of ``transversal hypergraph'' (as in, e.g.,~\citep{EGMSurvey,Khachiyan2006,Boros2009,gott13,Khachiyan2007,Elbassioni2008}).}} of $\G$ or also as the {\em transversal hypergraph} of $\G$, defines itself a hypergraph on the vertex set $V$.
The decision problem $\Dual$ is now easily defined as follows: Given two simple hypergraphs $\G$ and $\H$ over vertex set $V$, decide whether $\H=tr(\G)$.

\begin{figure}[!ht]
  \centering%
  \includegraphics[width=0.42\textwidth]{./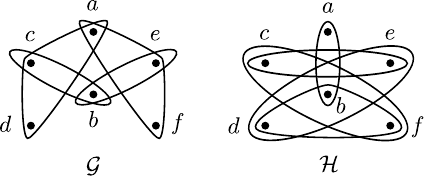}
  \caption{Hypergraph $\mathcal{G}$ and its transversal hypergraph $\mathcal{H}$.}\label{fig1}
\end{figure}

An example of a hypergraph and its dual is given in \cref{fig1}.
It is well-known that the duality problem has a nice symmetry property~\citep{Berge1989}: if $\G$ and $\H$ are simple hypergraphs over vertex set $V$, then $\H=tr(\G)$ iff $\G=tr(\H)$, and in this case $\G$ and $\H$ are said to be dual.
The $\Dual$ problem is also tightly related to the problem of actually {\em computing} $tr(\G)$ for an input hypergraph $\G$.
In fact, it is known that the computation problem is feasible in total polynomial time, that is, in time polynomial in $|\G|+|tr(\G)|$, if and only if $\Dual$ is solvable in polynomial time~\citep{bioc-ibar-95}.
These and several other properties of the duality problem are reviewed and discussed in~\citep{EGMSurvey,EG95,hagenDiss,EGM03}, where also many original references can be found.

\medskip

{\bf Applications of hypergraph duality.}
The $\Dual$ problem and its computational variant have a tremendous number of applications.
They range from data mining~\citep{guno-etal-97,ManiToiv,BGKM2002,BGKM2003}, functional dependency inference~\citep{mannila1992complexity,mannila1994algorithms,gott-libk-90}, and machine learning, in particular, learning monotone Boolean CNFs and DNFs with membership queries~\citep{guno-etal-97,mishra1997generating}, to model-based diagnosis~\citep{ReiterDiag87,grei-smit-wilk-90}, computing a Horn approximation to a non-Horn theory~\citep{kavv-etal-93,gogi-etal-98}, computing minimal abductive explanations to observations~\citep{Eiter-Makino-Abduction}, and computational biology, for example, discovering of metabolic networks and engineering of drugs preventing the production of toxic metabolites in cells~\cite{Klamt2004,Klamt2006}.
Surveys of these and other applications as well as further references can be found in~\citep{EGJelia,EG95,hagenDiss,Klamt2009}.

The simplest and foremost applications relevant to logic and hardware design are DNF duality testing and its computational version, DNF dualization.
A pair of Boolean formulas $f(x_1,x_2,\ldots,x_n)$ and $g(x_1,x_2,\ldots,x_n)$ on pro\-po\-si\-tio\-nal variables $x_1,x_2,\ldots,x_n$ are {\it dual} if, for any Boolean assignment to variables $x_1,\dots,x_n$,
$$f(x_1,x_2,\ldots,x_n)\equiv \neg g(\neg x_1,\neg x_2,\ldots,\neg x_n).$$
A monotone DNF is {\em irredundant} if the set of variables in none of its disjuncts is covered by the variable set of any other disjunct.
The {\em duality testing problem} is the problem of testing whether two irredundant monotone DNFs $f$ and $g$ are dual.
It is well-known and easy to see that monotone DNF duality and $\Dual$ are actually the same problem.\footnote{In fact, in the literature, the hypergraph transversal problem was tackled interchangeably from the perspective of monotone Boolean formula dualization, or from the perspective of hypergraphs. Readers wanting to know more about the relationships between some of the different perspectives adopted in the literature to deal with the \DualProbHyp problem are referred to \cref{sec:perspectives}.}
Two hypergraphs $\G$ and $\H$ are dual iff their associated DNFs $\G^*$ and $\H^*$ are dual, where the DNF $\F^*$ associated with a hypergraph $\F=\tuple{V,E}$ is $\bigvee_{e\in E}\bigwedge_{v\in e}v$, where, obviously, vertices $v\in V$ are interpreted
as propositional variables.
For example, the hypergraphs $\G$ and $\H$ of \cref{fig1} give rise to DNFs
\begin{gather*}
\G^*=(a\land c\land d)\lor(a\land e\land f)\lor(c\land b)\lor(e\land b), \text{ and}\\
\H^*=(a\land b)\lor(c\land e)\lor(c\land b\land f)\lor(e\land b\land d)\lor(d\land b\land f),
\end{gather*}
which are indeed mutually dual.
The duality problem for irredundant monotone DNFs corresponds, in turn, to $\Dual$, and the problem instances $\tuple{\F,\G}$ and $\tuple{\F^*,G^*}$ can be inter-translated by extremely low-level reductions, in particular, LOGTIME reductions, and even projection reductions.
In many publications, the $\Dual$ problem is thus right away introduced as the problem of duality checking for irredundant monotone DNFs.
An equivalent problem is the problem of checking whether a monotone CNF and a monotone DNF are logically equivalent.

\medskip

{\bf Previous Complexity bounds}.
$\Dual$ is easily seen to reside in coNP.
In fact, in order to show that a $\Dual$ instance is a ``no''-instance, it suffices to show that either some edge of one of the two hypergraphs is not a minimal transversal of the other hypergraph (which is feasible in polynomial time), or to find (guess and check) a missing transversal to one of the input hypergraphs.
The complement $\coDual$ of $\Dual$ is therefore in NP.
In their landmark paper, \citet*{fred-khac-96} have shown that {\DUAL} is in DTIME[$n^{o(\log n)}$], more precisely, that it is contained in DTIME[$n^{4\chi(n)+O(1)}$], where $\chi(n)$ is defined by $\chi(n)^{\chi(n)}=n$.
Note that $\chi(n)\sim\log n/\log\log n = o(\log n)$.

Let \GC{$f(n)$}{$\C$} denote the complexity class of all problems that after a nondeterministic guess of $O(f(n))$ bits can be decided (checked) in complexity class $\C$.
\citet*{EGM03}, and independently, \citet*{kavv-stav-03} have shown that $\coDUAL$ is in \GC{$\log^2 n$}{\PP}; note that this class is also known as $\beta_2$P, see~\citep{Goldsmith}.

Recently, the nondeterministic bound for $\coDual$ was further pushed down to \GC{$\log^2 n$}{$\SUPER$} \citep{gott13}, see \cref{fig2a}.

\begin{figure}
  \centering%
  \begin{subfigure}[b]{0.5\textwidth}
    \includegraphics[width=\textwidth]{./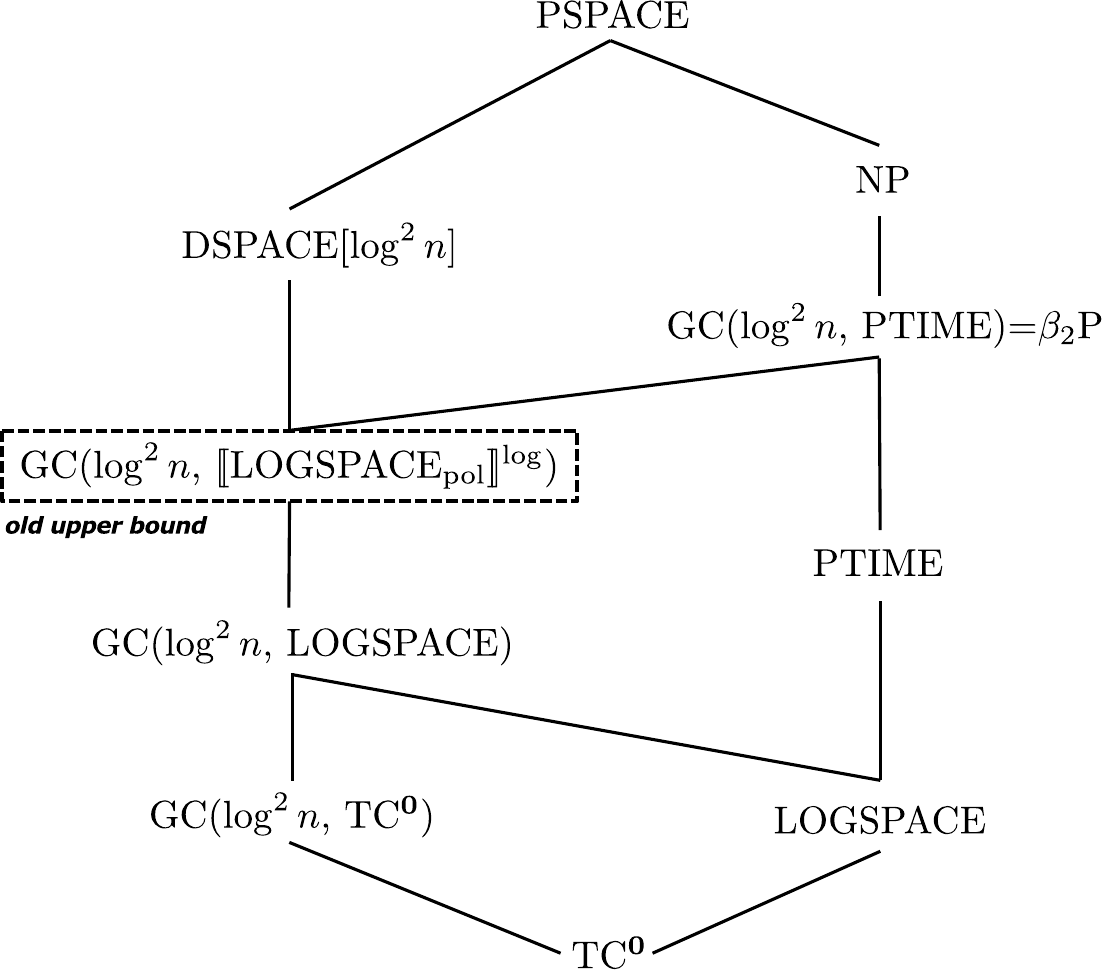}
    \caption{Old upper bound}
    \label{fig2a}
  \end{subfigure}%
  \hspace{-0.4cm}
  \begin{subfigure}[b]{0.5\textwidth}
    \includegraphics[width=\textwidth]{./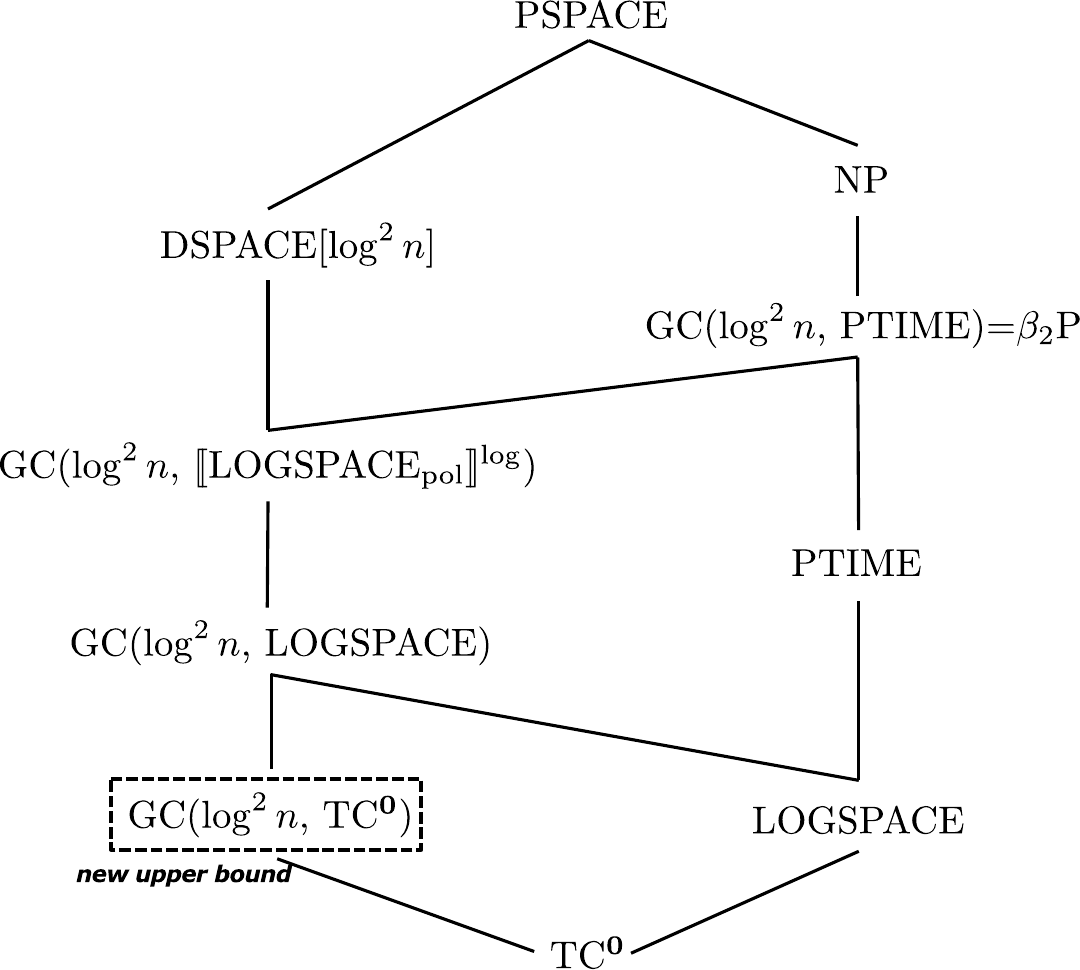}
    \caption{New upper bound}
    \label{fig2b}
  \end{subfigure}
  \caption{Complexity bound improvement obtained in this paper.}\label{fig2}
\end{figure}

A precise definition of $\SUPER$ is given in~\citep{gott13}.
We will not make use of this class in the technical part of the present paper.
Informally, $\SUPER$ contains those problems $\pi$ for which there exists a logspace-transducer $T$, a polynomial $p$, and a function $f$ in $O(\log n)$, such that each $\pi$-instance $I$ of size $n=|I|$ can be reduced by the $f(n)$-fold composition $T^{f(n)}$ of $T$ to a decision problem in LOGSPACE, where the size of all intermediate results $T^i(I)$, for $1\leq i\leq f(n)$, is polynomially bounded by $p(n)$.
For the relationship of \GC{$\log^2 n$}{$\SUPER$} to other classes, see \cref{fig2a}.
In~\citep{gott13}, it was shown that \GC{$\log^2 n$}{$\SUPER$} is not only a subclass of \GC{$\log^2 n$}{\PP}, but also of $\dspace{\log^2 n}$, i.e., of quadratic logspace.
Therefore, as also proven in a new and more direct way in the present paper, $\Dual$ is in $\dspace{\log^2 n}$ (\cref{corol:DUAL_in_DSPACE_log2_n}), and it is thus most unlikely for $\Dual$ to be PTIME-hard, which answered a previously long standing question.
Given that PTIME and $\dspace{\log^2 n}$ are believed to be incomparable, it is also rather unlikely that $\Dual$ is closely related to another interesting logical problem of open complexity, namely, to validity-checking for the modal $\mu$-calculus, or, equivalently, to the winner determination problem for parity games~\citep{henzinger2006universal,jurdzinski1998deciding}, as these latter problems are PTIME-hard, but in NP\,$\cap$\,coNP.

\medskip

{\bf Main complexity problem tackled.}
In~\citep{gott13} it was asked whether the upper bound of \GC{$\log^2 n$}{$\SUPER$} could be pushed further downwards, and the following conjecture was made:

\medskip

{\em Conjecture (\citep{gott13})\ \
$\coDual\in\,$\GC{$\log^2 n$}{LOGSPACE}.}

\medskip

It was unclear, however, how to prove this conjecture based on the algorithms and methods used in~\citep{gott13}.
There, a problem decomposition strategy by \citet*{Boros2009} was used, that decomposed an original $\Dual$ instance into a conjunction of smaller instances according to a specific conjunctive self-reduction.
Roughly, this strategy constructs a decomposition tree of logarithmic depth for $\Dual$, each of whose nodes represents a sub-instance of the original instance; more details on decomposition trees are given in \cref{sec:decomposition}.
To prove that the original instance is a ``no''-instance (and thus a ``yes''-instance of $\coDual$), it is sufficient to guess, in that tree, a path $\Pi$ from the root to a single node $v$ associated with a ``no''-sub-instance that can be recognized as such in logarithmic space.
Guessing the path to $v$ can be easily done using $O(\log^2 n)$ nondeterministic bits, but it is totally unclear how to actually {\em compute} the sub-instance associated with node $v$ in LOGSPACE.
In fact, it seems that the only way to compute the sub-instance at node $v$ is to compute---at least implicitly---all intermediate $\Dual$ instances arising on the path from the root to the decomposition node $v$.
This seems to require a logarithmic composition of LOGSPACE transducers, and thus a computation in the complexity class $\SUPER$.
It was therefore totally unclear how $\SUPER$ could be replaced by its subclass LOGSPACE, and new methods were necessary to achieve this goal.

\medskip

{\bf New results: Logic to the rescue.}
To attack the problem, we studied various alternative decomposition strategies for $\Dual$, among them the strategy of \citet{Gaur1999}, which also influenced the method of \citet{Boros2009}.
In the present paper, we build on Gaur's original strategy, as it appears to be the best starting point for our purposes.
However, Gaur's method still does not directly lead to a guess-and-check algorithm whose checking procedure is in LOGSPACE, and thus new techniques needed to be developed.

In a first step, by building creatively on Gaur's deterministic decomposition strategy~\citep{Gaur1999}, we develop a new nondeterministic guess-and-check algorithm \textsc{\NDAlg} for $\coDUAL$, that is specifically geared towards a computationally simple checking part.
In particular, the checking part of \textsc{\NDAlg} avoids certain obstructive steps that would require more memory than just plain LOGSPACE, such as the successive minimization of hypergraphs in sub-instances of the decomposition (as used by \citet{Boros2009}) and the performance of counting operations between subsequent decomposition steps so to determine sets of vertices to be included in a new transversal (as used by \citet{Gaur1999}).
Our new approach is thus influenced by Gaur's, but differs noticeably from it, as well as from the algorithm of \citeauthor{Boros2009}.

In a second step, we proceed with a careful logical analysis of the checking part of \textsc{\NDAlg}.
We transform all sub-tasks of \textsc{\NDAlg} into logical formulas.
However, it turns out that first order logic (FO) is not sufficient, as an essential step of the checking phase of \textsc{\NDAlg} is to check for specific hypergraph vertices $v$ whether $v$ is contained in at least half of the hyperedges of some hypergraph.
To account for this, we need to resort to FO(COUNT), which augments FO with counting quantifiers.
Note that we could have used in a similar way FOM, i.e., FO augmented by majority quantifiers, as FO(COUNT) and FOM have the same expressive power~\citep{Immerman1999}.
By putting all pieces together, we succeed to describe the entire checking phase by a single fixed FO(COUNT) formula that has to be evaluated over the input $\coDual$ instance.
Note that FO(COUNT) model-checking is complete for logtime-uniform $\TC0$.

In summary, by putting the guessing and checking parts together, we achieve as main theorem a complexity result that is actually better than the one conjectured:

\medskip

{\em Theorem.} \  $\coDual\in\GC{$\log^2 n$}{\TC0}$.

\medskip

By the well-known inclusion $\TC0\subseteq \mbox{\rm LOGSPACE}$, we immediately obtain a corollary that proves the above mentioned conjecture:

\medskip

{\em Corollary A.} \  \, $\coDual\in\GC{$\log^2 n$}{\rm LOGSPACE}$.

\medskip

Moreover, by the inclusion $\GC{$\log^2 n$}{\LogSpace}\subseteq\DSpace[\log^2 n]$, and the fact that \DSpace[$\log^2 n$] is closed under complement, we can easily obtain as a simple corollary that:

\medskip

{\em Corollary B.} \  \, $\Dual\in\DSpace[\log^2 n]$.

\medskip

To conclude, by an easy adaptation of our algorithm \textsc{\NDAlg} we devise a simple deterministic algorithm \textsc{\ComputeNT} to \emph{compute} a new (not necessarily minimal) transversal in quadratic logspace.

\medskip

{\bf Significance of the new results and directions for future research.}
The progress achieved in this paper is summarized in \cref{fig2}, whose left part (\cref{fig2a}) shows the previous state of knowledge about the complexity, while the right part (\cref{fig2b}) depicts the current state of knowledge we have achieved.
We have significantly narrowed down the ``search space'' for the precise complexity of $\Dual$ (or $\coDual$).
We believe that our new results are of value to anybody studying the complexity of this interesting problem.
In particular, the connection to logic opens new avenues for such studies.
First, our results show where to dig for tighter bounds.
It may be rewarding to study subclasses of \GC{$\log^2 n$}{\TC0}, and in particular, {\em logically defined subclasses} that replace $\TC0$ by low-level prefix classes of \FOC.
Classes of this type can be found in~\citep{Cai1997,cook2010logical,Grohe2006,Toran1988,Toran1989}.
More details on this will be given in the conclusive remarks in \cref{sec:concl_and_future}.
Second, the membership of $\coDual$ in \GC{$\log^2 n$}{\TC0} provides valuable information for those trying to prove hardness results for $\Dual$, i.e., to reduce some presumably intractable problem $X$ to $\Dual$.
Our results restrict the search space to be explored to hunt for such a problem $X$.
Moreover, given that LOGSPACE is not known to be in \GC{$\log^2 n$}{\TC0}, and given that it is not generally believed that \GC{$\log^2 n$}{\TC0} contains LOGSPACE-hard problems (under logtime reductions), our results suggest that LOGSPACE-hard problems are rather unlikely to reduce to $\Dual$, and that it may thus be advisable to look for a problem $X$ that is not (known to be) LOGSPACE-hard, in order to find a lower bound for $\Dual$.
Our new results are of theoretical nature.
This does not rule out the possibility that they may be used for improving practical algorithms, but this has yet to be investigated.
Finally, we believe that the methods presented in this paper are a compelling example of how logic and descriptive complexity theory can be used together with suitable problem decomposition methods to achieve new complexity results for a concrete decision problem.

\medskip

{\bf Organization of the paper.}
After some preliminaries in \cref{sec:prelim}, we discuss problem decomposition strategies and introduce the concept of a decomposition tree for $\coDual$ in \cref{sec:decomposition}.
Based on this, in \cref{sec:new_upper_bound} we present the nondeterministic algorithm \textsc{\NDAlg} for $\coDual$, prove it correct, and then analyze this algorithm to derive our main complexity results.
To conclude, by exploiting the method used in the nondeterministic algorithm, we present a deterministic algorithm to actually compute a new (not necessarily minimal) transversal in quadratic logspace.

\section{Preliminaries}\label{sec:prelim}
In what follows, when we identify a hypergraph $\G$ with its edge\nbdash-set $E$ by writing $\G=E$, we mean $\G=\tuple{\bigcup_{G\in E}G,E}$.
By writing $G\in\G$ we mean $G\in E$.
Generally, we denote by $V$ the set of vertices of a hypergraph, and, if not stated otherwise, pairs of hypergraphs $\G$ and $\H$, and the hypergraphs of a \DUAL instance $\tuple{\G,\H}$, are assumed to have the same set of vertices.
The number of edges of $\G$ is denoted by $|\G|$, and, given an instance $\tuple{\G,\H}$ of \DUAL, $m$ is the total number $|\G|+|\H|$ of edges of $\G$ and $\H$.
By $\|\G\|$ we denote the size of the hypergraph $\G$, that is the space (in terms of the number of bits) required to represent $\G$.
It is reasonable to assume that a hypergraph $\G$ is represented through the adjacency lists of its edges (i.e., each edge $G$ of $\G$ is represented through the list of the vertices belonging to $G$).
It is easy to see that $|V|\leq \|\G\|$, and $|\G|\leq\|\G\|$.
We denote by $\InputSize=\|\G\|+\|\H\|$ the size of the input of the \DUAL problem.

We say that $\G$ is an \emph{empty hypergraph} if $\G$ does not have any edge, and $\G$ is an \emph{empty\nbdash-edge hypergraph} if $\G$ contains only an empty edge (i.e., an edge without vertices in it).
Note that empty and empty\nbdash-edge hypergraphs can actually contain vertices, and thus there are various empty and empty\nbdash-edge hypergraphs.
For notational convenience, by $\G = \emptyset$ we indicate that $\G$ is an empty hypergraph, while by $\G = \{\emptyset\}$ we indicate that $\G$ is an empty\nbdash-edge hypergraph.
However, with these notations we do not mean that these hypergraphs do not have vertices.
The notation $\emptyset \in \G$ clearly means that $\G$ contains an empty edge.
Observe that, if $\G = \emptyset$, then any set of vertices is a transversal of $\G$.
For this reason, there is only one minimal transversal of $\G$ and it is the empty set.
On the other hand, if $\emptyset \in \G$, then there is no transversal of $\G$ at all.
Hence, by definition, the dual of an empty hypergraph is an empty\nbdash-edge hypergraph, and vice\nbdash-versa~\cite{EG95}.
Two hypergraphs $\G$ and $\H$ over the same vertex set are \emph{trivially dual}, if one of them is an empty hypergraph and the other is an empty\nbdash-edge hypergraph.

Given a hypergraph $\G$ and a set of vertices $T$, a vertex $v\in T$ is \emph{critical} in $T$ (\Wrt $\G$) if there is an edge $G\in\G$ such that $G\cap T=\{v\}$.
We say that $G$ witnesses the criticality of $v$ in $T$.
Observe that, if $v$ is a critical vertex in $T$, $v$ may have various witnesses of its criticality, that is, more than one edge of $\G$ can intersect $T$ only on $v$.

A set of vertices $S$ is an \emph{independent set} of a hypergraph $\G$ if, for all $G\in\G$, $G\not\subseteq S$.
If $\G$ and $\H$ are two hypergraphs, a set of vertices $T$ is a \emph{new transversal} of $\G$ \Wrt $\H$\footnote{We will often omit ``\Wrt $\H$'' when the hypergraph $\H$ we are referring to is understood.} if $T$ is a transversal of $\G$, and $T$ is also an independent set of $\H$.
Intuitively, a new transversal $T$ of $\G$ is a transversal of $\G$ missing in $\H$.
More formally, $T$ is a new transversal of $\G$ \Wrt $\H$ if, for any transversal $T'\subseteq T$ of $\G$, $T'\notin\H$.
Note that a new transversal is not necessarily a minimal transversal, however it contains a new minimal transversal.

In the following, we state the main properties about transversals to be used in this paper.
Some of the following properties are already known (see, e.g., \cite{Berge1989,fred-khac-96,Gaur2004,Boros2009,Elbassioni2008,EG95,Garcia-Molina1985}), and some of them were stated over Boolean formulas.
We state over hypergraphs all the properties relevant for us, and for completeness we prove them in \cref{sec:app_prove_property_transversal}.

\begin{lemma}\label{lemma:transMin-ogniVertexCritico}
Let $\G$ be a hypergraph, and let $T\subseteq V$ be a transversal of $\G$.
Then, $T$ is a minimal transversal of $\G$ if and only if every vertex $v\in T$ is critical (and hence there is an edge $G_v\in\G$ witnessing so).
\end{lemma}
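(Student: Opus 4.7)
The plan is to prove the two directions of the biconditional separately, each by a short contradiction argument; no deep machinery is required, since the statement is essentially the definitional characterization of minimality for transversals.

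For the forward direction (if $T$ is a minimal transversal then every $v\in T$ is critical), I would argue contrapositively: assume some $v\in T$ fails to be critical. By the definition of criticality, no edge $H\in\H$ satisfies $H\cap T=\{v\}$. Since $T$ is a transversal, every edge $H\in\H$ satisfies $H\cap T\neq\emptyset$, so for every edge $H\in\H$ that happens to contain $v$, the intersection $H\cap T$ must contain some other vertex $u\neq v$. I would then show that $T\setminus\{v\}$ is still a transversal of $\H$: edges of $\H$ not containing $v$ are intersected by $T\setminus\{v\}$ exactly as they were by $T$, while edges containing $v$ still meet $T\setminus\{v\}$ thanks to the other vertex $u$ just identified. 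This contradicts the minimality of $T$.

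For the backward direction (if every $v\in T$ is critical then $T$ is minimal), I would argue by contradiction as well: suppose $T$ is not minimal, so there exists a proper subset $T'\subsetneq T$ which is already a transversal. Pick any vertex $v\in T\setminus T'$. By hypothesis, $v$ is critical in $T$, so there is a witness edge $H_v\in\H$ with $H_v\cap T=\{v\}$. Then $H_v\cap T'\subseteq H_v\cap T=\{v\}$, but $v\notin T'$, so $H_v\cap T'=\emptyset$, contradicting the assumption that $T'$ is a transversal of $\H$.

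The argument is completely elementary, so I do not anticipate any real obstacle; the only point worth being careful about is in the forward direction, where the non-existence of a witness edge must be translated correctly into the statement that every edge containing $v$ meets $T$ in at least one vertex other than $v$, which is exactly what is needed to remove $v$ from $T$ while preserving the transversal property.
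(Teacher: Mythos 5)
Your proof is correct and follows essentially the same argument as the paper's: the forward direction removes a non-critical vertex and checks that every edge is still met (via a second vertex in the intersection for edges containing $v$), and the backward direction uses a witness edge $H_v$ with $H_v\cap T=\{v\}$ to show that any proper subset omitting $v$ misses $H_v$. The only difference is cosmetic (you phrase the backward direction as a contradiction where the paper argues directly over an arbitrary proper subset).
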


For a set of vertices $T\subseteq V$, let $\compl{T}$ denote $V\setminus T$, i.e., the \emph{complement} of $T$ in $V$.

\begin{lemma}\label{lemma:newTransversals_complementari}
Let $\G$ and $\H$ be two hypergraphs.
A set of vertices $T\subseteq V$ is a new transversal of $\G$ \Wrt $\H$ if and only if $\compl{T}$ is a new transversal of $\H$ \Wrt $\G$.
\end{lemma}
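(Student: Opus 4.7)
The plan is to unfold the definition of a new transversal into its two conjuncts and show that complementing $T$ swaps the two conjuncts between $\G$ and $\H$, giving exactly the condition that $\compl{T}$ is a new transversal of $\H$ w.r.t.\ $\G$. Concretely, $T$ being a new transversal of $\G$ w.r.t.\ $\H$ means (i) $T\cap G\neq\emptyset$ for every $G\in\G$, and (ii) $H\not\subseteq T$ for every $H\in\H$; while $\compl{T}$ being a new transversal of $\H$ w.r.t.\ $\G$ means (i$'$) $\compl{T}\cap H\neq\emptyset$ for every $H\in\H$, and (ii$'$) $G\not\subseteq\compl{T}$ for every $G\in\G$.

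The key observation, applied pointwise over the set $V$, is the following pair of elementary equivalences, which hold for any subset $X\subseteq V$ of the common vertex set:
\begin{align*}
X\cap T\neq\emptyset &\iff X\not\subseteq\compl{T},\\
X\not\subseteq T &\iff X\cap\compl{T}\neq\emptyset.
\end{align*}
The first is immediate since $X\cap T=\emptyset$ exactly when every vertex of $X$ lies in $\compl{T}$; the second is its contrapositive, since $X\subseteq T$ exactly when $X$ has no vertex outside $T$.

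Applying the first equivalence to each $G\in\G$ converts condition (i) (transversal of $\G$) into condition (ii$'$) (independent set of $\G$). Applying the second equivalence to each $H\in\H$ converts condition (ii) (independent set of $\H$) into condition (i$'$) (transversal of $\H$). Since both equivalences are bidirectional and the conversions touch disjoint halves of the definition, the overall statement is an ``iff'' chain, and the lemma follows directly. There is no real obstacle here: the entire argument is a pair of De Morgan-style bijections between the two defining clauses, with the roles of $\G$ and $\H$ interchanged upon complementation of $T$.
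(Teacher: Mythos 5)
Your proof is correct and follows essentially the same route as the paper's: both arguments rest on the observation that, for an edge $X\subseteq V$, meeting $T$ is the same as not being contained in $\compl{T}$, so complementation swaps the ``transversal'' and ``independent set'' clauses between $\G$ and $\H$. The only cosmetic difference is that you phrase both conversions as biconditionals at once, whereas the paper proves one direction and obtains the converse by swapping the roles of $\G$ and $\H$ and using $\compl{\compl{T}}=T$.
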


We say that hypergraphs $\G$ and $\H$ satisfy the \emph{intersection property} if all edges of $\G$ are transversals of $\H$, and thus, vice\nbdash-versa all edges of $\H$ are transversal of $\G$.
Note here that, for the intersection property to hold, the edges of one hypergraph are \emph{not} required to be \emph{minimal} transversal of the other.

\begin{lemma}\label{lemma:no_new_transversal_iff_dual}
Let $\G$ and $\H$ be two hypergraphs.
Then, $\G$ and $\H$ are dual if and only if $\G$ and $\H$ are simple, satisfy the intersection property, and there is no new transversal of $\G$ \Wrt $\H$.
\end{lemma}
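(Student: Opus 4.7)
The plan is to prove the two implications separately, treating the forward direction as a quick warm\nbdash-up and devoting the bulk of the argument to the converse.

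For the forward direction I assume $\G$ and $\H$ are dual. Both hypergraphs are then simple and each edge of $\H$ is a minimal transversal of $\G$; by the symmetry property recalled in the introduction the same holds with the roles of $\G$ and $\H$ swapped, so the intersection property is immediate. To exclude a new transversal $T$ of $\G$ \Wrt $\H$, I would note that any transversal of $\G$ contains some minimal one $T'\subseteq T$, and since $T'\in tr(\G)=\H$ the set $T$ cannot be an independent set of $\H$, violating the definition of new transversal.

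For the converse I assume $\G$ and $\H$ are simple, satisfy the intersection property, and admit no new transversal of $\G$ \Wrt $\H$; the goal $\H=tr(\G)$ is split into two inclusions. To show $\H\subseteq tr(\G)$, pick $H\in\H$: the intersection property already gives that $H$ is a transversal of $\G$, and minimality follows by contradiction, because if some $H'\subsetneq H$ were still a transversal then no edge $H''\in\H$ could lie inside $H'$ (otherwise $H''\subsetneq H$ would violate simplicity of $\H$), so $H'$ would be an independent set of $\H$ and hence a new transversal of $\G$ \Wrt $\H$, against the hypothesis. For $tr(\G)\subseteq\H$, pick $T\in tr(\G)$ and suppose $T\notin\H$; any $H\in\H$ is a transversal of $\G$ by the intersection property, so $H\subseteq T$ combined with the minimality of $T$ would force $H=T$, placing $T$ in $\H$ and contradicting our assumption, and so no edge of $\H$ lies inside $T$, making $T$ itself a new transversal, a final contradiction.

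The whole argument is essentially a bookkeeping exercise in the definitions of transversal, minimal transversal, new transversal, and simplicity; no individual step is deep. The only subtlety worth flagging is the careful interaction between strict inclusion and simplicity: in both sub\nbdash-proofs of the converse it is simplicity of $\H$ that prevents a would\nbdash-be counter\nbdash-example (the subset $H'$ in the first inclusion, the ancestor $H\subseteq T$ in the second) from secretly being an edge of $\H$, which is precisely what turns each attempted violation into a genuine new transversal and thereby closes the contradiction.
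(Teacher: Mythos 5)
Your proof is correct and follows essentially the same route as the paper's: the forward direction rules out a new transversal by passing to a minimal transversal contained in it, and the converse combines the intersection property, simplicity of $\H$, and the absence of new transversals exactly as in the paper's Appendix~C. The only cosmetic difference is that you obtain $\H\subseteq\Tr(\G)$ by applying the no-new-transversal hypothesis directly to a proper subset of an edge of $\H$, whereas the paper first proves $\Tr(\G)\subseteq\H$ and then derives the reverse inclusion from simplicity of $\H$; both arguments are sound.
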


\section{Decomposing the \texorpdfstring{\DualProbHyp}{DUAL} problem}\label{sec:decomposition}

\subsection{Decomposition principles}
A way to recognize ``no''\nbdash-instances $\tuple{\G,\H}$ of \DualProbHyp is to find a new transversal of $\G$ \Wrt $\H$, i.e., a transversal of $\G$ that is also an independent set of $\H$.
In fact, many algorithms in the literature follow this approach (see, e.g., \cite{fred-khac-96,EGM03,kavv-stav-03,Elbassioni2008,Boros2009,Gaur1999}).
These algorithms try to build such a new transversal by successively including vertices in and excluding vertices from a candidate for a new transversal.
To give an example, the classical algorithm ``A'' of \citet{fred-khac-96} tries to include a vertex $v$ in a candidate for a new transversal, and if this does not result in a new transversal, then $v$ is excluded.
Moreover if the exclusion of $v$ does not lead to a new transversal, then no new transversal exists which is coherent with the choices having made before considering vertex $v$.
(If $v$ is the first vertex considered, then there is no new transversal at all.)

We speak about \emph{included} and \emph{excluded} vertices because most of the algorithms proposed in the literature implicitly or explicitly keep track of two sets:
the set of the vertices considered included in and the set of the vertices considered excluded from the attempted new transversal.
Similarly, those algorithms working on Boolean formulas keep track of the truth assignment:
the variables to which \valtrue has been assigned, those to which \valfalse has been assigned, and (obviously) those to which no Boolean value has been assigned yet.

If $\G$ and $\H$ are two hypergraphs, an \emph{assignment} $\sigma=\assign{\In,\Ex}$ is a pair of subsets of $V$ such that $\In\cap \Ex=\emptyset$.
Intuitively, set $\In$ contains the vertices considered included in (or, inside) an attempted new transversal $T\supseteq\In$ of $\G$ \Wrt $\H$, while set $\Ex$ contains the vertices considered excluded from (or, outside) $T$ (i.e., $T\cap\Ex=\emptyset$).
We say that a vertex $v\in V$ is free in (or of) an assignment $\sigma=\assign{\In,\Ex}$, if $v\notin \In$ and $v\notin \Ex$.
Note that the \emph{empty assignment} $\emptyassign=\tuple{\emptyset,\emptyset}$ is a valid assignment.
Given assignments $\sigma_1=\assign{\In_1,\Ex_1}$ and $\sigma_2=\assign{\In_2,\Ex_2}$, if $\In_1\cap \Ex_2=\emptyset$ and $\Ex_1\cap \In_2=\emptyset$, we denote by $\sigma_1\addassign\sigma_2=\assign{\In_1\cup \In_2,\Ex_1\cup \Ex_2}$ the extension of $\sigma_1$ with $\sigma_2$.
An assignment $\sigma_2=\assign{\In_2,\Ex_2}$ is said to be an \emph{extension} of assignment $\sigma_1=\assign{\In_1,\Ex_1}$, denoted by $\sigma_1\dblsubeq\sigma_2$, whenever $\In_1\subseteq\In_2$ and $\Ex_1\subseteq\Ex_2$.
If $\sigma_1\dblsubeq\sigma_2$ and $\In_1\subset\In_2$ or $\Ex_1\subset\Ex_2$ we say that $\sigma_2$ is a \emph{proper extension} of $\sigma_1$, denoted by $\sigma_1\dblsub\sigma_2$.

Given a set of vertices $S\subseteq V$, the associated assignment is $\sigma_S=\assign{S,\compl{S}}$.
We say that an assignment $\sigma=\assign{\In,\Ex}$ is \emph{coherent} with a set of vertices $S$, and vice\nbdash-versa, whenever $\sigma\dblsubeq\sigma_S$.
This is tantamount to $\In\subseteq S$ and $\Ex\subseteq \compl{S}$ (or, equivalently, $\Ex\cap S=\emptyset)$.
With a slight abuse of notation we denote that an assignment $\sigma$ is coherent with a set $S$ by $\sigma\dblsubeq S$.
Observe that, by \cref{lemma:newTransversals_complementari}, if $\sigma=\assign{\In,\Ex}$ is coherent with a new transversal $T$ of $\G$ \Wrt $\H$, then the \emph{reversed assignment} $\reverse{\sigma}=\assign{\Ex,\In}$ is coherent with the new transversal $\compl{T}$ of $\H$ \Wrt $\G$.
Intuitively, this means that for an assignment $\sigma=\assign{\In,\Ex}$, set $\In$ is (a subset of) an attempted new transversal of $\G$ \Wrt $\H$, and, symmetrically, set $\Ex$ is (a subset of) an attempted new transversal of $\H$ \Wrt $\G$.

Most algorithms proposed in the literature essentially try different assignments by successively extending in different ways the current assignment.
Each extension performed induces a ``reduced'' instance of \DualProbHyp on which the algorithm is recursively invoked.
Intuitively, the size of the instance decreases for two reasons.
Including vertices in the new attempted transversal of $\G$ increases the number of edges of $\G$ met by the new transversal under construction, and hence there is no need to consider these edges any longer.
Symmetrically, excluding vertices from the new attempted transversal of $\G$ increases the number of edges of $\H$ certainly not contained in the new transversal under construction, and hence, again, there is no need to consider these edges any longer.

Given a hypergraph $\G$ and a set $S$ of vertices, as in~\cite{Elbassioni2008,Boros2009}, we define hypergraphs $\G_S=\tuple{S,\{G\in \G\mid G\subseteq S\}}$, and $\G^S=\tuple{S,\min(\{G\cap S\mid G\in\G\})}$, where $\min(\H)$, for any hypergraph $\H$, denotes the set of inclusion minimal edges of $\H$.
Observe that $\G^S$ is always a simple hypergraph, and that if $\G$ is simple, then so is $\G_S$.
If $\emptyset \in \G$, then $\min(\G) = \{\emptyset\}$.

Let $\I=\tuple{\G,\H}$ be an instance of \DualProbHyp.
While constructing a new transversal of $\G$, when the assignment $\sigma=\assign{\In,\Ex}$ is considered let us denote by $\I_\sigma=\tuple{\G(\sigma),\H(\sigma)}=\tuple{(\G_{V\setminus \In})^{V\setminus (\In\cup\Ex)},(\H_{V\setminus \Ex})^{V\setminus (\In\cup\Ex)}}$ the reduced instance derived from $\I$ and induced by $\sigma$.
Observe that both $\G(\sigma)$ and $\H(\sigma)$ are simple by definition, because they undergo a minimization operation, and that, if $\G$ and $\H$ have the same vertex set, then $\G(\sigma)$ and $\H(\sigma)$ have the same vertex set.
Intuitively, since we are interested in finding new transversals of $\G$ \Wrt $\H$, we can avoid to analyze and further extend an assignment $\sigma=\assign{\In,\Ex}$ for which $\In$ is not an independent set of $\H$ or $\Ex$ is not an independent set of $\G$.
In fact, on the one hand, if $\Ex$ is not an independent set of $\G$, then no set of vertices $T$ coherent with $\sigma$ can be a transversal of $\G$ (because, by $\sigma\dblsubeq T$, $T$ and $\Ex$ are disjoint).
On the other hand, if $\In$ is not an independent set of $\H$, then no set of vertices $T$ coherent with $\sigma$ can be an independent set of $\H$, and hence a new transversal of $\G$ (because, by $\sigma\dblsubeq T$, $\In\subseteq T$).
To this purpose, for an assignment $\sigma=\assign{\In,\Ex}$, if there is an edge $H\in\H$ with $H\subseteq\In$ or an edge $G\in\G$ with $G\subseteq\Ex$, we say that $\In$ and $\Ex$ are \emph{covering}, respectively.
We also say that $\sigma=\assign{\In,\Ex}$ is a \emph{covering} assignment if $\In$ or $\Ex$ are covering.
For future reference, let us highlight the just mentioned property in the following lemma.

\begin{lemma}\label{lemma:assignment_coherent_new_Tr_then_not_covering}
Let $\G$ and $\H$ be two hypergraphs, and let $\sigma=\assign{\In,\Ex}$ be an assignment.
\begin{enumerate}[label=$(\mathit{\alph*})$]
\item If $\Ex$ is covering, then there is no set of vertices coherent with $\sigma$ that is a transversal of $\G$.
\item If $\In$ is covering, then there is no set of vertices coherent with $\sigma$ that is an independent set of $\H$.
\end{enumerate}
Hence, if $\sigma$ is covering, then there is no set of vertices coherent with $\sigma$ that is a new transversal of $\G$ \Wrt $\H$.
\end{lemma}

Decomposing an original instance of \DUAL into multiple sub\nbdash-instances has the advantage of generating smaller sub\nbdash-instances for which it is computationally easier to check their duality.
In fact, many algorithms proposed in the literature decompose the original instance into smaller sub\nbdash-instances for which the duality test is feasible in \PP or even in subclasses of it.
The following property of \DUAL sub\nbdash-instances is of key importance for the correctness of all the approaches tackling \DUAL through decomposition techniques.

\begin{lemma}\label{lemma:dual_iff_all_sub-instances_dual}
Two hypergraphs $\G$ and $\H$ are dual if and only if $\G$ and $\H$ are simple, satisfy the intersection property, and, for all assignments $\sigma$, $\G(\sigma)$ and $\H(\sigma)$ are dual (or, equivalently, there is no new transversal of $\G(\sigma)$ \Wrt $\H(\sigma)$).
\end{lemma}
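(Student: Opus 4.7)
The plan is to reduce the claim to Lemma~\ref{lemma:no_new_transversal_iff_dual}: once $\G$ and $\H$ are simple and satisfy the intersection property, duality has already been recharacterised there in terms of the non\nbdash-existence of new transversals, so it suffices to show that there is no new transversal of $\G$ \Wrt $\H$ if and only if, for every assignment $\sigma$, there is no new transversal of $\G(\sigma)$ \Wrt $\H(\sigma)$. The backward direction is essentially immediate: taking $\sigma = \emptyassign$, the simplicity of $\G$ and $\H$ makes the minimisation in the definition of $\G(\sigma)$ and $\H(\sigma)$ inert, so that $\G(\emptyassign) = \G$ and $\H(\emptyassign) = \H$, and the hypothesis specialised to the empty assignment already gives the conclusion.

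For the forward direction I would argue by contrapositive. Suppose that for some $\sigma = \assign{\In,\Ex}$ there is a new transversal $T' \subseteq V\setminus(\In\cup \Ex)$ of $\G(\sigma)$ \Wrt $\H(\sigma)$. I would lift $T'$ to $T = T'\cup \In$ and verify that $T$ is a new transversal of $\G$ \Wrt $\H$. To see that $T$ is a transversal of $\G$, pick $G\in \G$: either $G$ already meets $\In$, and we are done, or $G\subseteq V\setminus \In$, in which case $G\in \G_{V\setminus \In}$ and the trace $G\setminus \Ex$ must be non\nbdash-empty (otherwise $\Ex$ would cover $G$, forcing $\G(\sigma)$ to be the empty\nbdash-edge hypergraph and contradicting the existence of the transversal $T'$). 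This non\nbdash-empty trace dominates some edge $G''\in \G(\sigma)$ by the definition of $\min$, and $T'\cap G''\neq \emptyset$ then yields $T\cap G\neq \emptyset$.

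For the independence of $T$ with respect to $\H$, I would suppose for contradiction that some $H\in \H$ lies inside $T$. Since $T\cap \Ex = \emptyset$, we have $H\subseteq V\setminus \Ex$, hence $H\in \H_{V\setminus \Ex}$. The existence of $T'$ as an independent set of $\H(\sigma)$ also rules out $H\subseteq \In$, because that would force $\H(\sigma) = \{\emptyset\}$ and preclude any independent set; thus $H\setminus \In$ is non\nbdash-empty and dominates some edge $H''\in \H(\sigma)$, and $H\subseteq T$ combined with $T'\cap\Ex = \emptyset$ forces $H''\subseteq T'$, contradicting the independence of $T'$ in $\H(\sigma)$.

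The main obstacle is the bookkeeping around the $\min$ operator in the definitions of $\G(\sigma)$ and $\H(\sigma)$, and in particular handling the degenerate cases where $\sigma$ is covering, which collapse $\G(\sigma)$ or $\H(\sigma)$ to the empty\nbdash-edge hypergraph. These degenerate cases are ruled out precisely by the hypothesised existence of $T'$, after which the two containment arguments become routine applications of the fact that every non\nbdash-empty trace $X \cap (V\setminus(\In\cup \Ex))$ of an edge $X$ dominates some minimal edge of the corresponding reduced hypergraph.
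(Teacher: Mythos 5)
Your proof is correct. The substantive direction (dual $\Rightarrow$ all sub\nbdash-instances have no new transversal, proved by contraposition via lifting $T'$ to $T=T'\cup\In$) is exactly the paper's route: your inline lifting argument, including the handling of the degenerate cases where a covering assignment collapses $\G(\sigma)$ or $\H(\sigma)$ to $\{\emptyset\}$, reproduces the content of the paper's Lemma~\ref{lemma:composition_property_transversals_expansion} (together with the case analysis packaged in Lemmas~\ref{lemma:summary_properties} and~\ref{lemma:assign_covering_genera_dual}). Where you genuinely diverge is the other direction: the paper proves it by contraposition, taking a new transversal $T$ of $\G$ \Wrt $\H$ and \emph{projecting} it down to $T\setminus\In$ for an arbitrary assignment coherent with $T$ (its Lemma~\ref{lemma:composition_property_transversals_projection}), whereas you simply specialize the universally quantified hypothesis to $\sigma=\emptyassign$ and observe that simplicity makes $\G(\emptyassign)=\G$ and $\H(\emptyassign)=\H$. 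Your version is shorter and avoids the projection lemma entirely; the paper's version proves something stronger (every assignment coherent with $T$ already yields a non\nbdash-dual sub\nbdash-instance), which is not needed for this lemma but is the fact that ultimately powers the decomposition tree in Section~\ref{sec:decomposition}. One small point worth making explicit if you write this up: to pass from ``no new transversal of $\G(\sigma)$ \Wrt $\H(\sigma)$'' to ``$\G(\sigma)$ and $\H(\sigma)$ are dual'' (the parenthetical equivalence in the statement) one also needs that the sub\nbdash-instances are simple (true by construction) and satisfy the intersection property, which is the paper's Lemma~\ref{lemma:proiezione_hitting_property}.
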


\Cref{lemma:dual_iff_all_sub-instances_dual} can be easily proven by exploiting \cref{lemma:no_new_transversal_iff_dual} and the equivalence between the hypergraph transversal problem and the duality problem of non-redundant monotone Boolean CNF/DNF formulas, where (partial) assignments here considered correspond to partial Boolean truth assignments.
However, in \cref{sec:app_prove_decomposition}, we provide a detailed proof of the previous lemma without resorting to the equivalence with Boolean formulas and we also analyze other interesting properties of decompositions.

Although checking the duality of sub\nbdash-instances of an initial instance of \DUAL can be computationally easier, it is evident that, in order to find a new transversal of $\G$, naively trying all the possible non\nbdash-covering assignments would require exponential time.
Nevertheless, as already mentioned, there are deterministic algorithms solving \DualProbHyp in quasipolynomial time.
To meet such a time bound, those algorithms do not try all the possible combinations of assignments, but they consider specific assignment extensions.

Common approaches---here referred to as extension\nbdash-types---to extend a currently considered assignment $\sigma=\assign{\In,\Ex}$ to an assignment $\sigma'$ are:
\begin{enumerate}[label=(\roman*)]
  \item\label{ext:inc}
  include a free vertex $v$ into the new attempted transversal of $\G$ \Wrt $\H$, i.e., $\sigma'={}\sigma\addassign\assign{\{v\},\emptyset}$;

  \item\label{ext:inc_crt}
  include a free vertex $v$ as a critical vertex into the new attempted transversal of $\G$ \Wrt $\H$ with an edge $G\in\G(\sigma)$, such that $v\in G$, witnessing $v$'s criticality, i.e., $\sigma'={}\sigma\addassign\assign{\{v\},G\setminus\{v\}}$;

  \item\label{ext:exc}
  exclude a free vertex $v$ from the new attempted transversal of $\G$ \Wrt $\H$ (or, equivalently, include $v$ into the new attempted transversal of $\H$ \Wrt $\G$), i.e., $\sigma'={}\sigma\addassign\assign{\emptyset,\{v\}}$;

  \item\label{ext:exc_crt}
  include a free vertex $v$ as a critical vertex into the new attempted transversal of $\H$ \Wrt $\G$ with an edge $H\in\H(\sigma)$, such that $v\in H$, witnessing $v$'s criticality, i.e., $\sigma'={}\sigma\addassign\assign{H\setminus\{v\},\{v\}}$.
\end{enumerate}

Observe that the extension\nbdash-types listed above always generate consistent assignments, i.e., assignment for which the sets of included and excluded vertices do not overlap.
This is easy to see for extension\nbdash-types~\ref*{ext:inc} and~\ref*{ext:exc}.
For extension\nbdash-type~\ref*{ext:inc_crt}, observe the following.
Let $\sigma=\assign{\In,\Ex}$ be an assignment.
Since $G\in\G(\sigma)$, $G\subseteq V\setminus(\In\cup\Ex)$ by definition of $\G(\sigma)$.
Therefore, $G\cap\In = \emptyset$ and $G\cap\Ex = \emptyset$.
Moreover, the sets of vertices $\{v\}$ and $(G\setminus\{v\})$ clearly constitute a partition of vertices in $G$, thus $\assign{\In\cup\{v\},\Ex\cup(G\setminus\{v\})}$ is a consistent assignment.
Similarly, it can be shown that extesion\nbdash-type~\ref*{ext:exc_crt} generates a consistent assignment.

The size reduction attained in the considered sub\nbdash-instances tightly depends on the assignment extension performed, and in particular on the frequencies with which vertices belong to the edges of $\G(\sigma)$ and $\H(\sigma)$.
We will analyze this in detail below.

\subsection{Assignment trees and the definition of \texorpdfstring{$\Tree(\G,\H)$}{T(G,H)}}\label{sec:assignment_trees}
Most algorithms proposed in the literature adopt their own specific assignment extensions in specific sequences.
The assignments successively considered during the recursive execution of the algorithms can be analyzed through a tree\nbdash-like structure.
Intuitively, each node of the tree can be associated with a tried assignment, and nodes of the tree are connected when their assignments are one the direct extension of the other.
We can call these trees \emph{assignment trees}.

Inspired by the algorithm proposed by \citet{Gaur1999},\footnote{Readers can find in \cref{sec:alg-gaur-det} a \emph{deterministic} algorithm, based on that of \citet{Gaur1999} (see also~\citep{Gaur2004}), deciding hypergraph duality. Note that the original algorithm proposed by \citeauthor{Gaur1999} aims instead at deciding \emph{self}\nbdash-duality of DNF Boolean formulas.
It is from the algorithm reported in the appendix that we have taken ideas to devise our nondeterministic algorithm.
Note that the exposition in \cref{sec:alg-gaur-det} builds up on concepts, definitions, and lemmas discussed in \cref{sec:assignment_trees,sec:logarithmic_refuters}.} we will now describe the construction of a general assignment tree $\Tree(\G,\H)$ that simultaneously represents all possible decompositions of an input \DualProbHyp instance $\tuple{\G,\H}$ according to (assignment extensions directly derived from) extension\nbdash-types~\ref*{ext:inc_crt} and~\ref*{ext:exc}.
This tree is of super\nbdash-polynomial size.
However, it will be shown later that whenever $\G$ and $\H$ are not dual, then there must be in this tree a node at depth $O(\log \InputSize)$ which can be recognized with low computational effort as a witness of the non\nbdash-duality of hypergraphs $\G$ and $\H$.

Intuitively, each node $p$ of the tree $\Tree(\G,\H)$ is associated with an assignment $\sigma_p$.
In particular, the root is labeled with the empty assignment.
Node $p$ of the tree has a child $q$ for each assignment $\sigma_q$ that can be obtained from $\sigma_p$ through an elementary extension of type~\ref*{ext:inc_crt} or~\ref*{ext:exc}.
Edge $(p,q)$ is then labeled by precisely this extension.

For our purposes, drawing upon the algorithm of \citeauthor{Gaur1999}, for each node $p$ of the tree whose assignment is $\sigma_p=\assign{\In_p,\Ex_p}$ we do not (explicitly) consider sub\nbdash-instance $\I_p=\tuple{\G(\sigma_p),\H(\sigma_p)}$.
We refer instead to the following sets:
\begin{itemize}
  \item $\Sep_{\G,\H}(\sigma_p)=\{G\in\G\mid G\cap\In_p=\emptyset\}$, the set of all edges of $\G$ \emph{not met} by (or, equivalently, \emph{separated} from) $\sigma_p$; and
  \item $\Com_{\G,\H}(\sigma_p)=\{H\in\H\mid H\cap\Ex_p=\emptyset\}$, the set of all edges of $\H$ \emph{compatible} with $\sigma_p$.
\end{itemize}
We will often omit the subscript ``$\G,\H$'' of $\Sep_{\G,\H}(\sigma_p)$ and $\Com_{\G,\H}(\sigma_p)$ when hypergraphs $\G$ and $\H$ we are referring to are understood.
The sets $\Sep(\sigma_p)$ and $\Com(\sigma_p)$ are very similar to $\G(\sigma_p)$ and $\H(\sigma_p)$, respectively.
However, roughly speaking, unlike $\G(\sigma_p)$ and $\H(\sigma_p)$, the edges in $\Sep(\sigma_p)$ and $\Com(\sigma_p)$ are neither ``projected'' over the free vertices of $\sigma_p$, nor minimized to obtain simple hypergraphs.
In fact, $\Sep(\sigma_p)$ and $\Com(\sigma_p)$ are (more or less) equivalent to $\G_{V\setminus\In_p}$ and to $\H_{V\setminus\Ex_p}$, respectively (and not to $(\G_{V\setminus\In_p})^{V\setminus(\In_p\cup\Ex_p)}$ and to $(\H_{V\setminus\Ex_p})^{V\setminus(\In_p\cup\Ex_p)}$, respectively).
The difference lies in the fact that $\Sep(\sigma_p)$ and $\Com(\sigma_p)$ are formally defined as sets of sets of vertices, while $\G_{V\setminus\In_p}$ and $\H_{V\setminus\Ex_p}$ are hypergraphs.

More formally, let $\Tree(\G,\H)=\tuple{\NodesTree,A,r,\sigma,\ell}$ be a tree whose nodes $\NodesTree$ are labeled by a function $\sigma$, and whose edges $A$ are labeled by a function $\ell$.
The root $r\in \NodesTree$ of the tree is labeled with the empty assignment $\emptyassign=\assign{\emptyset,\emptyset}$.
Each node $p$ is labeled with the assignment $\sigma_p=\assign{\In_p,\Ex_p}$ (specified below).
The leaves of $\Tree(\G,\H)$ are all nodes $p$ whose assignment $\sigma_p$ is covering or has no free vertex (remember that, by \cref{lemma:assignment_coherent_new_Tr_then_not_covering}, there is no benefit in considering (and further extending) covering assignments).
Each non\nbdash-leaf node $p$ of $\Tree(\G,\H)$ has precisely the following children:

\begin{itemize}
\item Derived from extension\nbdash-type~\ref*{ext:exc}: for each free vertex $v$ of $\sigma_p$, $p$ has a child $q$ such that $\sigma_q=\sigma_p\addassign\assign{\emptyset,\{v\}}$, and the edge connecting $p$ to $q$ is labeled \ExcNode{v}.

\item Derived from extension\nbdash-type~\ref*{ext:inc_crt}: for each $G\in\Sep(\sigma_p)$ and each vertex $v\in G$ that is free in $\sigma_p$, $p$ has a child $q$ such that $\sigma_q=\sigma_p\addassign\assign{\{v\},G\setminus\{v\}}$, and the edge connecting $p$ to $q$ is labeled \IncNodeCrit{v}{G}.
\end{itemize}

Observe that, by this definition of $\Tree(\G,\H)$, the edges leaving a node are all labeled differently, and, moreover, siblings are always differently labeled.
Note, however, that different (non-sibling) nodes may have the same label, and so may edges originating from different nodes.

Also in this case, the extensions considered in tree $\Tree(\G,\H)$ generate consistent assignments.
This is obvious for the extension derived from extension\nbdash-type~\ref*{ext:exc} since the vertex considered in the exclusion is free in the assignment.
For the extension derived from extension\nbdash-type~\ref*{ext:inc_crt} observe the following.
Let $\sigma_p=\assign{\In_p,\Ex_p}$.
The sets of vertices $\{v\}$ and $(G\setminus\{v\})$ constitute a partition of vertices in $G$.
Since $G\in\Sep(\sigma_p)$, $G\cap\In_p = \emptyset$, and hence $(G\setminus\{v\})\cap\In_p = \emptyset$.
It could be the case that $G\cap\Ex_p \neq \emptyset$, however $v\notin\Ex_p$, because $v$ is free.
Therefore, $\assign{\In_p\cup\{v\},\Ex_p\cup(G\setminus\{v\})}$ is a consistent assignment.

To give an example, consider \cref{fig:decomposition_example}.
\begin{figure}[!t]
\centering
\includegraphics[width=0.98\textwidth]{./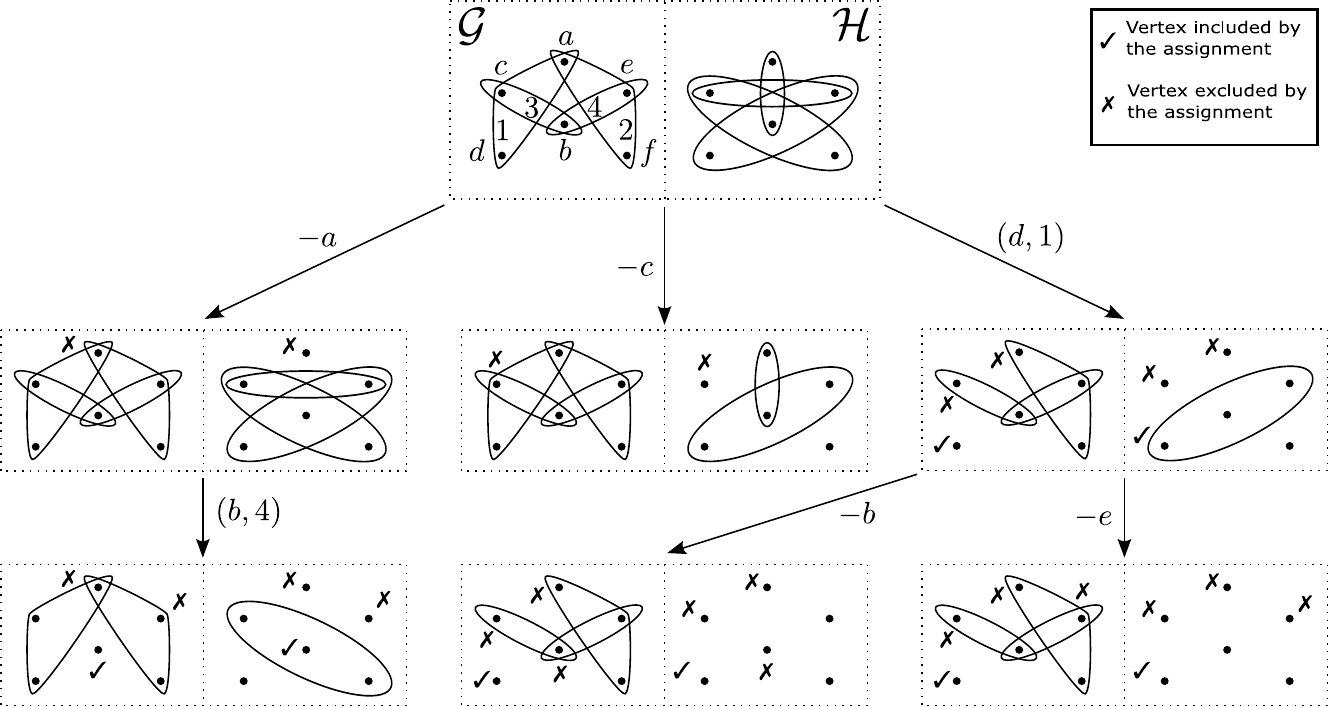}
\caption{A subtree of the decomposition tree $\Tree(\G,\H)$.}\label{fig:decomposition_example}
\end{figure}
Hypergraph vertices are denoted by letters, and hypergraph edges are denoted by numbers.
In the tree illustrated, the root coincides with the pair of hypergraphs of \cref{fig1}, except that the transversal $\{d,b,f\}$ of $\G$ is now missing  in $\H$.
The root is associated with the empty assignment $\emptyassign$, and, correspondingly, the sets depicted with the root node are $\Sep(\emptyassign)$, and $\Com(\emptyassign)$.
Each other node $p$ represents an assignment $\sigma_p$ whose included vertices are indicated by a checkmark (\cmark) and whose excluded vertices by a cross (\xmark).
In addition, each node $p$ shows, on the left\nbdash-hand side, the separated edges of $\G$ and, on the right\nbdash-hand side, the compatible edges of $\H$ in $\sigma_p$, respectively.
The left\nbdash-most edge leaving the root is labeled with \ExcNode{a} which stands for the exclusion of vertex $a$.
This reflects the application of an extension\nbdash-type~\ref*{ext:exc}.
On the other hand, the right\nbdash-most edge leaving the root is labeled with \IncNodeCrit{d}{1} which stands for the inclusion of vertex $d$ as a critical vertex, along with edge $1$ of $\G$ witnessing $d$'s criticality in the attempted new transversal under construction.
This reflects the application of an extension\nbdash-type~\ref*{ext:inc_crt}.
In the given example, not all but only some nodes of the tree are depicted.
Indeed, observe that the bottom right node of the figure is not a leaf, because its assignment is non\nbdash-covering and still contains two free vertices that can be either included (as critical vertices), or excluded.

On the other hand, the bottom central node of the figure is a leaf, because its assignment is covering (in particular, edge $3$ of hypergraph $\G$ is covered by the excluded vertices).

A path $\Pi=(\ell_1,\ell_2,\dots,\ell_k)$ in $\Tree(\G,\H)$ is a sequence of edge labels describing the path from the root to a node following the edges labeled in turn $\ell_1$, $\ell_2$,\dots,$\ell_k$.

For example, in \cref{fig:decomposition_example}, the path $(\IncNodeCrit{d}{1},\ExcNode{e})$ leads to the bottom right node of the figure.

Since the edges leaving a node are assumed to be all differently labelled, a path identifies unequivocally a node in the tree.
Given a path $\Pi$, we denote by $\PathToNode(\Pi)$ the end\nbdash-node of $\Pi$.
Note that the end\nbdash-node of a path is not necessarily a leaf of the decomposition tree.

The next lemma, which shows how to compute the assignment $\sigma_{\PathToNode(\Pi)}$ of node $\PathToNode(\Pi)$, immediately follows from the definition of the concept of path.
For notational convenience we define $\sigma(\Pi)=\sigma_{\PathToNode(\Pi)}$.
\begin{lemma}\label{lemma:assignment_from_path}
\begin{equation}\label{eq:assignment_from_path}
\sigma_{\PathToNode(\Pi)}=\sigma(\Pi)=\assign{\bigcup_{\IncNodeCrit{v}{G}\in\Pi}\{v\},\big(\bigcup_{\ExcNode{v}\in\Pi}\{v\}\big)\cup\big(\bigcup_{\IncNodeCrit{v}{G}\in\Pi}(G\setminus\{v\})\big)}.
\end{equation}
\end{lemma}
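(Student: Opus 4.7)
The plan is to prove the identity by straightforward induction on the length $k$ of the path $\Pi=(\ell_1,\ell_2,\dots,\ell_k)$, exploiting the fact that the labeling function $\sigma$ on $\Tree(\G,\H)$ is defined recursively along edges: each child's assignment is obtained from the parent's via a single elementary extension uniquely determined by the label of the incident edge.

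For the base case, when $k=0$ the path is empty and $\PathToNode(\Pi)=r$. By definition, $\sigma_r=\emptyassign=\tuple{\emptyset,\emptyset}$. On the right-hand side of~\eqref{eq:assignment_from_path}, both index sets $\{\IncNodeCrit{v}{G}\in\Pi\}$ and $\{\ExcNode{v}\in\Pi\}$ are empty, so the unions evaluate to $\emptyset$, matching $\emptyassign$. For the inductive step, assume the statement holds for every path of length $k-1$. Given $\Pi=(\ell_1,\dots,\ell_k)$, let $\Pi'=(\ell_1,\dots,\ell_{k-1})$, let $p=\PathToNode(\Pi')$, and let $q=\PathToNode(\Pi)$, so that $q$ is the child of $p$ reached via the edge labeled $\ell_k$. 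By definition of $\Tree(\G,\H)$, either $\ell_k=\ExcNode{v}$ for some free vertex $v$ of $\sigma_p$, in which case $\sigma_q=\sigma_p\addassign\tuple{\emptyset,\{v\}}$, or $\ell_k=\IncNodeCrit{v}{G}$ for some free $v$ and some $G\in\Sep(\sigma_p)$ with $v\in G$, in which case $\sigma_q=\sigma_p\addassign\tuple{\{v\},G\setminus\{v\}}$. Applying the induction hypothesis to $\sigma_p=\sigma(\Pi')$ and distributing the union, in both cases the newly contributed vertices are exactly those added by the extra index $\ell_k$ appearing on the right-hand side of~\eqref{eq:assignment_from_path} for $\Pi$, so the formula holds for $\Pi$.

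A minor auxiliary verification is needed to ensure that the $\addassign$-operations appearing above are well defined, i.e.\ that the included and excluded sets stay disjoint along the path. This is immediate from the conditions imposed at each elementary extension: in the exclusion case $v$ is free in $\sigma_p$, so $v\notin\In_p$ and $v\notin\Ex_p$; in the inclusion case, additionally $G\in\Sep(\sigma_p)$ implies $G\cap\In_p=\emptyset$, hence $(G\setminus\{v\})\cap(\In_p\cup\{v\})=\emptyset$ and $\Ex_p\cap\{v\}=\emptyset$. Thus $\sigma_q$ is always a valid assignment, and the unions in~\eqref{eq:assignment_from_path} are exactly the accumulation of these elementary contributions.

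I do not foresee any genuine obstacle here; the statement is essentially a bookkeeping identity that unfolds the recursive definition of $\sigma$ along $\Pi$. The only aspect that warrants explicit care is making sure the two disjoint unions in~\eqref{eq:assignment_from_path} correctly separate the contribution of each label type, and in particular that for an $\IncNodeCrit{v}{G}$ label the vertex $v$ goes into $\In$ while $G\setminus\{v\}$ goes into $\Ex$; this is precisely how the extension is defined, so the two sides of the formula match term by term.
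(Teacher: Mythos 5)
Your proof is correct and matches the paper's treatment: the paper simply states that the lemma ``immediately follows from the definition of the concept of path,'' and your induction on the length of $\Pi$ is exactly the explicit unfolding of that definition, including the needed check that the $\addassign$-operations stay well defined along the path.
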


Let us now analyze what are the reductions in size achieved when specific extension\nbdash-types are performed.
We denote by
\[
\varepsilon_v^{\Sep(\sigma)}=\frac{|\{G\in\Sep(\sigma)\mid v\in G\}|}{|\{\Sep(\sigma)\}|}
\quad\text{and}\quad
\varepsilon_v^{\Com(\sigma)}=\frac{|\{H\in\Com(\sigma)\mid v\in H\}|}{|\{\Com(\sigma)\}|}\]
the ratio of the edges in $\Sep(\sigma) $ and $\Com(\sigma)$, respectively, containing vertex $v$.

\begin{lemma}\label{lemma:riduzione_taglia_assignment_tree}
Let $\G$ and $\H$ be two hypergraphs satisfying the intersection property, and let $\sigma$ be a non\nbdash-covering assignment.
If $v$ is a free vertex of $\sigma$, then
\begin{align*}
|\Sep(\sigma\addassign\assign{\{v\},\emptyset}) & | = (1-\varepsilon_v^{\Sep(\sigma)})\cdot|\Sep(\sigma)|\text{, and}\\
|\Com(\sigma\addassign\assign{\emptyset,\{v\}}) & | = (1-\varepsilon_v^{\Com(\sigma)})\cdot|\Com(\sigma)|.
\end{align*}
On the other hand, if $G\in\Sep(\sigma)$, $H\in\Com(\sigma)$, and $v\in G$ and $w\in H$ are free vertices of $\sigma$, then
\begin{align*}
|\Com(\sigma\addassign\assign{\{v\},G\setminus\{v\}})| & \leq \varepsilon_v^{\Com(\sigma)}\cdot|\Com(\sigma)|\text{, and}\\
|\Sep(\sigma\addassign\assign{H\setminus\{w\},\{w\}})| & \leq \varepsilon_w^{\Sep(\sigma)}\cdot|\Sep(\sigma)|.
\end{align*}
\end{lemma}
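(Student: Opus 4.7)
The plan is to unwind the definitions of $\Sep$ and $\Com$ for each extended assignment and relate the resulting sets to subsets of the original $\Sep(\sigma)$ and $\Com(\sigma)$ counted by the frequency ratios $\varepsilon_v^{\Sep(\sigma)}$ and $\varepsilon_v^{\Com(\sigma)}$. The two equalities are straightforward ``count what you removed'' arguments, whereas the two inequalities are where the intersection property is essential. I will treat each of the four statements in turn.

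First I would handle the equalities. For $\sigma' = \sigma \addassign \assign{\{v\},\emptyset}$ we have $\In' = \In\cup\{v\}$ and $\Ex' = \Ex$, so
\[
\Sep(\sigma') = \{G\in\G \mid G\cap(\In\cup\{v\})=\emptyset\} = \{G\in\Sep(\sigma) \mid v\notin G\}.
\]
By definition of $\varepsilon_v^{\Sep(\sigma)}$, exactly $\varepsilon_v^{\Sep(\sigma)}\cdot|\Sep(\sigma)|$ edges of $\Sep(\sigma)$ contain $v$, so the remaining $(1-\varepsilon_v^{\Sep(\sigma)})\cdot|\Sep(\sigma)|$ edges form $\Sep(\sigma')$. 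The analogous computation with $\sigma' = \sigma \addassign \assign{\emptyset,\{v\}}$ gives $\Com(\sigma') = \{H\in\Com(\sigma) \mid v\notin H\}$ and the claimed size.

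Next I would handle the inequalities, where the intersection property does the work. Fix $G\in\Sep(\sigma)$, $v\in G$ free, and let $\sigma' = \sigma\addassign\assign{\{v\},G\setminus\{v\}}$. Then $\Ex' = \Ex\cup(G\setminus\{v\})$, so
\[
\Com(\sigma') = \{H\in\Com(\sigma) \mid H\cap (G\setminus\{v\}) = \emptyset\}.
\]
I claim every such $H$ must contain $v$: by the intersection property, $H$ is a transversal of $\G$, hence $H\cap G \neq \emptyset$; but $H$ avoids $G\setminus\{v\}$, forcing $v\in H\cap G$. Therefore $\Com(\sigma') \subseteq \{H\in\Com(\sigma) \mid v\in H\}$, whose cardinality is exactly $\varepsilon_v^{\Com(\sigma)}\cdot|\Com(\sigma)|$. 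The symmetric argument, using that every $G\in\Sep(\sigma)$ is a transversal of $\H$ (again by the intersection property), shows $\Sep(\sigma\addassign\assign{H\setminus\{w\},\{w\}}) \subseteq \{G\in\Sep(\sigma) \mid w\in G\}$, yielding the last bound.

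There is no real obstacle; the only subtlety is to remember that the intersection property is applied to the \emph{original} hypergraphs $\G$ and $\H$ (which is legitimate because $\Sep(\sigma)\subseteq\G$ and $\Com(\sigma)\subseteq\H$ as sets of sets), and that the inequalities are not equalities because the surviving edges of $\Com(\sigma)$ that contain $v$ still have to additionally avoid $G\setminus\{v\}$, so some of them may be lost.
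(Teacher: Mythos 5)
Your proof is correct and follows essentially the same route as the paper's: the equalities are obtained by directly unwinding the definitions of $\Sep$ and $\Com$, and for the inequalities you show, exactly as the paper does, that the intersection property forces every surviving edge $H$ of $\Com(\sigma')$ to satisfy $H\cap G=\{v\}$, hence to lie in the set of edges containing $v$, whose size is $\varepsilon_v^{\Com(\sigma)}\cdot|\Com(\sigma)|$. The paper merely names the two sets ($K_{\{v\}}$ and $K_{[v]}$) that your containment argument uses implicitly.
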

\begin{proof}
Given an assignment $\sigma$, if $v$ is a free vertex in $\sigma$ and $v$ belongs to $\varepsilon_v^{\Sep(\sigma)}\cdot|\Sep(\sigma)|$ many edges of $\Sep(\sigma)$, then, for the assignment $\sigma'=\sigma\addassign\assign{\{v\},\emptyset}$ (extension\nbdash-type~\ref*{ext:inc}), it is easy to see that $|\Sep(\sigma')| = (1-\varepsilon_v^{\Sep(\sigma)})\cdot|\Sep(\sigma)|$.
Similarly, when the assignment $\sigma'=\sigma\addassign\assign{\emptyset,\{v\}}$ is considered (extension\nbdash-type~\ref*{ext:exc}), it is easy to see that $|\Com(\sigma')| = (1-\varepsilon_v^{\Com(\sigma)})\cdot|\Com(\sigma)|$.

Let us now consider extension\nbdash-type~\ref*{ext:inc_crt}, and let $\sigma'=\sigma\addassign\assign{\{v\},G\setminus\{v\}}$.
We will show that $|\Com(\sigma')|\leq\varepsilon_v^{\Com(\sigma)}\cdot|\Com(\sigma)|$.
Let $K_{\{v\}}=\{\wt{H}\in\Com(\sigma)\mid \wt{H}\cap G=\{v\}\}$ be the set of edges in $\Com(\sigma)$ whose intersection with $G$ is exactly vertex $v$, and let $K_{[v]}=\{\wt{H}\in\Com(\sigma)\mid \wt{H}\cap G\ni v\}$ be the set of edges in $\Com(\sigma)$ whose intersection with $G$ contains vertex $v$.
Clearly $K_{\{v\}}\subseteq K_{[v]}$, and hence $|K_{\{v\}}|\leq|K_{[v]}|$.
By definition, $|K_{[v]}|=\varepsilon_v^{\Com(\sigma)}\cdot|\Com(\sigma)|$, therefore $|K_{\{v\}}|\leq \varepsilon_v^{\Com(\sigma)}\cdot|\Com(\sigma)|$.
Since $\G$ and $\H$ satisfy the intersection property, $\Sep(\sigma)$ contains edges of $\G$, and $\Com(\sigma)$ contains edges of $\H$, all edges of $\Com(\sigma)$ intersect $G$.
This, together with the fact that all vertices $G\setminus\{v\}$ are excluded in $\sigma'$, implies that $\Com(\sigma')=K_{\{v\}}$.
Therefore, $|\Com(\sigma')|\leq\varepsilon_v^{\Com(\sigma)}\cdot|\Com(\sigma)|$.
For extension\nbdash-type~\ref*{ext:exc_crt} the proof is similar.
\end{proof}

Observe that in the proof of \cref{lemma:riduzione_taglia_assignment_tree} we do not require $\Sep(\sigma)$ and $\Com(\sigma)$ to be ``simple'', i.e., it is not required that edges belonging to $\Sep(\sigma)$ (resp., $\Com(\sigma)$) are not subsets of other edges in $\Sep(\sigma)$ (resp., $\Com(\sigma)$).
In fact, the lemma is valid regardless of that.
\Cref{lemma:riduzione_taglia_assignment_tree} states a general property about the reduction in size of $\Sep(\sigma)$ and $\Com(\sigma)$ when some assignment extensions are considered.
Indeed, edges $G$ from $\Sep(\sigma)$ do not need to be considered any longer as soon as they contain an included vertex, and this is irrespective of $\Sep(\sigma)$ being actually simple.
A similar discussion extends to edges $H$ of $\Com(\sigma)$ containing excluded vertices.

Before proceeding with our discussion, we recall that
each node $p$ of the tree corresponds to the sub\nbdash-instance $\I_p=\tuple{\G(\sigma_p),\H(\sigma_p)}$ even though we do not explicitly represent it, and we refer instead to $\Sep(\sigma_p)$ and $\Com(\sigma_p)$.
Therefore, $\Tree(\G,\H)$ is indeed a decomposition of the original instance into smaller sub\nbdash-instances.

We claim that, by construction of tree $\Tree(\G,\H)$, if $\G$ and $\H$ are simple hypergraphs satisfying the intersection property, then the pair $\tuple{\G,\H}$ is a ``yes''\nbdash-instance of \DualProbHyp if and only if, for each node $p$ of $\Tree(\G,\H)$, $\I_p$ is a ``yes''\nbdash-instance of \DUAL.
Indeed, if $\G$ and $\H$ are dual, then, by \cref{lemma:dual_iff_all_sub-instances_dual}, there is no assignment $\sigma$ for which $\G(\sigma)$ and $\H(\sigma)$ are not dual, and hence all nodes $p$ of $\Tree(\G,\H)$ are such that $\I_p$ is a ``yes''\nbdash-instance of \DUAL.
On the other hand, if $\G$ and $\H$ are not dual, since we are assuming that they are simple and satisfy the intersection property, by \cref{lemma:no_new_transversal_iff_dual} there is a new transversal $T$ of $\G$ \Wrt $\H$.
The fact that $\G$ and $\H$ are simple implies also that $\G=\G(\emptyassign)=\G(\sigma_r)$ and $\H=\H(\emptyassign)=\H(\sigma_r)$, where $r$ is the root of $\Tree(\G,\H)$.
Therefore, already the root $r$ of $\Tree(\G,\H)$ is such that $\G(\sigma_r)$ and $\H(\sigma_r)$ are not dual, and hence $\I_r$ is a ``no''\nbdash-instance of \DUAL.

The critical point here is that, in order to prove that hypergraphs $\G$ and $\H$ are actually not dual, it would be much better to identify in $\Tree(\G,\H)$ those nodes $p$ for which it is computationally easy to verify that $\G(\sigma_p)$ and $\H(\sigma_p)$ are not dual, and the root of the tree may not always fit the purpose (of an efficient check).
Therefore, to show that $\tuple{\G,\H}$ is a ``no''\nbdash-instance of \DualProbHyp, the ideal solution would be that of finding/guessing a path from the root to a node $p$, where $\I_p$ is easily recognizable as a ``no''\nbdash-instance, e.g., as in the case in which $\sigma_p=\assign{\In_p,\Ex_p}$ is such that the set $\In_p$ or $\Ex_p$ is a new transversal of $\G$ or $\H$, respectively.
The interesting fact here is that, by construction of $\Tree(\G,\H)$, if $\G$ and $\H$ are simple non\nbdash-dual hypergraphs satisfying the intersection property, there is always a node $p$ in $\Tree(\G,\H)$ such that $\sigma_p$ has the required property for an easy check.
Indeed, let us assume that $T$ is a new transversal of $\G$, and consider the assignment $\sigma=\assign{\emptyset,\compl{T}}$.
Since $\sigma$ is an assignment which only excludes vertices, there is a node $p$ in $\Tree(\G,\H)$ such that $\sigma_p=\sigma$ because we can build the assignment $\sigma$ by successively using extensions of type~\ref*{ext:exc}.
It is clear that, in general, such a node may be at depth linear in the tree.
However, we will show in the next section that if $\tuple{\G,\H}$ is actually a ``no''\nbdash-instance of \Dual, then there must also be a node $p$ at depth $O(\log \InputSize)$ such that checking that $\I_p$ is a ``no''\nbdash-instance of \Dual is feasible within complexity class \TC0.

\subsection{Logarithmic refuters in \texorpdfstring{$\Tree(\G,\H)$}{T(G,H)}}\label{sec:logarithmic_refuters}
An assignment $\sigma=\assign{\In,\Ex}$ is said to be a \emph{witness} of the existence of a new transversal of $\G$ \Wrt $\H$ if $\In$ is a new transversal of $\G$ \Wrt $\H$, \emph{or} $\Ex$ is a new transversal of $\H$ \Wrt $\G$ (see \cref{lemma:newTransversals_complementari}).
We remind the reader that, for a set of vertices $T$ to be a new transversal of $\G$ (resp., $\H$) \Wrt $\H$ (resp., $\G$), $T$ has to be a transversal of $\G$ (resp., $\H$) \emph{and} an independent set of $\H$ (resp., $\G$), i.e., $T$ must not be a superset of any edge of $\H$ (resp., $\G$).
Similarly, $\sigma$ is a \emph{double witness} of the existence of a new transversal of $\G$ \Wrt $\H$ if $\In$ is a new transversal of $\G$ \Wrt $\H$, \emph{and} $\Ex$ is a new transversal of $\H$ \Wrt $\G$.
Recall that a new transversal does not need to be minimal.
Double witnesses are easily proven to be characterized as follows.
\begin{lemma}\label{lemma:double_witnessing_condition}
Let $\G$ and $\H$ be two hypergraphs.
An assignment $\sigma$ is a double (non\nbdash-duality) witness if and only if
\begin{equation}\label{eq:double_witnessing_condition}
\Sep(\sigma)=\emptyset \land \Com(\sigma)=\emptyset.
\end{equation}
\end{lemma}

Note here that a node $p$ of $\Tree(\G,\H)$ whose assignment $\sigma_p$ is a witness is not necessarily a leaf of the tree.
For example, in \cref{fig:decomposition_example} the assignment of the bottom right node is a witness, but this node, as already observed, is not a leaf of the full tree.
From now on, we will often refer to properties of an assignment $\sigma_p$ as properties of the node $p$.
For example, we say that a node $p$ of $\Tree(\G,\H)$ is a witness when $\sigma_p$ is actually a witness.

We have already seen that, if $\G$ and $\H$ are simple non\nbdash-dual hypergraphs satisfying the intersection property, then in $\Tree(\G,\H)$ there is always a witness at linear depth.
But this is not enough for our purposes.
Our aim in the rest of this section is to prove a stronger property.
In fact, we will show that there is always at only logarithmic depth a node that is either a witness or can be easily extended to be a witness.

Before proceeding we need the following property.
\begin{lemma}\label{lemma:aggiungere_un_nodo_come_critico}
Let $\G$ and $\H$ be two hypergraphs, and let $\sigma=\assign{\In,\Ex}$ be an assignment coherent with a new \emph{minimal} transversal $T$ of $\G$ \Wrt $\H$, such that $\In\neq T$.
Then, every vertex $v\in(T\setminus \In)$ is free, and for each such vertex, there is an edge $G_v\in\Sep(\sigma)$, with $v\in G_v$, such that $\sigma\addassign\assign{\{v\},G_v\setminus\{v\}}$ is coherent with $T$.
\end{lemma}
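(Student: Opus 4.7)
The plan is to exploit the minimality of $T$ via Lemma~\ref{lemma:transMin-ogniVertexCritico}, which guarantees that every vertex in the minimal transversal $T$ has a critical witness edge in $\G$. This witness edge will be the required $G_v$.

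For the first part, I would unpack the definition of precursor. Since $\sigma=\assign{\In,\Ex}$ is coherent with $T$, we have $\In\subseteq T$ and $\Ex\subseteq\compl{T}$. Now take $v\in T\setminus\In$. By hypothesis $v\notin\In$, and because $v\in T$ while $\Ex\subseteq\compl{T}$, we get $v\notin\Ex$. Hence $v$ is free in $\sigma$.

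For the second part, I would proceed as follows. Because $T$ is a new \emph{minimal} transversal of $\G$ \Wrt $\H$, Lemma~\ref{lemma:transMin-ogniVertexCritico} applied to $T$ as a minimal transversal of $\G$ yields, for the vertex $v\in T\setminus\In\subseteq T$, an edge $G_v\in\G$ such that $G_v\cap T=\{v\}$. I would then verify the three required properties of this $G_v$:
(i) $v\in G_v$ is immediate from $G_v\cap T=\{v\}$;
(ii) $G_v\in\Sep(\sigma)$, i.e., $G_v\cap\In=\emptyset$: since $\In\subseteq T$, one has $G_v\cap\In\subseteq G_v\cap T=\{v\}$, and because $v\notin\In$, the intersection must be empty;
(iii) coherence of $\sigma\addassign\assign{\{v\},G_v\setminus\{v\}}$ with $T$: the included part $\In\cup\{v\}$ sits inside $T$ (by $\In\subseteq T$ and $v\in T$), while the excluded part $\Ex\cup(G_v\setminus\{v\})$ sits inside $\compl{T}$ (by $\Ex\subseteq\compl{T}$ and by $G_v\cap T=\{v\}$, which forces $G_v\setminus\{v\}\subseteq\compl{T}$).

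The argument does not really contain an obstacle; the only subtle point to be careful about is that one must also check that $\sigma\addassign\assign{\{v\},G_v\setminus\{v\}}$ is a syntactically valid assignment, i.e., that $(\In\cup\{v\})\cap(\Ex\cup(G_v\setminus\{v\}))=\emptyset$. This is the place where the minimality of $T$ is genuinely used: it guarantees a witness $G_v$ whose ``other'' vertices lie entirely outside $T$ and hence outside $\In$, so they cannot clash with $\In\cup\{v\}$. Once (i)--(iii) are established the claim follows, and the proof is essentially a direct unpacking of definitions plus one application of the criticality characterization of minimal transversals.
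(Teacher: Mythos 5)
Your proof is correct and follows essentially the same route as the paper's: freeness of $v$ comes from unpacking coherence of $\sigma$ with $T$, and the witness edge $G_v$ with $G_v\cap T=\{v\}$ is obtained from Lemma~\ref{lemma:transMin-ogniVertexCritico} applied to the minimal transversal $T$, after which membership in $\Sep(\sigma)$ and coherence of the extended assignment follow by direct inspection. Your additional check that the extended assignment is a syntactically valid (non-overlapping) assignment is a worthwhile detail that the paper's terser proof leaves implicit.
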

\begin{proof}
Let $v$ be any vertex belonging to $T\setminus \In$.
From $\sigma\dblsubeq T$ it follows that $v\notin\Ex$ and hence $v$ is free.
Since $T$ is a minimal transversal of $\G$, by \cref{lemma:transMin-ogniVertexCritico}, $v$ is critical (in $T$).
For this reason, there is an edge $G_v\in\G$ such that $T\cap G_v=\{v\}$.
Now, simply observe that $G_v\in\Sep(\sigma)$ as well (because $T\cap G_v = \{v\}$ and $v\notin\In$), and that $\sigma\addassign\assign{\{v\},G_v\setminus\{v\}}\dblsubeq T$.
\end{proof}

Consider now the following situation.
Let $\G$ and $\H$ be two hypergraphs satisfying the intersection property, for which $T$ is a new \emph{minimal} transversal of $\G$ \Wrt $\H$.
Let us use as a running example again the hypergraphs in \cref{fig:decomposition_example}:
the minimal transversal of $\G$ missing in $\H$ is $T=\{d,b,f\}$.
Assume that we are in the process of building a witness by extending intermediate assignments to new ones that are still coherent with $T$ (essentially we extend assignments with the aim of ``converging'' to $T$, so to have the witness that we are looking for).
From what we have said, intuitively, we can recognize the built assignment $\sigma$ as a witness when sets $\Sep(\sigma)$ and $\Com(\sigma)$ become empty.

In general, if $\sigma = \assign{\In,\Ex}$ is a non\nbdash-witnessing assignment coherent with $T$, we could extend $\sigma$, for example, by excluding a vertex $v$ that is free in $\sigma$ and does not belong to $T$ (i.e., an extension\nbdash-type~\ref*{ext:exc}), or by including, as critical vertex, a vertex $v$ that is free in $\sigma$ and belongs to $T$ (i.e., an extension\nbdash-type~\ref*{ext:inc_crt}).
Note that, by \cref{lemma:aggiungere_un_nodo_come_critico}, if $\sigma=\assign{\In,\Ex}$ is an assignment coherent with a new \emph{minimal} transversal $T$ of $\G$, for \emph{any} vertex $v\in (T\setminus\In)$, it is always possible to include $v$ as critical vertex into $\sigma$ (along with the suitable edge witnessing $v$'s criticality).%
\footnote{To the contrary, if $\sigma$ is coherent with a non\nbdash-minimal new transversal $T$ of $\G$, then not all vertices in $T\setminus\In$ can be included as critical vertices. For more on this, see \cref{lemma:intersezione_minima} in \cref{sec:alg-gaur-det}.}

For the extension\nbdash-type~\ref*{ext:exc}, if $v$ belongs to at least half of the edges in $\Com(\sigma)$, then by excluding $v$ we get rid of at least half of the edges in $\Com(\sigma)$ (see \cref{lemma:riduzione_taglia_assignment_tree}).
On the other hand, for the extension\nbdash-type~\ref*{ext:inc_crt}, if $v$ belongs to less than half of the edges in $\Com(\sigma)$, then by including $v$ as critical vertex with the proper edge witnessing $v$'s criticality we again get rid of at least half of the edges in $\Com(\sigma)$.

In our example, at the root of the tree (i.e., for the empty assignment $\emptyassign$), vertex $c$ is such that $c\notin T$ and $c$ belongs to two out of four edges of $\Com(\emptyassign)$.
So, excluding $c$ is a very good choice because the number of edges from $\Com(\assign{\emptyset,\emptyset})$ to $\Com(\assign{\emptyset,\{c\}})$ would be halved (from four to two).
Symmetrically, vertex $d$ is such that $d\in T$ and $d$ belongs to only one edge of $\Com(\emptyassign)$.
Hence, including $d$ as critical vertex (along with edge $1$ witnessing $d$'s criticality) is again a very good choice because, also in this case, the number of edges from $\Com(\assign{\emptyset,\emptyset})$ to $\Com(\assign{\{d\},\{a,c\}})$ would be (more than) halved (from four to one).

It is evident that halving the size of the set of compatible edges each time we perform an assignment extension would be one of the best ways to asymptotically speed up the construction of the witness sought for.
This is the key intuition to prove that in $\Tree(\G,\H)$ there are ``duality refuters'' at \emph{logarithmic depth}.
The following definitions serve the purpose of formalizing the just illustrated intuition.

Given an assignment $\sigma$, a free vertex $v$ of $\sigma$ is called \emph{frequent} (in $\sigma$) if $v$ belongs to at least half of the edges in $\Com(\sigma)$, otherwise we say that $v$ is \emph{infrequent} (in $\sigma$).
For example, we have already observed that in \cref{fig:decomposition_example} vertex $c$ is frequent at the root and vertex $d$ is infrequent at the root.
Let us denote by $\Freq_{\G,\H}(\sigma)$ and $\Infreq_{\G,\H}(\sigma)$ the frequent and infrequent vertices in $\sigma$, respectively.
Again, we will often omit the subscript ``$\G,\H$'' of $\Freq_{\G,\H}(\sigma)$ and $\Infreq_{\G,\H}(\sigma)$ when the hypergraphs $\G$ and $\H$ we are referring to are understood.

Let $\sigma$ be an assignment coherent with a new \emph{minimal} transversal $T$ of $\G$ such that $\Com(\sigma) \neq \emptyset$.
A free vertex $v$ of $\sigma$ is said to be \emph{appealing to exclude (for $\sigma$) \Wrt $T$} if $v\in\Freq(\sigma)$ and $v\notin T$.
On the other hand, a free vertex $v$ of $\sigma$ is said to be \emph{appealing to include (as a critical vertex) (for $\sigma$) \Wrt $T$} if $v\in\Infreq(\sigma)$ and $v\in T$.

To summarize the first intuition, given a node $p$ coherent with a new minimal transversal $T$, edges leaving $p$ labeled with the exclusion of an appealing vertex to exclude (\Wrt $T$), or labeled with the inclusion as a critical vertex (with a suitable criticality's witness) of an appealing vertex to include (\Wrt $T$), lead to a node $q$ such that $\sigma_q\dblsubeq T$ and $|\Com(\sigma_q)|\leq\frac{1}{2}|\Com(\sigma_p)|$ (see \cref{lemma:riduzione_taglia_assignment_tree}).

The intuition behind the fact that in $\Tree(\G,\H)$ there are, at logarithmic depth, ``duality refuters'' which are moreover \emph{easily recognizable} is the following.
Assume $\G$ and $\H$ satisfy the intersection property and $T$ is a new \emph{minimal} transversal of $\G$ \Wrt $\H$.
As presented above, we are extending assignments via appealing vertices so to quickly ``converge'' to $T$.
If during this process we build a witness, then we are all done, because recognizing a witnessing assignment as such is computationally quite easy.
However, it may well be the case that this process gets stuck if a point is reached where, on the one hand, there are no appealing vertices to extend the current assignment $\sigma$ and, on the other hand, $\sigma$ is not yet a witness.
Remember that $\sigma$ is by construction coherent with $T$.
The key point is the following.
If there are no appealing vertices to extend $\sigma$, then all the free vertices of $\sigma$ that are frequent belong to $T$ and all the free vertices of $\sigma$ that are infrequent do not belong to $T$.
Therefore, we can easily extend $\sigma$ to obtain a witness by simply including all the free vertices of $\sigma$ that are frequent and by excluding all the free vertices of $\sigma$ that are infrequent.
In the next section we show that this final assignment extension based on the frequencies of the vertices can be computed extremely efficiently.

We now formally prove that in $\Tree(\G,\H)$ there are duality refuters at logarithmic depth.
Given an assignment $\sigma=\assign{\In,\Ex}$, $\sigma^+=\assign{\In\cup\Freq(\sigma),\Ex\cup\Infreq(\sigma)}$ is the \emph{augmented assignment of $\sigma$}.

\begin{lemma}\label{lemma:logarithmic_double_witness}
Let $\G$ and $\H$ be two hypergraphs satisfying the intersection property.
Then, there is a new transversal of $\G$ \Wrt $\H$ if and only if there is a node $p$ in $\Tree(\G,\H)$, at depth at most $\lfloor\log |\H|\rfloor+1$, such that ${\sigma_p}^+$ is a double witness.
\end{lemma}
\begin{proof}\hspace{0pt}
$(\Rightarrow)$
Let $T$ be a new \emph{minimal} transversal of $\G$, and let $p$ be a generic node of $\Tree(\G,\H)$ such that $\sigma_p$ is coherent with $T$, $\Com(\sigma_p) \neq \emptyset$, and $\sigma_p$ can be extended via an appealing vertex.
Let $\sigma_q$ be the assignment obtained by extending $\sigma_p$ via the appealing vertex.
Since the structure of $\Tree(\G,\H)$ keeps track of all the possible extensions of types~\ref*{ext:inc_crt} and~\ref*{ext:exc}, there is, among the children of $p$ in $\Tree(\G,\H)$, a node whose assignment is precisely $\sigma_q$.

Observe that the empty assignment $\emptyassign$ associated with the root of $\Tree(\G,\H)$ is obviously coherent with $T$.
Therefore, starting from the root of $\Tree(\G,\H)$, we can build a sequence of nodes that can be successively extended via the inclusion (as a critical vertex) or the exclusion of an appealing vertex \Wrt $T$, until we reach a node $p$ such that $\Com(\sigma_p) = \emptyset$ (and in that moment the frequencies of the vertices become meaningless, because frequencies are evaluated \Wrt to $\Com(\sigma_p)$) or there are no more appealing vertices \Wrt $T$ at all.

Let $s=(p_0,p_1,\dots,p_k)$ be a sequence of maximal length having the just mentioned property.
By definition of $s$, all nodes $p_i$ are such that $\sigma_{p_i}$ is coherent with $T$ (see \cref{lemma:aggiungere_un_nodo_come_critico}).
Since $s$ is of maximal length, $\Com(\sigma_{p_k}) = \emptyset$ or there are no appealing vertices in $\sigma_{p_k}$ \Wrt $T$ to further extend $s$.
Observe that, by the definition of appealing vertex, for all $1\leq i\leq k$, $|\Com(\sigma_{p_i})|\leq\frac{1}{2}|\Com(\sigma_{p_{i-1}})|$ (see \cref{lemma:riduzione_taglia_assignment_tree}).
Therefore, $s$ contains at most $\lfloor\log |\H|\rfloor+2$ nodes, and hence the length of the path from the root to $p_k$ is at most $\lfloor\log |\H|\rfloor+1$.

Let $\sigma_{p_k} = \assign{\In_{p_k},\Ex_{p_k}}$.
There are two cases: either~(1) $\Com(\sigma_{p_k}) = \emptyset$, or~(2) $\Com(\sigma_{p_k}) \neq \emptyset$.

Consider Case~(1).
From $\Com(\sigma_{p_k}) = \emptyset$ it follows that $\Ex_{p_k}$ is a transversal of $\H$.
Since $\sigma_{p_k}$ is coherent with $T$ and $T$ is a transversal of $\G$, $\Ex_{p_k}\cap T = \emptyset$, and hence $\Ex_{p_k}$ is an independent set of $\G$.
Therefore, $\Ex_{p_k}$ is a new transversal of $\H$ \Wrt $\G$.
Because $\Com(\sigma_{p_k}) = \emptyset$, all the free vertices of $\sigma_{p_k}$ are frequent, because each of them belongs to at least half of the edges of $\Com(\sigma_{p_k})$.
By this, ${\sigma_{p_k}}^+ = \assign{\In_{p_k}\cup\Freq(\sigma_{p_k}),\Ex_{p_k}\cup\Infreq(\sigma_{p_k})} = \assign{\In_{p_k}\cup\Freq(\sigma_{p_k}),\Ex_{p_k}} = \assign{\compl{\Ex_{p_k}},\Ex_{p_k}}$, because all the free vertices are frequent.
By \cref{lemma:newTransversals_complementari}, since $\Ex_{p_k}$ is a new transversal of $\H$ \Wrt $\G$, $\compl{\Ex_{p_k}}$ is a new transversal of $\G$ \Wrt $\H$.
Thus, ${\sigma_{p_k}}^+$ is a double witness.

Consider now Case~(2).
Since $p_k$ is the last node of the maximal length sequence $s$ and $\Com(\sigma_{p_k}) \neq \emptyset$, it must be the case that there are no appealing vertices in $\sigma_{p_k}$ \Wrt $T$.
For the following discussion, note that $\Com(\sigma_{p_k}) \neq \emptyset$ implies that the frequencies of the vertices are meaningful.
There are two cases: either~(a) $\Sep(\sigma_{p_k}) \neq \emptyset$, or~(b) $\Sep(\sigma_{p_k}) = \emptyset$.

Consider Case~(a).
Since $\Sep(\sigma_{p_k}) \neq \emptyset$, $\Com(\sigma_{p_k}) \neq \emptyset$, and there are no appealing vertices in $\sigma_{p_k}$ \Wrt $T$, all the frequent vertices of $\sigma_{p_k}$ belong to $T$ and all the infrequent vertices of $\sigma_{p_k}$ are outside $T$.
By this, ${\sigma_{p_k}}^+ = \assign{\In_{p_k}\cup\Freq(\sigma_{p_k}),\Ex_{p_k}\cup\Infreq(\sigma_{p_k})} = \assign{T,\compl{T}}$.
By \cref{lemma:newTransversals_complementari}, since $T$ is a new transversal of $\G$ \Wrt $\H$, $\compl{T}$ is a new transversal of $\H$ \Wrt $\G$.
Thus, ${\sigma_{p_k}}^+$ is a double witness.

Consider now Case~(b).
Since $\sigma_{p_k}$ is coherent with $T$, $T$ is a minimal transversal of $\G$, and $\Sep(\sigma_{p_k}) = \emptyset$, $\In_{p_k} = T$.
Because $\In_{p_k} = T$, any free vertex $v$ of $\sigma_{p_k}$ is such that $v\in (\compl{T}\setminus\Ex_{p_k})$ and, since there are no appealing vertices in $\sigma_{p_k}$ \Wrt $T$, $v$ is infrequent (for otherwise $v$ would have been an appealing vertex to exclude).
By this, ${\sigma_{p_k}}^+ = \assign{\In_{p_k}\cup\Freq(\sigma_{p_k}),\Ex_{p_k}\cup\Infreq(\sigma_{p_k})} = \assign{\In_{p_k},\Ex_{p_k}\cup\Infreq(\sigma_{p_k})} = \assign{\In_{p_k},\compl{\In_{p_k}}}$, because all the free vertices are infrequent.
By \cref{lemma:newTransversals_complementari}, since $\In_{p_k} = T$ is a new transversal of $\G$ \Wrt $\H$, $\compl{\In_{p_k}}$ is a new transversal of $\H$ \Wrt $\G$.
Thus, ${\sigma_{p_k}}^+$ is a double witness.

$(\Leftarrow)$
Clearly, if there is, within the required depth, a node $p$ such that ${\sigma_p}^+$ is a double witness, then there is a new transversal of $\G$ \Wrt $\H$.
\end{proof}

\section{New upper bounds for the \texorpdfstring{\DualProbHyp}{DUAL} problem}\label{sec:new_upper_bound}

\subsection{A new nondeterministic algorithm for \texorpdfstring{\DualProbHyp}{DUAL}}\label{sec:non-deterministic_algorithm}
We present in this section our new nondeterministic algorithm \textsc{\NDAlg} for \NonDualProbHyp and prove its correctness.
To prove the correctness of the algorithm we will exploit the property of $\Tree(\G,\H)$ of having easily recognizable duality refuters at logarithmic depth (\cref{lemma:logarithmic_double_witness}).

To disprove that two hypergraphs $\G$ and $\H$ are dual, we know that it is sufficient to show that at least one of them is not simple, or that the intersection property does not hold between them, or that there is a new transversal of $\G$ \Wrt $\H$ (see \cref{lemma:no_new_transversal_iff_dual}).
Ruled out the first two conditions, intuitively, our nondeterministic algorithm, to compute such a new transversal, guesses in the tree $\Tree(\G,\H)$ a path of logarithmic length leading to a node $p$ such that ${\sigma_p}^+$ is a double witness (see \cref{lemma:logarithmic_double_witness}).
More precisely, \textsc{\NDAlg} nondeterministically generates a \emph{set} $\Sigma$ of logarithmic\nbdash-many labels, which is then checked to verify whether it is possible to derive from it a new transversal of $\G$ \Wrt $\H$.

To be more formal, for a hypergraph $\G$, let $\LabelsOfHyp_\G$ denote the set of all possible labels that can potentially be generated from edges of $\G$ and vertices, i.e., $\LabelsOfHyp_\G = \{\ExcNode{v}\mid v\in V\}\cup\{\IncNodeCrit{v}{G}\mid G\in\G \land v\in G\}$.
A set of labels $\Sigma$ of the tree $\Tree(\G,\H)$ is any subset of $\LabelsOfHyp_\G$.
Note the difference between a path of $\Tree(\G,\H)$ and a set of labels of $\Tree(\G,\H)$.
The former is an ordered sequence of labels coherent with the structure of $\Tree(\G,\H)$, while the latter is just a set.
Given the above notation, we define the set
\begin{equation*}
\SetOfLogSeqs(\G,\H)=\{\Sigma\mid \Sigma \subseteq \LabelsOfHyp_\G \land 0\leq |\Sigma|\leq\lfloor\log |\H|\rfloor+1\}.
\end{equation*}
Given a set $\Sigma\in\SetOfLogSeqs(\G,\H)$, the following expressions
\begin{equation}\label{eq:assignment_from_sequence}
\begin{aligned}
\In(\Sigma) & =\bigcup_{\IncNodeCrit{v}{G}\in\Sigma}\{v\}\\
\Ex(\Sigma) & =\big(\bigcup_{\ExcNode{v}\in\Sigma}\{v\}\big)\cup\big(\bigcup_{\IncNodeCrit{v}{G}\in\Sigma}(G\setminus\{v\})\big)
\end{aligned}
\end{equation}
indicate the sets of the included and excluded vertices in $\Sigma$, respectively.
These two expressions are similar to the formulas to compute an assignment given a (valid) path in $\Tree(\G,\H)$ (Formula~\eqref{eq:assignment_from_path} of \cref{lemma:assignment_from_path}).
Since a set $\Sigma$ is merely a set of labels, it may happen that $\In(\Sigma)\cap\Ex(\Sigma)\neq\emptyset$.
When this is \emph{not} the case we say that $\Sigma$ is a \emph{consistent} set of labels.

Given a set of labels $\Sigma$, we define $\sigma(\Sigma)$ as the pair $\assign{\In(\Sigma),\Ex(\Sigma)}$.
If $\Sigma$ is consistent, then $\sigma(\Sigma)=\assign{\In(\Sigma),\Ex(\sigma)}$ is a (consistent) assignment as well.
By a slight abuse of terminology and notation, given a set of labels $\Sigma$, regardless of whether $\Sigma$ is actually consistent or not, we extend, in the natural way, the definitions given for (consistent) assignments to the pair $\sigma(\Sigma)=\assign{\In(\Sigma),\Ex(\Sigma)}$.

The following property is fundamental for the correctness of algorithm \textsc{\NDAlg}.
In particular, it highlights that the order in which the labels of a path $\Pi$ appear is not actually relevant to recognize that ${\sigma(\Pi)}^+$ is a witness.

\begin{lemma}\label{lemma:ipergrafi_non-dual_iff_witness_nel_set}
Let $\G$ and $\H$ be two hypergraphs satisfying the intersection property.
Then, there is a new transversal of $\G$ \Wrt $\H$ if and only if there is a consistent set $\Sigma\in\SetOfLogSeqs(\G,\H)$ such that $\sigma(\Sigma)^+$ is a double non\nbdash-duality witness.
\end{lemma}
\begin{proof}
$(\Rightarrow)$
By \cref{lemma:logarithmic_double_witness}, because $\G$ and $\H$ satisfy the intersection property, if there is a new transversal of $\G$ \Wrt $\H$, then, in $\Tree(\G,\H)$, there is a path $\Pi$ of length at most $\lfloor\log |\H|\rfloor+1$ such that ${\sigma(\Pi)}^+$ is a double non\nbdash-duality witness.
Let $\Sigma^\Pi$ be the set of labels containing exactly the labels of $\Pi$ ($\Sigma^\Pi$, unlike $\Pi$, is a set without any order over its elements).
Because the length of $\Pi$ is at most $\lfloor\log |\H|\rfloor+1$, $\Sigma^\Pi$ actually belongs to $\SetOfLogSeqs(\G,\H)$.
By comparing Formula~\eqref{eq:assignment_from_path} of \cref{lemma:assignment_from_path} and Formula~\eqref{eq:assignment_from_sequence} of this section, it is clear that the included and excluded vertices of $\sigma(\Pi)$ do not depend on the actual order of the labels in $\Pi$.
Hence, $\sigma(\Sigma^\Pi) = \sigma(\Pi)$ and also ${\sigma(\Sigma^\Pi)}^+ = {\sigma(\Pi)}^+$.
Given that $\Pi$ is a path in $\Tree(\G,\H)$, $\sigma(\Pi)$ is a (consistent) assignment by definition, and hence, from $\sigma(\Sigma^\Pi) = \sigma(\Pi)$ follows that $\Sigma$ is a consistent set.
To conclude, since $\sigma(\Pi)^+$ is a double witness and ${\sigma(\Sigma^\Pi)}^+ = {\sigma(\Pi)}^+$, ${\sigma(\Sigma^\Pi)}^+$ is a double witness as well.

$(\Leftarrow)$
Clearly, if there is a consistent set of labels $\Sigma\in\SetOfLogSeqs(\G,\H)$ such that $\sigma(\Sigma)^+$ is a double witness, then there is a new transversal of $\G$ \Wrt $\H$.
\end{proof}

We now present our nondeterministic algorithm.
The pseudo\nbdash-code of algorithm \textsc{\NDAlg} is listed as \cref{alg:non-det_algorithm}; ``\textbf{return accept}'' and ``\textbf{return reject}'' are two commands causing a transition to a final accepting state and to a final rejecting state, respectively, of a nondeterministic machine.

\begin{algorithm}[!ht]
\caption{A nondeterministic algorithm for \coDUAL.}\label{alg:non-det_algorithm}
\begin{algorithmic}[1]
\Procedure{\NDAlg}{$\G,\H$}
\State $\Sigma\gets{}$\textbf{guess}(a set of labels from $\SetOfLogSeqs(\G,\H)$);\label{line:nd-alg_guess}
\If{$\lnot$\Call{\CheckSimpleIP}{$\G,\H$}}\label{line:nd-alg_checkSimpleIP}
    \textbf{return accept};
\EndIf%
\If{$\lnot$\Call{\CheckConsistencySet}{$\G,\Sigma$}}\label{line:nd-alg_checkConsistencyGuess}
    \textbf{return reject};
\EndIf%
\If{\Call{\CheckDoubleWitnessAug}{$\G,\H,\Sigma$}}\label{line:nd-alg_checkDoubleWitnessAug}
    \textbf{return accept};
\EndIf%
\State\textbf{return reject};\label{line:nd-alg_final-rejection}
\EndProcedure
\end{algorithmic}
\end{algorithm}

The three checking\nbdash-procedures used in the algorithm implement the four deterministic tests needed after the guess has been carried out.
The aims of the subprocedures are the following.
\begin{description}[font=\normalfont\scshape,nosep]
\item[\CheckSimpleIP] checks whether the two hypergraphs $\G$ and $\H$ are simple and satisfy the intersection property.\footnote{This condition does not depend on the guessed set, and could thus be checked at the beginning of the algorithm, before the guess is made. However, for uniformity, and to adhere to a strict guess-and-check paradigm we check it after the guess.}

\item[\CheckConsistencySet] checks whether the guessed set of labels $\Sigma$ is actually consistent.

\item[\CheckDoubleWitnessAug] checks whether ${\sigma(\Sigma)}^+$ is a double witness.
In order to perform this check, Condition~\eqref{eq:double_witnessing_condition} of \cref{lemma:double_witnessing_condition} is evaluated on ${\sigma(\Sigma)}^+$.
\end{description}

The following property shows that algorithm \textsc{\NDAlg} is correct.
We remind the reader that a nondeterministic algorithm $A$ is correct for a decision problem $P$ if $A$ admits a computation branch terminating in an accepting state if and only if $A$ is executed on a `yes'\nbdash-instance of $P$.
\begin{theorem}
Let $\G$ and $\H$ be two hypergraphs.
Then, there is a computation branch of \textsc{\NDAlg}$(\G,\H)$ halting in an accepting state if and only if $\G$ and $\H$ are not dual.
\end{theorem}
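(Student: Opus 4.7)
The plan is to reduce the theorem directly to Lemma~\ref{lemma:ipergrafi_non-dual_iff_witness_nel_set}, which already gives the precise characterization of non\nbdash-duality that \textsc{\NDAlg} is designed to exploit. The structure of \textsc{\NDAlg} mirrors that lemma in an almost one\nbdash-to\nbdash-one way: the nondeterministic guess on line~\ref{line:nd-alg_guess} produces an element of $\SetOfLogSeqs(\G,\H)$, \textsc{\CheckSimpleIP} tests the first two conditions of the lemma, and \textsc{\CheckWitnessAug} tests the third condition by verifying Condition~\eqref{eq:witnessing_condition} on ${\sigma(\Sigma)}^+$. Hence the proof is essentially a bookkeeping argument over the two branches of the algorithm's acceptance.

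For the ``if'' direction I would assume that $\G$ and $\H$ are not dual and apply Lemma~\ref{lemma:ipergrafi_non-dual_iff_witness_nel_set} to obtain three mutually non\nbdash-exclusive cases. If $\G$ or $\H$ is not simple, or the intersection property fails, then \emph{any} guessed $\Sigma\in\SetOfLogSeqs(\G,\H)$ (including, say, $\Sigma=\emptyset$, which is always in $\SetOfLogSeqs(\G,\H)$) leads to \textsc{\CheckSimpleIP} returning \valfalse on line~\ref{line:nd-alg_checkSimpleHP}, and the algorithm accepts. Otherwise, the lemma guarantees a set of labels $\Sigma^\ast\in\SetOfLogSeqs(\G,\H)$ such that ${\sigma(\Sigma^\ast)}^+$ satisfies Condition~\eqref{eq:witnessing_condition}; by guessing precisely this $\Sigma^\ast$ on line~\ref{line:nd-alg_guess}, the subsequent call to \textsc{\CheckWitnessAug} on line~\ref{line:nd-alg_checkwitnessaug} returns \valtrue, and that computation branch accepts.

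For the ``only if'' direction I would argue contrapositively by tracing the two possible acceptance points. If a branch of \textsc{\NDAlg} accepts at line~\ref{line:nd-alg_checkSimpleHP}, then \textsc{\CheckSimpleIP} reported that $\G$ or $\H$ is not simple, or that the intersection property fails between them; by Lemma~\ref{lemma:no_new_transversal_iff_dual}, $\G$ and $\H$ are not dual. If instead a branch accepts at line~\ref{line:nd-alg_checkwitnessaug}, then \textsc{\CheckWitnessAug} verified Condition~\eqref{eq:witnessing_condition} on ${\sigma(\Sigma)}^+$ for the guessed $\Sigma\in\SetOfLogSeqs(\G,\H)$; by Lemma~\ref{lemma:witnessing_condition} (and the observation, already made in the text just before Lemma~\ref{lemma:ipergrafi_non-dual_iff_witness_nel_set}, that possible inconsistency of $\sigma(\Sigma)$ does not affect the soundness of Condition~\eqref{eq:witnessing_condition}), ${\sigma(\Sigma)}^+$ is a witness, so a new transversal of $\G$ \Wrt $\H$ exists, and Lemma~\ref{lemma:no_new_transversal_iff_dual} again yields that $\G$ and $\H$ are not dual.

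There is no real obstacle here: the substantive work has already been done in Lemmas~\ref{lemma:no_new_transversal_iff_dual}, \ref{lemma:logarithmic_witness_or_saturated_precursor}, \ref{lemma:witnessing_condition}, and \ref{lemma:ipergrafi_non-dual_iff_witness_nel_set}. The only subtlety worth stating explicitly is that the guess space $\SetOfLogSeqs(\G,\H)$ used by \textsc{\NDAlg} is exactly the same object appearing in Lemma~\ref{lemma:ipergrafi_non-dual_iff_witness_nel_set}, so the theorem is really a direct translation of that lemma into the language of the nondeterministic procedure; no independent combinatorial argument is needed.
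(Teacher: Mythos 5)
Your proof is correct and follows essentially the same route as the paper's: both directions are handled by tracing the two acceptance points of \textsc{\NDAlg} and invoking Lemmas~\ref{lemma:no_new_transversal_iff_dual}, \ref{lemma:witnessing_condition}, and~\ref{lemma:ipergrafi_non-dual_iff_witness_nel_set} exactly as the paper does. No gaps; the observation that inconsistency of $\sigma(\Sigma)$ does not affect the soundness of Condition~\eqref{eq:witnessing_condition} is the one subtlety, and you handle it as the paper does.
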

\begin{proof}
$(\Rightarrow)$
Assume that there is a computation branch of \textsc{\NDAlg}$(\G,\H)$ halting in an accepting state.
A transition to an accepting state may happen only at line~\ref*{line:nd-alg_checkSimpleIP} or~\ref*{line:nd-alg_checkDoubleWitnessAug}.
If such a transition happens at line~\ref*{line:nd-alg_checkSimpleIP}, then $\G$ or $\H$ is not simple, or $\G$ and $\H$ do not satisfy the intersection property.
Hence, by \cref{lemma:no_new_transversal_iff_dual}, $\G$ and $\H$ are not dual.

If a transition to the accepting state occurs at line~\ref*{line:nd-alg_checkDoubleWitnessAug}, it means that the execution flow of the algorithm has reached that point, implying that the test performed at line~\ref*{line:nd-alg_checkConsistencyGuess} passed.
Therefore, the guessed set of labels $\Sigma$ is consistent.
The fact that the check at line~\ref*{line:nd-alg_checkDoubleWitnessAug} is successful implies that $\sigma(\Sigma)^+$ is a double witness.
Thus, by \cref{lemma:ipergrafi_non-dual_iff_witness_nel_set}, there is a new transversal of $\G$ \Wrt $\H$, and hence, by \cref{lemma:no_new_transversal_iff_dual}, $\G$ and $\H$ are not dual.

$(\Leftarrow)$
Let us now assume that $\G$ and $\H$ are not dual.
By \cref{lemma:no_new_transversal_iff_dual}, $\G$ or $\H$ is not simple, or they do not satisfy the intersection property, or there is a new transversal of $\G$ \Wrt $\H$.

If $\G$ or $\H$ is not simple, this condition is recognized by the algorithm at line~\ref*{line:nd-alg_checkSimpleIP} and the algorithm correctly moves to an accepting state.
The same happens in case $\G$ and $\H$ do not satisfy the intersection property.

Consider now the case in which $\G$ and $\H$ are simple and satisfy the intersection property.
Since $\G$ and $\H$ are non\nbdash-dual, by \cref{lemma:no_new_transversal_iff_dual}, there is a new transversal of $\G$ \Wrt $\H$.

Because $\G$ and $\H$ satisfy the intersection property, by \cref{lemma:ipergrafi_non-dual_iff_witness_nel_set}, among the sets $\Sigma$ guessed by the algorithm at line~\ref*{line:nd-alg_guess} there must be a consistent one such that $\sigma(\Sigma)^+$ is a double witness.
This is recognized by the algorithm at line~\ref*{line:nd-alg_checkDoubleWitnessAug} and the algorithm correctly moves to an accepting state.
\end{proof}

Note that the approach of extending assignments used in our paper, while partly inspired by \citeauthor{Gaur1999}'s ideas, is fundamentally different from the method used in \citeauthor{Gaur1999}'s deterministic algorithm~\citep{Gaur1999,Gaur2004}.
In particular, \citeauthor{Gaur1999}'s algorithm may extend the set $\In$ of included vertices of an intermediate assignment $\sigma=\assign{\In,\Ex}$ in a single step by several vertices and not by just one.
In our approach this is only possible for end\nbdash-nodes of the path.
Moreover, algorithm \textsc{\NDAlg} could identify a witness by guessing a path of logarithmic length that is not a legal path according to \citeauthor{Gaur1999} because the single assignment\nbdash-extensions are not chosen according to frequency counts.
In fact, unlike \citeauthor{Gaur1999}'s algorithm, \textsc{\NDAlg} performs frequency counts only at the terminal nodes of a path.

\subsection{Logical analysis of the \texorpdfstring{{\sc \NDAlg}}{\NDAlg} algorithm}\label{sec:logic}
We will show that the deterministic tests performed by algorithm \textsc{\NDAlg} require (quite) low computational effort, and in particular they can be carried out within complexity class \TC0.
This will allow us to prove that $\NonDualProbHyp\in\GC{$\log^2 \InputSize$}{\TC0}$.
We begin by expressing the deterministic tests performed by \textsc{\NDAlg} in \FOC which is first order logic augmented with counting quantifiers ``$\exists! n$'' having the following semantics.
\FOC is a two\nbdash-sorted logic, i.e., a logic having two domain sets~\cite{Immerman1999}:
a numerical domain set containing objects used to interpret only numerical values, and another domain set containing all other objects.
Consider the formula $\Phi(n,x)=(\exists!n\ x)(\phi(x))$, where variable $x$ ranges over the non\nbdash-numerical domain objects and is bound by the counting quantifier $\exists!n$, and where variable $n$ ranges over the numerical domain objects and is left free by the counting quantifier.
$\Phi(n,x)$ is valid in all the interpretations in which $n$ is substituted by the exact number of non\nbdash-numerical domain values $a$ for which $\phi(a)$ evaluates to \valtrue.\footnote{For more on this, the reader is referred to any standard textbook on the topic. See, e.g., \citep{Ebbinghaus1994,Immerman1999,Ebbinghaus1999,Libkin2004}.}
Note that first order logic augmented with the majority quantifiers (\FOM) is known to be equivalent to \FOC~\cite{Immerman1999,Barrington1990,Vollmer1999}.
The model checking problem for both logics is complete for class \TC0~\cite{Vollmer1999,Immerman1999,Barrington1990}.

With a pair of hypergraphs $\tuple{\G,\H}$ we associate a relational structure $\A_{\tuple{\G,\H}}$.
Essentially, we represent hypergraphs through their incidence graphs.
In particular, the universe $A_{\tuple{\G,\H}}$ of $\A_{\tuple{\G,\H}}$ consists of an object for each vertex of $V$, an object for each hyperedge of the two hypergraphs, and two more objects, $o_\G$ and $o_\H$, for the two hypergraphs, i.e., $A_{\tuple{\G,\H}}=\{o_v\mid v\in V\}\cup\{o_G\mid G\in\G\}\cup\{o_H\mid H\in\H\}\cup\{o_\G,o_\H\}$.

The relations of $\A_{\tuple{\G,\H}}$ are as follows:
$\Vertex(x)$ is a unary relation indicating that object $x$ is a vertex;
$\Hyper(x)$ is a unary relation indicating that object $x$ is a hypergraph;
$\Edge(x,y)$ is a binary relation indicating that object $x$ is an edge of the hypergraph identified by object $y$;
and $\Incidence(x,y)$ is the binary incidence relation indicating that object $x$ is a vertex belonging to the edge identified by object $y$.

We also need to represent through relations the guessed set $\Sigma$.
Remember that in $\Sigma$ there are elements (which are labels) of two types: \ExcNode{v} where $v$ is a vertex, and \IncNodeCrit{v}{G} where $v$ is a vertex and $G$ is an edge of $\G$.\footnote{Note that an edge $G$ in a label of a path or a set is given by its identifier and not by the explicit list of its vertices.}
We assume a unary relation $\Sone$ storing those tuples $\tuple{v}$ where $v$ is a vertex such that $\ExcNode{v}\in\Sigma$, and we assume a binary relation $\Stwo$ containing those tuples $\tuple{v,G}$ where $v$ is a vertex and $G$ is an edge such that $\IncNodeCrit{v}{G}\in\Sigma$.

Remember that, by \cref{lemma:ipergrafi_non-dual_iff_witness_nel_set}, it is sufficient to guess a set of labels, and it is not required to guess a path.
This means that the exact order of the labels is not relevant, and hence the above relational representation of a guessed set is totally sufficient.

We use the following ``macros'' in our first order formulas:
\begin{align*}
v\in V & \equiv \Vertex(v)\displaybreak[0]\\
g\in\G & \equiv \Hyper(o_\G)\land\Edge(g,o_\G)\displaybreak[0]\\
h\in\H & \equiv \Hyper(o_\H)\land\Edge(h,o_\H)\displaybreak[0]\\
v\in g & \equiv \Incidence(v,g)
\end{align*}

We are now ready to prove some intermediate results.
\begin{lemma}\label{lemma:complexity_check_simple}
Let $\G$ be a hypergraph.
Deciding whether $\G$ is simple is expressible in \FO.
\end{lemma}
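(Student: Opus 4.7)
The plan is to translate the definition of simplicity directly into a first-order sentence over the relational signature described above. Recall that $\G$ is simple (Sperner) precisely when no edge of $\G$ is contained in any other edge of $\G$, that is, for every pair of distinct edges $G_1,G_2\in\G$ we have $G_1\not\subseteq G_2$. Negating set-containment inside FO is standard: $G_1\not\subseteq G_2$ is expressed by the existence of a vertex $v$ with $v\in G_1$ and $v\notin G_2$. Since it suffices to assert this for every ordered pair $(G_1,G_2)$ of distinct edges (the symmetric case is obtained by swapping), a single universal quantification over ordered pairs captures the property.

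Concretely, using the macros introduced just before the lemma, I would define the sentence
\begin{equation*}
\varphi_{\Simple} \;\equiv\; \forall g_1\,\forall g_2\;\bigl((g_1\in\G \,\land\, g_2\in\G \,\land\, g_1\neq g_2)\;\to\;\exists v\,(v\in V \,\land\, v\in g_1 \,\land\, \lnot(v\in g_2))\bigr).
\end{equation*}
Unfolding the macros, this becomes a bona fide FO formula over the signature of $\A_{\tuple{\G,\H}}$ using only the relations $\Hyper$, $\Edge$, $\Vertex$ and $\Incidence$ (with the constant $o_{\G}$ identifying the hypergraph of interest), together with equality. Evaluating $\varphi_{\Simple}$ on the structure $\A_{\tuple{\G,\H}}$ returns \valtrue\ if and only if no edge of $\G$ is properly contained in, or equal to another edge of $\G$, which is exactly the definition of simplicity.

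There is essentially no obstacle: the property of being simple is a $\Pi_2$ statement that mentions only membership of vertices in edges and (in)equality of edges, both of which are directly accessible in our signature. The only mild point worth mentioning is that the quantifier $\forall g_1\forall g_2$ ranges over all objects of the universe, hence the guards $g_1\in\G$ and $g_2\in\G$ (which expand to $\Hyper(o_{\G})\land\Edge(g_i,o_{\G})$) are needed to restrict the quantification to edges of $\G$; symmetrically, the guard $v\in V$ on the existential quantifier restricts $v$ to objects representing vertices. With these guards in place, $\varphi_{\Simple}$ correctly and in FO expresses the simplicity of $\G$, which is what the lemma asserts.
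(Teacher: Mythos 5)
Your proposal is correct and matches the paper's own proof essentially verbatim: both translate the Sperner condition into a universally quantified implication over ordered pairs of distinct edges, witnessing non-containment by an existentially quantified vertex, using the same guarded macros. No further comment is needed.
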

\begin{proof} We know that a hypergraph $\G$ is simple if and only if, for all pairs of distinct edges $G,H\in\G$, $G\not\subseteq H$.
Hence, the formula checking whether a hypergraph is simple is:
\begin{equation*}
\Simple(x)\equiv \Hyper(x)\land (\forall g,h)((g\in x\land h\in x\land g\neq h)\rightarrow (\exists v)(v\in V\land v\in g\land \lnot(v\in h))).
\end{equation*}
\end{proof}

\begin{lemma}\label{lemma:complexity_check_hitting_property}
Let $\G$ and $\H$ be two hypergraphs.
Deciding whether $\G$ and $\H$ satisfy the intersection property is expressible in \FO.
\end{lemma}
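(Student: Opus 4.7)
The plan is to unwind the definition of the intersection property and translate it directly into a first order formula over the relational representation $\A_{\tuple{\G,\H}}$ introduced just before Lemma~\ref{lemma:complexity_check_simple}. Recall that, by definition, hypergraphs $\G$ and $\H$ satisfy the intersection property precisely when every edge $G\in\G$ is a transversal of $\H$ and, symmetrically, every edge $H\in\H$ is a transversal of $\G$. Because a transversal must hit every edge, this is equivalent to the single condition that for every pair $G\in\G$ and $H\in\H$ there exists a vertex $v\in V$ with $v\in G$ and $v\in H$.

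Using the shorthand macros already fixed in the excerpt ($g\in\G$, $h\in\H$, $v\in V$, and $v\in g$), the proof will exhibit the formula
\begin{equation*}
\IntersectionProperty \equiv (\forall g)(\forall h)\bigl((g\in\G \land h\in\H) \rightarrow (\exists v)(v\in V \land v\in g \land v\in h)\bigr),
\end{equation*}
and then verify, essentially by inspection, that a pair of hypergraphs $\tuple{\G,\H}$ satisfies the intersection property if and only if $\A_{\tuple{\G,\H}}\models\IntersectionProperty$. The $(\Rightarrow)$ direction is immediate since for any two edges one can pick a witness vertex from their non\nbdash-empty intersection; the $(\Leftarrow)$ direction follows because the witness required by the formula is precisely a vertex in $G\cap H$, so no edge $G\in\G$ misses any edge $H\in\H$ and vice\nbdash-versa.

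Since the formula uses only the standard Boolean connectives and ordinary (non\nbdash-counting) quantifiers---a $\forall\forall\exists$ prefix over the universe of $\A_{\tuple{\G,\H}}$, with atomic formulas built from $\Vertex$, $\Hyper$, $\Edge$, and $\Incidence$---it is a formula of plain \FO. Thus deciding the intersection property for $\G$ and $\H$ reduces to model checking a fixed \FO sentence on $\A_{\tuple{\G,\H}}$, which proves the lemma. There is no real obstacle here: the statement is essentially a direct syntactic translation, and the only thing worth remarking is that---as with $\Simple$ in Lemma~\ref{lemma:complexity_check_simple}---no counting is needed, so the stronger logic \FOC is not required for this step.
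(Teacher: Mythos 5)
Your proposal is correct and coincides with the paper's own proof: you exhibit exactly the same \FO sentence $(\forall g,h)((g\in\G\land h\in\H)\rightarrow(\exists v)(v\in V\land v\in g\land v\in h))$ over the structure $\A_{\tuple{\G,\H}}$ and justify it by the same direct translation of the definition. Nothing further is needed.
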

\begin{proof}
Two hypergraphs $\G$ and $\H$ satisfy the intersection property if and only if, for every pair of edges $G\in\G$ and $H\in\H$, $G\cap H\neq\emptyset$.
Hence, the formula encoding this test is:
\begin{equation*}
\IntersectionProperty\equiv (\forall g,h)((g\in\G \land h\in\H)\rightarrow (\exists v)(v\in V\land v\in g \land v\in h)).
\end{equation*}
\end{proof}

We say that the guess is congruent (which is different from being consistent) if, for every guessed tuple $\tuple{x}\in\Sone$, object $x$ is actually a vertex, and, for every tuple $\tuple{x,y}\in\Stwo$, object $y$ is actually an edge belonging to $\G$ containing the vertex identified by object $x$.

\begin{lemma}\label{lemma:complexity_check_consistency_guess}
Let $\G$ and $\H$ be two hypergraphs, and let $\Sigma$ be a (guessed) set of labels of $\Tree(\G,\H)$.
Deciding the congruency and the consistency of $\Sigma$ is expressible in \FO.
\end{lemma}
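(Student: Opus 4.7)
The plan is to translate the congruency condition directly into a single universal first order formula over the relational structure $\A_{\tuple{\G,\H}}$ augmented with the guess relations $\Sone$ and $\Stwo$. Recall that congruency requires two things: every singleton tuple $\tuple{x}\in\Sone$ (corresponding to a supposed label $\ExcNode{x}$) has $x$ actually being a vertex; and every binary tuple $\tuple{x,y}\in\Stwo$ (corresponding to a supposed label $\IncNodeCrit{x}{y}$) has $y$ being an edge of $\G$ and $x$ being a vertex of $V$ belonging to $y$. Each of these is a conjunction of atomic facts over the structure, so both can be uniformly expressed using the macros $v\in V$, $g\in \G$, and $v\in g$ already defined above.

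Concretely, I would introduce a formula of the shape
\begin{equation*}
\CongruentGuess \equiv (\forall x)\bigl(\Sone(x)\rightarrow x\in V\bigr) \land (\forall x,y)\bigl(\Stwo(x,y)\rightarrow (x\in V \land y\in\G \land x\in y)\bigr).
\end{equation*}
The first conjunct enforces congruency of the exclusion labels, and the second conjunct enforces congruency of the critical-inclusion labels, exactly mirroring the two clauses of the definition. Soundness and completeness of the translation are immediate from unfolding the macros: the structure $\A_{\tuple{\G,\H}}$ (extended with $\Sone$ and $\Stwo$) satisfies $\CongruentGuess$ if and only if, for every tuple in $\Sone$ the indicated object is a vertex, and for every tuple in $\Stwo$ the second object is an edge of $\G$ while the first is a vertex lying in that edge.

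Since $\CongruentGuess$ uses only $\forall$ quantification, Boolean connectives, and atomic relations already present in $\A_{\tuple{\G,\H}}$ (together with the two guess relations $\Sone$ and $\Stwo$), it is a plain first order formula. Therefore congruency of $\Sigma$ is decidable in \FO, completing the lemma. There is no essential obstacle here; the only care needed is to make sure the macros $v\in V$, $g\in\G$, and $v\in g$ are applied to the correct argument positions of $\Sone$ and $\Stwo$, which is why I spell out all three conjuncts $x\in V \land y\in\G \land x\in y$ explicitly in the second clause rather than relying on contextual shortcuts.
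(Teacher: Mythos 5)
Your proposal is correct and is essentially the same as the paper's own proof: the paper exhibits the formula $\CongruentGuess\equiv (\forall v)(\Sone(v)\rightarrow v\in V)\land(\forall w,g)(\Stwo(w,g)\rightarrow w\in V\land g\in\G\land w\in g)$, which is identical to yours up to renaming of variables. No differences worth noting.
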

\begin{proof}
The congruency of the guessed set can be checked through:
\begin{equation*}
\CongruentGuess\equiv (\forall v)(\Sone(v)\rightarrow v\in V)\land(\forall w,g)(\Stwo(w,g)\rightarrow w\in V\land g\in\G\land w\in g).
\end{equation*}
The consistency check is expressed as the conjunction of two formulas.
The first verifies whether there is an \emph{in}consistency on a specific vertex, and the second checks the overall consistency.
\begin{align*}
\NotConsistency(w) & \equiv w\in V\land (\exists g)(\Stwo(w,g)\land(\Sone(w)\lor(\exists v,h)(\Stwo(v,h)\land v\neq w\land w\in h)))\\
\ConsistentGuess & \equiv (\forall v)(v\in V\rightarrow\lnot\NotConsistency(v)).
\end{align*}
\end{proof}

To conclude our complexity analysis of the deterministic tests performed by \textsc{\NDAlg}, let us formulate in \FOC the property that ${\sigma(\Sigma)}^+$ is a double non\nbdash-duality witness.

\begin{lemma}\label{lemma:complexity_check_guess_augmented_witness}
Let $\G$ and $\H$ be two hypergraphs, and let $\Sigma$ be a (guessed) set of labels of $\Tree(\G,\H)$.
Deciding whether ${\sigma(\Sigma)}^+$ is a double non\nbdash-duality witness is expressible in \FOC.
\end{lemma}
\begin{proof}
Let $\sigma(\Sigma)=\assign{\In(\Sigma),\Ex(\Sigma)}$ be the pair associated with $\Sigma$.
Essentially we need to prove that is possible to express in \FOC Condition~\eqref{eq:double_witnessing_condition} of \cref{lemma:double_witnessing_condition} on $\sigma(\Sigma)^+$.
Remember that $\sigma(\Sigma)$ is not explicitly represented, but it can be evaluated from $\Sigma$ through Formula~\eqref{eq:assignment_from_sequence} of \cref{sec:non-deterministic_algorithm}.

Let us define the following two formulas serving the purpose of evaluating whether a vertex belongs to $\In(\Sigma)$ or $\Ex(\Sigma)$, respectively.
\begin{align*}
\IGuess(v) & \equiv v\in V\land (\exists g)(\Stwo(v,g))\displaybreak[0]\\
\EGuess(v) & \equiv v\in V\land (\Sone(v)\lor (\exists w,g)(\Stwo(w,g)\land w\neq v\land v\in g)).
\end{align*}

We now exhibit the formulas to verify whether a given free vertex is frequent in $\sigma(\Sigma)$.
These formulas are the only ones in which we actually use counting quantifiers.
In the following formulas we will use predicate $\Plus(x,y,z)$, which holds \valtrue whenever $x+y=z$, and predicate $\Succ(x,y)$, which holds \valtrue whenever $x$ and $y$ are two domain values such that $y$ is the immediate successor of $x$ in the domain ordering.
Remember, indeed, that relational structures are assumed to have totally ordered domains (and predicate ``$\Prec$'' allow us to test the ordering), and to have a predicate $\Bit(i,j)$ that holds \valtrue whenever the $j^\text{th}$ bit of the binary representation of number $i$ is $1$. These assumptions allow to express in first\nbdash-order logic predicates $\Plus(x,y,z)$ and $\Succ(x,y)$ (see Section~1.2 of~\cite{Immerman1999}).

Since we need to evaluate whether a vertex $v$ is frequent in $\sigma(\Sigma)$, we have to check whether $v$ belongs to at least $\lceil|\Com(\sigma(\Sigma))|/2\rceil$ edges of $\Com(\sigma(\Sigma))$.
So, we exhibit a formula $\Half(x,y)$ which holds \valtrue whenever $y=\lceil x/2\rceil$:
\begin{equation*}
\Half(x,y)\equiv \Plus(y,y,x)\lor(\exists z)(\Plus(y,y,z)\land\Succ(x,z)).
\end{equation*}

The following formulas evaluate whether an edge belongs to $\Com(\sigma(\Sigma))$, the number of edges in $\Com(\sigma(\Sigma))$, the number of edges in $\Com(\sigma(\Sigma))$ containing a given vertex $v$, and whether $v$ is frequent in $\sigma(\Sigma)$, respectively (remember that being either frequent or infrequent is a property of free vertices).
\begin{gather*}
\begin{aligned}
\ComHAfterGuess(h) & \equiv h\in\H\land (\forall v)((v\in V\land v\in h)\rightarrow\lnot \EGuess(v))\\
\CountCompEdges(n) & \equiv (\exists!n\ h)(h\in\H\land \ComHAfterGuess(h))\\
\CountCompEdgesIncl(v,n) & \equiv v\in V\land (\exists!n\ h)(h\in\H\land \ComHAfterGuess(h)\land v\in h)
\end{aligned}\\
\begin{split}
\ALHCompEdges(v) & \equiv{} v\in V\land \lnot\IGuess(v)\land\lnot\EGuess(v)\land (\exists n,m,o)\\
& (\CountCompEdges(n)\land \CountCompEdgesIncl(v,m)\land \Half(n,o) \land (o = m\lor o \Prec m)).
\end{split}
\end{gather*}
After having defined a formula to evaluate whether a vertex is frequent in $\sigma(\Sigma)$, we show the formulas computing the included and excluded vertices of the augmented pair $\sigma(\Sigma)^+$, respectively:
\begin{align*}
\IModified(v) & \equiv v\in V\land (\IGuess(v)\lor \ALHCompEdges(v))\displaybreak[0]\\
\EModified(v) & \equiv v\in V\land (\EGuess(v)\lor \lnot\ALHCompEdges(v)).
\end{align*}

Now we show the formulas encoding the evaluation of Condition~\eqref{eq:double_witnessing_condition} of \cref{lemma:double_witnessing_condition} on $\sigma(\Sigma)^+$.
The formulas evaluating whether an edge belongs to $\Sep(\sigma(\Sigma)^+)$ and to $\Com(\sigma(\Sigma)^+)$ are, respectively:
\begin{align*}
\SepGModified(g) & \equiv g\in\G\land (\forall v)((v\in V\land v\in g)\rightarrow \lnot\IModified(v))\displaybreak[0]\\
\ComHModified(h) & \equiv h\in\H\land (\forall v)((v\in V\land v\in h)\rightarrow \lnot\EModified(v)).
\end{align*}

Finally, the formula verifying that $\sigma(\Sigma)^+$ meets Condition~\eqref{eq:double_witnessing_condition} of \cref{lemma:double_witnessing_condition} is as follows.
\begin{equation*}
\CheckGuessAugDoubleWitness \equiv{} (\forall g)(g\in\G\rightarrow \lnot \SepGModified(g)) \land (\forall h)(h\in\H\rightarrow \lnot \ComHModified(h).
\end{equation*}
\end{proof}

\subsection{Putting it all together}\label{sec:complexity_results}
We are now ready to prove our main results.
\begin{theorem}\label{theo:complexity_non-dual}
Let $\G$ and $\H$ be two hypergraphs.
Then, deciding whether $\G$ and $\H$ are not dual is feasible in \GC{$\log^2 \InputSize$}{\TC0}.
\end{theorem}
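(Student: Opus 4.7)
The plan is to piece together the correctness of algorithm \textsc{\NDAlg} (just established) with the logical descriptions of its deterministic subroutines, and then invoke the well-known equivalence between logtime-uniform $\TC0$ and \FOC. I would proceed in three short steps: bound the guess, merge the checks into a single \FOC sentence, and invoke descriptive-complexity to land the $\TC0$ check.

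First I would account for the nondeterministic guess performed at line~\ref{line:nd-alg_guess}. A set $\Sigma\in\SetOfLogSeqs(\G,\H)$ contains at most $\lfloor\log|\H|\rfloor+1$ labels, each being either of the form \ExcNode{v} (a single vertex) or \IncNodeCrit{v}{G} (a vertex together with an edge of $\G$). Since $|V|,|\G|,|\H|\leq\InputSize$, every label can be encoded using $O(\log\InputSize)$ bits, so the whole set requires $O(\log^2\InputSize)$ bits. These bits are naturally stored in the two relations $\Sone$ and $\Stwo$ that expand the relational structure $\A_{\tuple{\G,\H}}$ introduced in Section~\ref{sec:logic}.

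Second, I would combine the three deterministic tests of \textsc{\NDAlg} into a single \FOC sentence. Lemmas~\ref{lemma:complexity_check_simple} and~\ref{lemma:complexity_check_hitting_property} express \textsc{\CheckSimpleIP} in \FO, and Lemma~\ref{lemma:complexity_check_guess_augmented_witness} expresses \textsc{\CheckWitnessAug} in \FOC. Taking the disjunction $\Phi\equiv\neg\Simple(o_\G)\lor\neg\Simple(o_\H)\lor\neg\IntersectionProperty\lor\GuessModifiedWitness$ yields a single fixed \FOC sentence over the expanded vocabulary such that, by Lemma~\ref{lemma:ipergrafi_non-dual_iff_witness_nel_set}, some guess $\Sigma$ makes $\Phi$ hold on $\A_{\tuple{\G,\H}}$ iff $\G$ and $\H$ are not dual; this mirrors exactly the acceptance condition of \textsc{\NDAlg}. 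Since model checking for \FOC on finite ordered structures is complete for logtime-uniform $\TC0$~\citep{Immerman1999,Barrington1990,Vollmer1999}, the deterministic portion of the algorithm is in $\TC0$, and pairing it with the $O(\log^2\InputSize)$-bit guess places $\coDual$ in $\GC{\log^2\InputSize}{\TC0}$.

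The subtle point — already anticipated in the discussion preceding Lemma~\ref{lemma:ipergrafi_non-dual_iff_witness_nel_set} — is that the guessed $\Sigma$ need \emph{not} be consistent, and I deliberately do not include a consistency test in $\Phi$ (such a test is expressible in \FO by Lemma~\ref{lemma:complexity_check_consistency_guess}, but adding it could rule out some legitimate witnessing guesses). The reason this is safe is that each of the two disjuncts of Condition~\eqref{eq:witnessing_condition} inspects only one of $\In$ or $\Ex$ of the augmented pair $\sigma(\Sigma)^+$, so an inconsistent $\Sigma$ cannot falsely certify non-duality. Verifying this soundness carefully — rather than the routine translation of the subroutines into \FOC — is what I expect to be the main conceptual obstacle; once it is in place, the complexity bound follows immediately from the two ingredients above.
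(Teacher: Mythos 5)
Your proposal is correct and follows essentially the same route as the paper's proof: bound the guess of $\Sigma\in\SetOfLogSeqs(\G,\H)$ by $O(\log^2\InputSize)$ bits, express the deterministic checks via Lemmas~\ref{lemma:complexity_check_simple}, \ref{lemma:complexity_check_hitting_property}, and~\ref{lemma:complexity_check_guess_augmented_witness} in \FOC, appeal to Lemma~\ref{lemma:ipergrafi_non-dual_iff_witness_nel_set} for completeness of the guess, and invoke the equivalence of \FOC model checking with logtime-uniform \TC0. The only (cosmetic) divergence is that you fold everything into one explicit sentence $\Phi$ and deliberately drop the congruency/consistency test of Lemma~\ref{lemma:complexity_check_consistency_guess} with a correct soundness argument, whereas the paper simply lists that lemma among the checks; both variants are sound for the same reason you give.
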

\begin{proof}
\Cref{lemma:complexity_check_simple,lemma:complexity_check_hitting_property,lemma:complexity_check_consistency_guess,lemma:complexity_check_guess_augmented_witness} show that the deterministic checks performed by algorithm \textsc{\NDAlg} are expressible in \FOC.
Hence, these tests are feasible in logtime\nbdash-uniform \TC0~\cite{Immerman1999,Vollmer1999,Barrington1990}.

Moreover, by analyzing algorithm \textsc{\NDAlg}, clearly only $O(\log^2 \InputSize)$ nondeterministic bits are sufficient to be guessed to properly identify a (double) non\nbdash-duality witness.
Indeed, let us assume that $\G$ and $\H$ are not dual.
If $\G$ or $\H$ is not simple, or $\G$ and $\H$ do not satisfy the intersection property, then the guessed set $\Sigma$ is completely ignored, because $\G$ and $\H$ are directly recognized to be non\nbdash-dual, and hence is totally irrelevant what the guessed bits are.
On the other hand, if $\G$ and $\H$ are simple and satisfy the intersection property, since they are not dual, by \cref{lemma:no_new_transversal_iff_dual}, there is a new transversal of $\G$ \Wrt $\H$.
Therefore, by \cref{lemma:ipergrafi_non-dual_iff_witness_nel_set}, there is in $\SetOfLogSeqs(\G,\H)$ a set $\Sigma$ with $O(\log |\H|)$ elements such that $\sigma(\Sigma)^+$ is a double witness.
Remember that our definition of the size of the input of \DUAL is $\InputSize=\|\G\|+\|\H\|$, hence the number of elements of $\Sigma$ is also $O(\log \InputSize)$.
Since, by definition, $O(\log \InputSize)$ bits are sufficient to represent any vertex or edge ID of the input hypergraphs, each label of $\Sigma$ can be represented with only $O(\log \InputSize)$ bits.
By this, the whole set $\Sigma$ can be correctly represented and stored in the (set) variable $\Sigma$ with $O(\log^2 \InputSize)$ bits.

Therefore, \NonDualProbHyp belongs to \GC{$\log^2 \InputSize$}{\TC0}.
\end{proof}

From the previous theorem the following corollaries follows immediately, the first of which proves that the conjecture stated by \citet{gott13} actually holds.
\begin{corollary}\label{corol:DUAL_in_GC-log2-LOGSPACE}
Let $\G$ and $\H$ be two hypergraphs.
Deciding whether $\G$ and $\H$ are not dual is feasible in \GC{$\log^2 \InputSize$}{\LogSpace}.
\end{corollary}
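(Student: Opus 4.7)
The plan is to obtain this corollary as an immediate consequence of Theorem~\ref{theo:complexity_non-dual}. That theorem places $\coDual$ in \GC{$\log^2 \InputSize$}{\TC0}, so the only thing left to invoke is the standard inclusion between complexity classes that bounds the checking part of the \Gtc paradigm.

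Concretely, I would argue as follows. By Theorem~\ref{theo:complexity_non-dual}, there exists a nondeterministic procedure that, on input $\tuple{\G,\H}$, guesses $O(\log^2 \InputSize)$ bits and then runs a deterministic checker that lies in logtime-uniform \TC0. It is well known (see, e.g., \citep{Vollmer1999,Immerman1999}) that $\TC0 \subseteq \LogSpace$, since any logtime-uniform $\TC0$ circuit family can be simulated by a logspace-bounded deterministic Turing machine. Therefore the same checker, viewed now as a \LogSpace procedure, witnesses membership of $\coDual$ in \GC{$\log^2 \InputSize$}{\LogSpace}. The guessing part is untouched: the same nondeterministic string of $O(\log^2 \InputSize)$ bits used to encode the guessed set $\Sigma \in \SetOfLogSeqs(\G,\H)$ serves for both complexity classes.

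There is essentially no obstacle here. The only thing to be mildly careful about is the notion of uniformity: the inclusion $\TC0 \subseteq \LogSpace$ is meaningful precisely for the logtime-uniform version of $\TC0$ (which, as stressed right after Lemma~\ref{lemma:complexity_check_guess_augmented_witness} and in the proof of Theorem~\ref{theo:complexity_non-dual}, is the version captured by \FOC model checking and used throughout). Since the deterministic checks performed by \textsc{\NDAlg} were shown in Lemmas~\ref{lemma:complexity_check_simple}, \ref{lemma:complexity_check_hitting_property}, \ref{lemma:complexity_check_consistency_guess}, and~\ref{lemma:complexity_check_guess_augmented_witness} to be expressible as fixed \FOC formulas, the corresponding \TC0 circuit family is logtime-uniform, and thus the inclusion into \LogSpace applies. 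The corollary then follows in one line from Theorem~\ref{theo:complexity_non-dual}.
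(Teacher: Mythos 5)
Your proof is correct and matches the paper's own argument, which likewise derives the corollary in one line from Theorem~\ref{theo:complexity_non-dual} together with the well-known inclusion $\TC0\subseteq\LogSpace$. Your added remark on logtime-uniformity is a sensible precision but does not change the route.
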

\begin{proof}
Follows from \cref{theo:complexity_non-dual} and the inclusion $\TC0\subseteq\LogSpace$.
\end{proof}

\begin{corollary}[\citep{gott13}]\label{corol:DUAL_in_DSPACE_log2_n}
Let $\G$ and $\H$ be two hypergraphs.
Deciding whether $\G$ and $\H$ are dual is feasible in $\DSpace[\log^2 \InputSize]$.
\end{corollary}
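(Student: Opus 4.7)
The plan is to derive this corollary as a direct consequence of Corollary~\ref{corol:DUAL_in_GC-log2-LOGSPACE} together with two standard facts: the simulation of a limited nondeterministic guess by deterministic enumeration, and the closure of deterministic space classes under complement.

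First, I would invoke Corollary~\ref{corol:DUAL_in_GC-log2-LOGSPACE}, which places $\coDual$ in $\GC{\log^2 \InputSize}{\LogSpace}$. By definition of this class, there is a deterministic logspace verifier $M$ and a constant $c$ such that, on input $\tuple{\G,\H}$ of size $\InputSize$, the instance belongs to $\coDual$ iff there exists a guess string $w$ of length at most $c\log^2 \InputSize$ accepted by $M(\tuple{\G,\H},w)$. A deterministic machine can then decide $\coDual$ in space $O(\log^2 \InputSize)$ by simply enumerating all candidate guesses $w$ one after the other (a counter of $c\log^2 \InputSize$ bits suffices to cycle through them) and, for each $w$, running the logspace verifier $M$ on $\tuple{\G,\H}$ with $w$ provided as an auxiliary read\nbdash-only input. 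The counter, the guess itself, and the workspace of $M$ all fit in $O(\log^2 \InputSize)$ space. This establishes the inclusion $\GC{\log^2 \InputSize}{\LogSpace}\subseteq\DSpace[\log^2 \InputSize]$, so $\coDual\in\DSpace[\log^2 \InputSize]$.

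Finally, to pass from $\coDual$ to $\Dual$, I would appeal to the fact that deterministic space classes are trivially closed under complementation: given a deterministic $O(\log^2 \InputSize)$-space decider for $\coDual$, simply swap its accepting and rejecting states to obtain a deterministic $O(\log^2 \InputSize)$-space decider for $\Dual$. Hence $\Dual\in\DSpace[\log^2 \InputSize]$, as claimed.

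There is essentially no obstacle here: the corollary is a packaging of Corollary~\ref{corol:DUAL_in_GC-log2-LOGSPACE} with two elementary observations. The only mild point worth spelling out in the write\nbdash-up is that the verifier $M$ may read the guess $w$ multiple times, which is unproblematic since $w$ is stored on the deterministic machine's work tape within the $O(\log^2 \InputSize)$ budget and can be accessed as often as needed.
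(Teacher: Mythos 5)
Your proof is correct and follows essentially the same route as the paper: it derives the result from Corollary~\ref{corol:DUAL_in_GC-log2-LOGSPACE} via the inclusion $\GC{$\log^2 \InputSize$}{\LogSpace}\subseteq\DSpace[\log^2 \InputSize]$ and closure of $\DSpace[\log^2 \InputSize]$ under complement. The only difference is that you spell out the standard enumeration-of-guesses argument and the accept/reject swap, which the paper simply cites as known facts.
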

\begin{proof}
$\coDUAL\in\DSpace[\log^2 \InputSize]$ follows from \cref{corol:DUAL_in_GC-log2-LOGSPACE} and the inclusion $\GC{$\log^2 \InputSize$}{\LogSpace}\subseteq\DSpace[\log^2 \InputSize]$.
Since $\DSpace[\log^2 \InputSize]$ is closed under complement, $\DUAL\in\DSpace[\log^2 \InputSize]$.
\end{proof}

\subsection{Computing a new transversal}\label{sec:complexity_computing}
In this section, we will show that computing a (not necessarily minimal) new transversal of $\G$ \Wrt $\H$ is feasible in space $O(\log^2 \InputSize)$.
First observe that it is possible to define a total order over $\SetOfLogSeqs(\G,\H)$.
Indeed, consider the totally ordered domain $A_{\tuple{\G,\H}}$ of the relational structure $\A_{\tuple{\G,\H}}$ (described in \cref{sec:logic}), and in particular consider the space of pairs $P=A_{\tuple{\G,\H}}\times A_{\tuple{\G,\H}}$.
First, let us define an order over $P$ by exploiting the ordering relation ``$\Prec$'' over $\A_{\tuple{\G,\H}}$ (see \cref{sec:logic}): Given two pairs $p_1=\tuple{a_1,b_1}$ and $p_2=\tuple{a_2,b_2}$ belonging to $P$, $p_1$ precedes $p_2$ in $P$ if and only if $(a_1 < a_2)\lor (a_1 = a_2 \land b_1 < b_2)$.

Now, we associate each label with a pair in $P$:
label $\IncNodeCrit{v}{G}$ is associated with pair $\tuple{o_v,o_G}\in P$, where $o_v$ and $o_G$ are the objects of $A_{\tuple{\G,\H}}$ associated with $v$ and $G$, respectively;
and label $\ExcNode{v}$ is associated with pair $\tuple{o_v,o_v}$, where $o_v$ is the object of $A_{\tuple{\G,\H}}$ associated with $v$.
Given two labels $\ell_1$ and $\ell_2$, $\ell_1$ precedes $\ell_2$ if and only if their respective associated pairs $p_1$ and $p_2$ are such that $p_1$ precedes $p_2$ in $P$.

To conclude, given two sets of labels $\Sigma_1,\Sigma_2\in\SetOfLogSeqs(\G,\H)$, $\Sigma_1$ precedes $\Sigma_2$ if and only if $\Sigma_1$ contains strictly fewer labels than $\Sigma_2$, or $\Sigma_1$ and $\Sigma_2$ contain the same number of labels and the least labels $\ell_1\in\Sigma_1$ and $\ell_2\in\Sigma_2$ on which $\Sigma_1$ and $\Sigma_2$ differ are such that $\ell_1$ precedes $\ell_2$.

Given this order, it is possible to enumerate all sets belonging to $\SetOfLogSeqs(\G,\H)$ without repetitions.

Consider now the following \emph{deterministic} algorithm \textsc{\ComputeNT} listed as \cref{alg:compute_det}, which, given two hypergraphs $\G$ and $\H$, successively generates all sets $\Sigma$ belonging to $\SetOfLogSeqs(\G,\H)$ to verify whether one of them is a good starting point to build a new transversal of $\G$ \Wrt $\H$.
A prerequisite for the correct execution of algorithm \textsc{\ComputeNT} is that the input hypergraphs satisfy the intersection property, and the purpose of procedure \textsc{\CheckIP} used in \textsc{\ComputeNT} is precisely that.
This is required because \textsc{\ComputeNT} looks for sets of labels only among those in $\SetOfLogSeqs(\G,\H)$, and \cref{lemma:ipergrafi_non-dual_iff_witness_nel_set} holds only if $\G$ and $\H$ satisfy the intersection property.
In the pseudo\nbdash-code of the algorithm, ``\textbf{return error}'' is a command triggering an error state/signal.

\begin{algorithm}[!ht]
\caption{A deterministic algorithm, derived from \textsc{\NDAlg}, computing a new transversal of $\G$ \Wrt $\H$. Here neither $\sigma(\Sigma)$ nor $\sigma(\Sigma)^+$ are explicitly stored, but they are dynamically computed as needed. We assume that $\sigma(\Sigma)=\assign{\In(\Sigma),\Ex(\Sigma)}$.}\label{alg:compute_det}
\begin{algorithmic}[1]
\Require
\Statex Hypergraphs $\G$ and $\H$ satisfy the intersection property.

\Procedure{\ComputeNT}{$\G,\H$}
\If{$\lnot$\Call{\CheckIP}{$\G,\H$}}\label{line:alg-compute-checkIP}
    \textbf{return error};
\EndIf
\For{each $\Sigma\colon \Sigma\in\SetOfLogSeqs(\G,\H)$}\label{line:alg-compute-for}
    \If{\Call{\CheckConsistencySet}{$\G,\Sigma$}}\label{line:alg-compute-checkConsistency}
        \If{\Call{\CheckDoubleWitnessAug}{$\G,\H,\Sigma$}}\label{line:alg-compute-DoubleWitness}
            \textbf{return} $\In(\Sigma)\cup\Freq(\sigma(\Sigma))$;
        \EndIf%
    \EndIf%
\EndFor
\State \textbf{return} \NullID;\label{line:alg-compute-outputNull}
\EndProcedure
\end{algorithmic}
\end{algorithm}

\begin{lemma}\label{lemma:algoritmo_quadratic_logspace}
Let $\G$ and $\H$ be two hypergraphs satisfying the intersection property.
Then, algorithm \textsc{\ComputeNT} correctly computes a new transversal of $\G$ \Wrt $\H$ (if it exists) in space $O(\log^2 \InputSize)$.
\end{lemma}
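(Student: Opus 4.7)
The plan is to split the proof into three independent claims: (i) every non\nbdash-NIL output of \textsc{\ComputeNT} is a new transversal of $\G$ \Wrt $\H$ (soundness); (ii) whenever such a new transversal exists, the algorithm does not fall through to line~\ref{line:alg-compute-outputNull} (completeness); and (iii) the total working space is $O(\log^2\InputSize)$.

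Soundness should follow directly from inspecting the two output statements. If line~\ref{line:test-PrimaParteWitness} fires, the equalities $\Sep({\sigma(\Sigma)}^+)=\Cov({\sigma(\Sigma)}^+)=\emptyset$ say exactly that $\In({\sigma(\Sigma)}^+)=\In(\Sigma)\cup\Freq(\sigma(\Sigma))$ meets every edge of $\G$ and contains no edge of $\H$, hence is a new transversal of $\G$ \Wrt $\H$. Symmetrically, if line~\ref{line:test-SecondaParteWitness} fires, $\Ex({\sigma(\Sigma)}^+)=\Ex(\Sigma)\cup\Infreq(\sigma(\Sigma))$ is a new transversal of $\H$ \Wrt $\G$, and Lemma~\ref{lemma:newTransversals_complementari} then turns its complement $V\setminus(\Ex(\Sigma)\cup\Infreq(\sigma(\Sigma)))$ into the new transversal of $\G$ \Wrt $\H$ that is produced as output. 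Potential inconsistency of $\sigma(\Sigma)$ will be harmless here, for exactly the reason already noted just before Lemma~\ref{lemma:ipergrafi_non-dual_iff_witness_nel_set}: each disjunct of Condition~\eqref{eq:witnessing_condition} depends only on one of $\In({\sigma(\Sigma)}^+)$, $\Ex({\sigma(\Sigma)}^+)$, and tests precisely that that set is a new transversal.

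For completeness I would invoke Lemma~\ref{lemma:ipergrafi_non-dual_iff_witness_nel_set}. Since by hypothesis $\G$ and $\H$ satisfy the intersection property, the existence of any new transversal, together with Lemma~\ref{lemma:no_new_transversal_iff_dual}, forces the two hypergraphs not to be dual; Lemma~\ref{lemma:ipergrafi_non-dual_iff_witness_nel_set} then yields some $\Sigma^*\in\SetOfLogSeqs(\G,\H)$ such that $\sigma(\Sigma^*)^+$ meets Condition~\eqref{eq:witnessing_condition} of Lemma~\ref{lemma:witnessing_condition}. Because the for\nbdash-loop enumerates $\SetOfLogSeqs(\G,\H)$ in the order $\LabelOrderPrec$ without repetition, $\Sigma^*$ is eventually visited and the algorithm outputs a new transversal before reaching line~\ref{line:alg-compute-outputNull}.

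The real work lies in the space bound, and the key invariant is that the algorithm stores $\Sigma$ and nothing else beyond logarithmic counters. By definition of $\SetOfLogSeqs$, the set $\Sigma$ contains at most $\lfloor\log|\H|\rfloor+1$ labels, each encodable with $O(\log\InputSize)$ bits, so $\Sigma$ fits in $O(\log^2\InputSize)$ bits; enumerating $\SetOfLogSeqs(\G,\H)$ under $\LabelOrderPrec$ will amount to in\nbdash-place successor computation on this compact representation. Crucially, the pair $\sigma(\Sigma)$, the partition of the free vertices into $\Freq$ and $\Infreq$, the augmented pair $\sigma(\Sigma)^+$, and the four derived sets $\Sep,\Com,\Mis,\Cov$ on $\sigma(\Sigma)^+$ will never be materialised: membership of a vertex or an edge in any of them can be decided on the fly, given the input and the stored $\Sigma$, using Formulas~\eqref{eq:assignment_from_sequence} and the counting predicates exhibited in the proof of Lemma~\ref{lemma:complexity_check_guess_augmented_witness}, each at a cost of $O(\log\InputSize)$ auxiliary workspace. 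The output itself will be produced by streaming through $V$ and emitting each vertex that satisfies the appropriate membership test, adding only another $O(\log\InputSize)$ of space. The main obstacle in this final step is resisting any temptation to store intermediate sets; once the logical framework from Section~\ref{sec:logic} is reused as a family of membership oracles, the total workspace indeed stays at $O(\log^2\InputSize)$.
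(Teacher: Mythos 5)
Your soundness and space arguments are correct and coincide with the paper's own: the two tests on ${\sigma(\Sigma)}^+$ are precisely the two disjuncts of Condition~\eqref{eq:witnessing_condition}, possible inconsistency of $\sigma(\Sigma)$ is immaterial because each disjunct reads only one of the two sets, and the $O(\log^2\InputSize)$ bound rests on storing nothing but $\Sigma$ while deciding every other membership on the fly via the \TC0 (hence \LogSpace) predicates of Lemma~\ref{lemma:complexity_check_guess_augmented_witness} and streaming the output vertex by vertex.

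The gap is in your completeness step. \textsc{\ComputeNT} assumes only the intersection property; it does \emph{not} assume, and never checks, that $\G$ and $\H$ are simple. Lemma~\ref{lemma:ipergrafi_non-dual_iff_witness_nel_set} characterises non-duality as a three-way disjunction: $\G$ or $\H$ is not simple, \emph{or} the intersection property fails, \emph{or} a suitable $\Sigma$ exists. From ``a new transversal exists, hence $\G$ and $\H$ are not dual'' together with ``the intersection property holds'' you may only conclude ``$\G$ or $\H$ is not simple, or some $\Sigma^*\in\SetOfLogSeqs(\G,\H)$ exists''; the lemma does not ``yield some $\Sigma^*$'' when the non-duality could be accounted for by non-simplicity alone. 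Indeed, the proof of Lemma~\ref{lemma:ipergrafi_non-dual_iff_witness_nel_set} only constructs such a $\Sigma$ after assuming \emph{both} simplicity and the intersection property. The paper closes exactly this case with an additional observation: a new transversal of $\G$ \Wrt $\H$ exists if and only if a new transversal of $\min(\G)$ \Wrt $\min(\H)$ exists, so non-simplicity does not obstruct the existence of the required set of labels. Your argument needs a step of this kind (or a direct verification that the logarithmic-refuter construction never actually uses simplicity once a new transversal is given); as written, it is incomplete whenever one of the input hypergraphs is not simple.
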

\begin{proof}
$\textsc{\ComputeNT}$ always terminates because sets belonging to $\SetOfLogSeqs(\G,\H)$ are finite and, by exploiting the order defined, can be enumerated successively without repetitions.

First, \textsc{\ComputeNT} checks the intersection property, and if this property does not hold between the two input hypergraphs, then an error state/signal is triggered.
Next, \textsc{\ComputeNT} successively enumerates all possible elements belonging to $\SetOfLogSeqs(\G,\H)$ to find a set $\Sigma$ (if one exists) such that $\sigma(\Sigma)^+$ meets Condition~\eqref{eq:double_witnessing_condition} of \cref{lemma:double_witnessing_condition}, and hence such that $\sigma(\Sigma)^+$ is a double non\nbdash-duality witness.

Observe that at line~\ref*{line:alg-compute-checkConsistency} the currently analyzed set $\Sigma$ is checked for consistency, and at line~\ref*{line:alg-compute-DoubleWitness} pair $\sigma(\Sigma)^+$ is tested to be a double non\nbdash-duality witness.
Hence, since $\G$ and $\H$ are checked at line~\ref*{line:alg-compute-checkIP} to satisfy the intersection property, by \cref{lemma:ipergrafi_non-dual_iff_witness_nel_set}, there is a set $\Sigma=\assign{\In(\Sigma),\Ex(\Sigma)}$ passing the test at line~\ref*{line:alg-compute-DoubleWitness} if and only if there is a new transversal of $\G$ \Wrt $\H$.
From $\sigma(\Sigma)^+$ being a double witness, it follows that $\In(\Sigma)\cup\Freq(\sigma(\Sigma))$ is a new transversal of $\G$ \Wrt $\H$ (the reader can see from the proof of \cref{lemma:logarithmic_double_witness} that $\In(\Sigma)\cup\Freq(\sigma(\Sigma))$ is not always a minimal transversal of $\G$).

Algorithm \textsc{\ComputeNT} correctly outputs \NullID at line~\ref*{line:alg-compute-outputNull} if no new transversal of $\G$ exists.

To conclude, let us now show that \textsc{\ComputeNT} executes within a quadratic logspace bound.
All sets $\Sigma$ generated at line~\ref*{line:alg-compute-for} contain at most $\lfloor{\log |\H|}\rfloor+1$ labels, which is $O(\log \InputSize)$, and each of these sets can be represented with $O(\log^2 \InputSize)$ bits (see the proof of \cref{theo:complexity_non-dual}).
For this reason, by re\nbdash-using of workspace, the algorithm needs only $O(\log^2 \InputSize)$ bits to represent all the sets successively tried.
\Cref{lemma:complexity_check_hitting_property,lemma:complexity_check_consistency_guess,lemma:complexity_check_guess_augmented_witness} show that all tests can be executed in $\TC0$ and hence in logarithmic space (by the inclusion $\TC0\subseteq\LogSpace$).
In fact, in order to implement those tests within a logarithmic space bound, pairs $\sigma(\Sigma)$ and ${\sigma(\Sigma)}^+$, the sets of the separated and compatible edges, and the sets of frequent and infrequent vertices, are dynamically computed in \LogSpace when needed, rather than being explicitly stored.

Observe also that the output operations can be carried out in logarithmic space.
Indeed, the elements belonging to $\In(\Sigma)\cup\Freq(\sigma(\Sigma))$ can be output successively one by one using only logarithmic workspace by exploiting Formula~\eqref{eq:assignment_from_sequence} of \cref{sec:non-deterministic_algorithm} (for each vertex $v$ it is decided whether $v$ has to be output or not).
Note that, given a vertex $v$, checking whether $v$ is a free vertex of $\sigma(\Sigma)$ is feasible in \TC0 (see the proof of \cref{lemma:complexity_check_guess_augmented_witness}), and hence in \LogSpace (from $\TC0\subseteq\LogSpace$).
Moreover, deciding whether a free vertex of $\sigma(\Sigma)$ is frequent is feasible in \TC0 (see the proof of \cref{lemma:complexity_check_guess_augmented_witness}), and hence in \LogSpace.
\end{proof}

It is an open problem whether it is possible to compute a \emph{minimal} new transversal of $\G$ in space $O(\log^2 \InputSize)$.

From the previous lemma, the following theorem directly follows. Note that the result here reported is actually a (slight) improvement over the result in the conference paper~\citep{gott13}, because we require here that the input hypergraphs satisfy the intersection property instead of the tighter condition of $\G$ and $\H$ being such that $\G\subseteq\Tr(\H)$ and $\H\subseteq\Tr(\G)$.

\begin{theorem}[improved over \citep{gott13}]
Let $\G$ and $\H$ be two hypergraphs satisfying the intersection property.
Then, computing a new (not necessarily minimal) transversal of $\G$ \Wrt $\H$ is feasible in $O(\log^2 \InputSize)$ space.
\end{theorem}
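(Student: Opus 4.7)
The plan is to observe that this theorem is an essentially immediate corollary of the preceding lemma: the lemma already establishes that the deterministic algorithm \textsc{\ComputeNT} computes a new transversal of $\G$ \Wrt $\H$ (when one exists) using $O(\log^2 \InputSize)$ work-space, so the only remaining task is to argue that \textsc{\ComputeNT} fits the definition of an \FDSpace[$\log^2 \InputSize$] transducer, i.e., that its output can be produced on a write-only output tape without being counted against the space bound.

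The first step is to inspect how \textsc{\ComputeNT} emits its result. In the two non-\NullID branches, the algorithm outputs either $\In(\Sigma)\cup\Freq(\sigma(\Sigma))$ or $V\setminus(\Ex(\Sigma)\cup\Infreq(\sigma(\Sigma)))$. Once the algorithm has fixed (and stored in $O(\log^2 \InputSize)$ bits) a particular witnessing $\Sigma$, both of these sets can be written out by iterating over vertices $v\in V$ in the total order of the domain, and deciding for each $v$ whether it should be printed. Membership of $v$ in $\In(\Sigma)$ or $\Ex(\Sigma)$ is directly read off from $\Sigma$ via Formulas~\eqref{eq:assignment_from_sequence}, while membership of $v$ in $\Freq(\sigma(\Sigma))$ or $\Infreq(\sigma(\Sigma))$ is decided by the formula $\ALHCompEdges$ from the proof of Lemma~\ref{lemma:complexity_check_guess_augmented_witness}, whose evaluation is in \FOC and hence in \LogSpace. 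Thus each per-vertex decision uses at most $O(\log^2 \InputSize)$ work-space on top of $\Sigma$, and the output vertices can be streamed out one by one. The \NullID branch trivially needs only constant space to emit.

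The second step is to verify the overall space accounting. The main loop of \textsc{\ComputeNT} enumerates $\SetOfLogSeqs(\G,\H)$ in the order $\LabelOrderPrecSim$; by reusing work-space, the current $\Sigma$ always fits in $O(\log^2 \InputSize)$ bits (at most $O(\log \InputSize)$ labels, each of size $O(\log \InputSize)$), and moving to the successor set in the $\LabelOrderPrecSim$-order is a simple \LogSpace operation on this representation. The tests $\Sep({\sigma(\Sigma)}^+)=\emptyset$, $\Cov({\sigma(\Sigma)}^+)=\emptyset$, $\Com({\sigma(\Sigma)}^+)=\emptyset$, and $\Mis({\sigma(\Sigma)}^+)=\emptyset$ are first-order checks over the augmented pair, evaluable in $O(\log^2 \InputSize)$ space by the same argument (with $\Sigma$ on the work tape). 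Combining these observations, the entire computation respects the $O(\log^2 \InputSize)$ work-space bound.

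The only mildly delicate point — and the one I expect to double-check most carefully — is that the output is \emph{computed} rather than merely \emph{recognized}: we must make sure that once the loop identifies a winning $\Sigma$, we do not need to re-enumerate anything, and that the streaming of the output vertices uses only the already-bounded work-space. This is handled by first finishing the enumeration phase, freezing the first winning $\Sigma$ on the work tape, and only then entering the output phase in which we sweep once through $V$ and print each qualifying vertex using the \FOC-expressible predicates above. This yields the desired $\FDSpace[\log^2 \InputSize]$ bound and completes the proof.
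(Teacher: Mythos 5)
Your proposal is correct and matches the paper's treatment: the paper derives this theorem directly from the preceding lemma on \textsc{\ComputeNT}, whose proof already contains exactly the points you elaborate (re-using work-space across the enumeration of $\SetOfLogSeqs(\G,\H)$, evaluating the witnessing tests in \TC0 $\subseteq$ \LogSpace with $\Sigma$ held in $O(\log^2 \InputSize)$ bits, and streaming the output vertex-by-vertex via Formulas~\eqref{eq:assignment_from_sequence} and the frequency predicate). Your extra care about the write-only output tape is a faithful expansion of the paper's remark that ``the output operations can be carried out in logarithmic space,'' not a different route.
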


\section{Conclusions and future research}\label{sec:concl_and_future}
In this paper, we studied the computational complexity of the \Dual problem.
By using standard decomposition techniques for \Dual, we proved that after logarithmic many decomposition steps it is possible to individuate a sub\nbdash-instance of the original instance of \Dual for which verifying that it corresponds to a new transversal is feasible within complexity class \TC0.

From this we devised a new nondeterministic algorithm for \coDual whose analysis allowed us to recognize \GC{$\log^2 n$}{\TC0} as a new complexity upper bound for \coDual.
As a simple corollary of this result, we obtained also that $\coDual\in\GC{$\log^2 n$}{$\LogSpace$}$, which was conjectured by \citet{gott13}.

Moreover, the nondeterministic algorithm proposed in this paper is used to develop a simple deterministic algorithm whose space complexity is $O(\log^2 n)$.

Now some questions arise.
Is it possible to avoid the counting in the final deterministic check phase of the nondeterministic algorithm without the need of guessing more bits than $O(\log^2 n)$?
Or, more in general, without exceeding the upper bound of $O(\log^2 n)$ nondeterministic guessed bits, is it possible to devise a final deterministic test requiring a formula with strictly fewer quantifier alternations?

In the quest of finding the exact complexity of the \DUAL problem, there is another direction of investigation that could be interesting to explore.

\citet*{Grohe2006} (see also~\cite{Flum2006}) defined a hierarchy of nondeterministic classes containing those languages that, after guessing $O(\log^2 n)$ bits, can be checked by an \FO formula with a bounded number of quantifier alternations.
A different definition of this very same hierarchy can be found also in a paper by \citet{Cai1997}.
For notational convenience, let us denote these classes by \GC{$\log^2 n$}{$\mathcal{S}$}, where $\mathcal{S}$ is a sequence of logical quantifiers characterizing the quantifiers alternation in the formulas for the check of the languages in the class.
Interestingly, there are natural decision problems that are complete for classes in this hierarchy, for example the \emph{tournament dominating set problem}, and the \emph{Vapnik--Chervonenkis dimension problem}.

The former problem is defined as follows: Given a tournament $G$ (i.e., a directed graph such that for each pair of vertices $v$ and $w$ there is either an edge from $v$ to $w$, or an edge from $w$ to $v$ (but not both)) and an integer $k$, decide whether there is a dominating set in $G$ of size $k$.
It can be shown that this problem is complete for the class \GC{$\log^2 n$}{$\forall\exists$}~\cite{Cai1997,Grohe2006,Flum2006}.

The latter problem is defined as follows: Given a hypergraph $\G$ and an integer $k$, decide whether the Vapnik--Chervonenkis dimension of $\G$ is at least $k$.
It can be shown that this problem is complete for the class \GC{$\log^2 n$}{$\forall\exists\forall$}~\cite{Cai1997,Grohe2006,Flum2006}.

In fact, a new hierarchy of classes characterized by limited nondeterminism could be defined.
Indeed, we could extend the definitions given by \citet{Cai1997}, and \citet*{Grohe2006}, to define the classes of languages that, after a nondeterministic guess of $O(\log^2 n)$ bits, can be checked by a \FOC formula with a bounded number of quantifier alternations.

In particular, for \coDual, we hypothesize that the formula checking the guess of our nondeterministic algorithm, possibly rearranged and rewritten, is characterized by a quantifiers alternation $\forall\exists\mathrm{C}$ (where $\mathrm{C}$ is the counting quantifier).\footnote{Note here that \FOC formulas with bounded quantifiers could be put in relation with the levels of the logarithmic time counting hierarchy defined by \citet{Toran1988,Toran1989}.}

Given such a new hierarchy, it would be interesting to verify whether \coDual belongs to the class \GC{$\log^2 n$}{$\forall\exists\mathrm{C}$}, and even whether \coDual is complete for this class.

\section*{Acknowledgments}
G.~Gottlob's work was supported by the EPSRC Programme Grant EP/M025268/ ``VADA: Value Added Data Systems -- Principles and Architecture''.
E.~Malizia's work was mainly supported by the European Commission through the European Social Fund and by the Region of Calabria (Italy).
Malizia received additional funding from the ERC grant 246858 (DIADEM) and the above mentioned EPSRC grant ``VADA''.
We thank Thomas Eiter and Nikola Yolov for their very helpful comments on the preliminary version of this technical report.
In addition, we are grateful to the anonymous referees of the conference paper~\cite{Gottlob_Malizia2014} and of the journal paper~\cite{Gottlob_Malizia:DUAL_sicomp} for their competent and helpful comments.

\bibliographystyle{abbrvnat}

\begin{appendices}

\section{It's a matter of perspectives\dots}\label{sec:perspectives}
In the literature, the hypergraph transversal problem was faced essentially from two points of view: that of Boolean functions dualization, and that of hypergraphs themselves.
In order to ease the (non\nbdash-expert) reader's task to place this work in the literature's landscape we illustrate here their connections.

Consider the \emph{Boolean domain} $\{0,1\}$.
A \emph{Boolean vector} is an element of the $n$\nbdash-dimensional Boolean space $\{0,1\}^n$.
If $x$ is a Boolean vector, $x_i$ denotes the $i$\nbdash-th component of $x$.
If $x$ and $y$ are two vectors belonging to the same $n$\nbdash-dimensional Boolean space (and hence having the same number of components), by $x\leq y$ we denote the fact that $x_i\leq y_i$ for all $1\leq i\leq n$.

An $n$\nbdash-ary \emph{Boolean function} $f$ is a mapping $f\colon \{0,1\}^n \mapsto \{0,1\}$ from the $n$\nbdash-dimensional Boolean space to a Boolean value.
Given two functions $f$ and $g$ defined on the same $n$\nbdash-dimensional Boolean space, with $f\leq g$ we mean that $f(x)\leq g(x)$ for all the Boolean vectors $x\in\{0,1\}^n$ (i.e., for all the vectors of the domain).
A function $f$ is said to be \emph{monotone} (or \emph{positive}), if, for any two vectors $x,y\in\{0,1\}^n$, $x\leq y$ implies $f(x)\leq f(y)$.

A way to represent an $n$\nbdash-ary Boolean function is through a \emph{Boolean formula} in $n$ variables $x_1,\dots,x_n$.
Boolean variables $x_1,\dots,x_n$ and their complements $\lnot x_1,\dots,\lnot x_n$ are called \emph{literals}.
A Boolean formula is a formula consisting of Boolean constants $0$ and $1$ (which we associates with \valfalse and \valtrue, respectively), literals, logical connectives `$\land$' (logical \emph{and}, or \emph{conjunction}), and `$\lor$' (logical \emph{or}, or \emph{disjunction}), and the parentheses symbols `$($' and `$)$'.
Parentheses give priority to the evaluation of the subformula between them enclosed, altering, in this way, the standard precedence of the conjunction over the disjunction.

A \emph{clause} and a \emph{term} are a disjunction and a conjunction of literals, respectively.
A clause $c$ is an \emph{implicate} of the function $f$ if $f\leq c$, while a term $t$ is an \emph{implicant} of $f$ if $t\leq f$.
A clause $c$ is a \emph{prime implicate} of $f$, and a term $t$ is a \emph{prime implicant} of $f$, if they are minimal.
Here ``minimal'' means that they are no longer an implicate of $f$ and an implicant of $f$, respectively, if a literal is removed from them.
Implicates and implicants are said to be \emph{monotone} if they consist only of positive literals.

A Boolean formula is said to be in \emph{conjunctive normal form} (or, \emph{CNF}) if it is a conjunction of clauses, while it is said to be in \emph{disjunctive normal form} (or, \emph{DNF}) if it is a disjunction of terms.
A CNF, or a DNF, formula is said to be \emph{prime} or \emph{monotone} if all its clauses, or terms, are prime or monotone, respectively.

It is well known that, a function $f$ is monotone if and only if it can be represented through a monotone CNF, or a monotone DNF, formula.
These representations are moreover unique if only prime implicates and implicants are considered.
Given a monotone Boolean function $f$, we denote by $\cnf(f)$ and $\dnf(f)$ its unique prime CNF and DNF representations, respectively.

If $f$ is a Boolean function, we denote by $f^d$ its \emph{dual} that is defined as $f^d(x)=\lnot{f}(\lnot{x})$, where $\lnot{f}$ and $\lnot{x}$ are the complements of $f$ and $x$ (each components of $x$ is complemented), respectively.
By definition, a function $f$ and its dual $f^d$ are such that $(f^d)^d=f$.
If $f$ is a monotone Boolean function, then also $f^d$ is monotone.
Let $f$ be a Boolean function, and $\gamma$ be a Boolean formula representing $f$.
From De Morgan's law, it is possible to easily compute a Boolean formula representing $f^d$ by simply exchanging the logical connectives $\land$ and $\lor$, and the constants $0$ and $1$, of $\gamma$.
We denote by $\gamma^\DM$ the Boolean formula obtained from $\gamma$ by exchanging its logical connectives and constants.
Observe that, if $f$ is a monotone Boolean function, and $\varphi=\cnf(f)$, then $\varphi^\DM=\dnf(f^d)$ (and, obviously, if $\psi=\dnf(f)$ then $\psi^\DM=\cnf(f^d)$).
This means that it is trivial to compute from $\cnf(f)$ ($\dnf(f)$, resp.) the formula $\dnf(f^d)$ ($\cnf(f^d)$, resp.).
In fact, it is a more involved task to compute from $\cnf(f)$ the formula $\cnf(f^d)$ (and from $\dnf(f)$ the formula $\dnf(f^d)$; a task that has the very same complexity).

With the aim of studying the computational complexity of the dual function computation problem, its decision variant was introduced in the literature.
This decision task consists in determining whether two given monotone prime CNF, or DNF, formulas represent dual functions.

To go back to the hypergraph transversal problem, one of the forms in which the transversal problem was approached in the literature is through the \emph{dualization} or the \emph{duality} (that is, the decision problem) perspective.
Among the vast literature published so far on the topic, we find that computing the formula $\cnf(f^d)$ from $\cnf(f)$ (or its decision flavour, that is, given two CNF formulas deciding whether they are dual) was studied, for example, by \citet{EGM03,EGMSurvey}; and that computing the formula $\dnf(f^d)$ from $\dnf(f)$ (or its decision flavour, that is, given two DNF formulas deciding whether they are dual) was studied, for example, by \citet{fred-khac-96}, and \citet{kavv-stav-03,Kavvadias2003a}.

Since the dual of a monotone Boolean function $f$ is itself monotone, if $\psi=\cnf(f^d)$ then $\psi^\DM=\dnf(f)$.
This implies that it is trivial to derive the DNF formula of a function $f$ if it has been previously computed the CNF formula of the dual function $f^d$ (and vice\nbdash-versa).
For this reason, sometimes the problem of dualization was approached as the problem of computing the formula $\dnf(f)$ when the formula $\cnf(f)$ is given in input (or vice\nbdash-versa).
Also in this case there is a decision counterpart of this problem.
Given two monotone prime Boolean formulas, one in CNF and the other in DNF, deciding whether they represent equivalent functions.
Dealing with the transversal problem from this point of view was an approach adopted, for example, by \citet{Elbassioni2006}, and \citet{Boros2010}.

Another perspective is that of self\nbdash-duality, consisting in deciding whether a Boolean function $f$ is such that $f^d=f$.
This problem is known to be characterized by the very same complexity of deciding the duality of two Boolean formulas \cite{EG95}.
In the literature, the works of \citet{Gaur1999}, and \citet{Gaur2004}, approached the transversal problem from this point of view (self\nbdash-duality of DNF prime monotone Boolean formulas).

Apart from the approaches mentioned above, there are also works in which the authors tackled ``directly'' the transversal problem dealing explicitly with hypergraphs, vertices, hyperedges, transversal sets and related concepts \cite{Eiter1991,EG95,Elbassioni2008,Boros2009,gott13}.
Some of these works at first introduce the problem from a Boolean formulas perspective before shifting to an approach more focused on hypergraphs.

Since all these problems are strictly connected each other, sometimes in the literature the transversal hypergraph is called the dual hypergraph.
By extension, often the task of computing the transversal/dual hypergraph of a given one is referred to as the dualization of a hypergraph, while the task of deciding whether two given hypergraphs are each the transversal/dual hypergraph of the other is referred to as the task of checking the duality between two hypergraphs.

In this paper we approach the hypergraph transversal problem in a explicit way by dealing with hypergraph ``entities'', as vertices and hyperedges.
In \cref{sec:decomposition} we will refer to vertices included in and excluded from the ongoing building (possibly new) transversal of $\G$.
We want to relate this notion of including/excluding vertices with the other approaches used in the literature to make clearer the similarities and the differences of the algorithm here presented with the other proposed.

Let $\varphi$ be a prime monotone CNF formula.
Let us denote by $\Hyp(\varphi)$ the hypergraph obtained from $\varphi$, consisting of a vertex $v_i$ for each variable $x_i$ of $\varphi$, and a hyperedge $H_c$ for each clause $c$ of $\varphi$.
Hyperedge $H_c$ contains exactly the vertices associated with the literals of the clause $c$ (remember that all the literals of $\varphi$ are positive since the formula is assumed to be monotone).
For example, let $f$ be the $4$\nbdash-ary monotone Boolean function such that $f(x)=1$ if and only if $x\in\{(1,1,0,1),(0,1,1,0),(1,0,1,0)\}$.
From this information it is easy to obtain the formula $\dnf(f)$, but, for the sake of the presentation, let us first show the formula $\cnf(f)$.
We ask the reader at first to ``trust'' that it is correct, we will prove it shortly.
Let
\[
\varphi=\cnf(f)=\underbrace{(x_1\lor x_2)}_{c_1}\land\underbrace{(x_2\lor x_3)}_{c_2}\land\underbrace{(x_1\lor x_3)}_{c_3}\land\underbrace{(x_3\lor x_4)}_{c_4},
\]
the hypergraph $\G=\Hyp(\varphi)$ is depicted in \cref{fig:hyp_phi}.
Note that for any prime monotone CNF formula $\phi$, the hypergraph $\Hyp(\phi)$ is Sperner.
Similarly, for a prime monotone DNF formula $\psi$ we define the hypergraph $\Hyp(\psi)$ in a similar way as above, with the only difference that hyperedges are associated with terms instead of clauses.
Also in this case, $\Hyp(\psi)$ is a Sperner hypergraph if $\psi$ is a prime monotone DNF formula.

\begin{figure}
  \centering%
  \begin{subfigure}[b]{0.33\textwidth}
  \centering%
    \includegraphics[width=0.5\textwidth]{./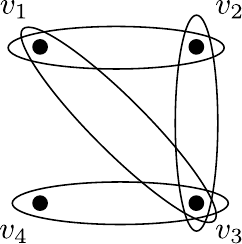}
    \caption{The hypergraph $\G=\Hyp(\varphi)$.}
    \label{fig:hyp_phi}
  \end{subfigure}%
  \hspace{0.01\textwidth}
  \begin{subfigure}[b]{0.33\textwidth}
  \centering%
    \includegraphics[width=0.5\textwidth]{./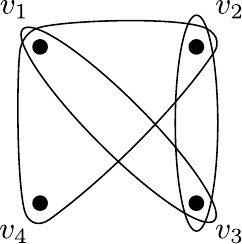}
    \caption{The hypergraph $\H=\Hyp(\psi)$.}
    \label{fig:hyp_psi}
  \end{subfigure}
  \caption{The hypegraphs obtained from $\varphi=(x_1\lor x_2)\land(x_2\lor x_3)\land(x_1\lor x_3)\land(x_3\lor x_4)$ and $\psi=(x_1\land x_2\land x_4)\lor(x_1\land x_3)\lor(x_2\land x_3)$. Since $\varphi=\cnf(f)$ and $\psi=\dnf(f)$, for the very same monotone Boolean function $f$, $\Hyp(\psi)=\Tr(\Hyp(\varphi))$ (and vice\nbdash-versa).}\label{fig:hypegraphs_from_formulas}
\end{figure}

Let us now focus on the task of computing $\psi=\dnf(f)$ from $\varphi$.
Since we have to build a prime DNF formula $\psi$ equivalent to the monotone CNF $\varphi$, all the prime terms of $\psi$ will be monotone too.
Each of the terms of $\psi$ will contain as (positive) literals a minimal set of variables sufficient to satisfy $\varphi$ (when the Boolean value \valtrue is assigned to them).
This is tantamount to choose, for every clause of $\varphi$, a literal to which we assign \valtrue.
For example, choosing $x_1$ satisfies clause $c_1$, choosing $x_2$ satisfies $c_2$, and $c_3$ is already satisfied by having chosen $x_1$.
To satisfy clause $c_4$ we can choose $x_4$, and so we obtain the prime term $(x_1\land x_2\land x_4)$, or $x_3$, and so we obtain the non\nbdash-prime term $(x_1\land x_2\land x_3)$.
By removing $x_1$ or $x_2$ from the non\nbdash-prime term we obtain the two prime terms $(x_1\land x_3)$ and $(x_2\land x_3)$.
The reader can easily check that, in fact, $\psi=\dnf(f)=(x_1\land x_2\land x_4)\lor(x_1\land x_3)\lor(x_2\land x_3)$.
So, to derive the DNF form of $f$ from its CNF representation, we have ``covered'' all the clauses of $\varphi$, that is, we have essentially evaluated the minimal transversals of $\G=\Hyp(\varphi)$.
In \cref{fig:hyp_psi} is reported the hypergraph $\H=\Hyp(\psi)$, and the reader can see that $\Hyp(\psi)=\Tr(\Hyp(\varphi))$.

Computing the prime CNF form of a monotone Boolean function from its prime DNF form can be done in a similar way.
Let us take into consideration again the function $f$ represented in DNF by
\[
\psi=\dnf(f)=\underbrace{(x_1\land x_2\land x_4)}_{t_1}\lor\underbrace{(x_1\land x_3)}_{t_2}\lor\underbrace{(x_2\land x_3)}_{t_3}.
\]
In order to compute the formula $\varphi=\cnf(f)$ from $\psi$, just consider the folowing.
Every clause $c$ of $\varphi$ is an implicate of $f$, i.e., $f\leq c$, from which it follows that $\lnot c\leq\lnot f$.
So, by looking for how to falsify $f$, and $\psi$ in particular, we can compute every single complemented prime clause $\lnot c$ of $\varphi$, which, complemented again, allows us to easily obtain the prime clause $c$.
A way to falsify $\psi$ is that of choosing, for every term of $\psi$, a literal to which we assign \valfalse.
For example, to falsify term $t_1$ we can choose $x_1$, and term $t_2$ is already falsified by the previous choice of $x_1$.
To falsify $t_3$ we can take $x_2$ or $x_3$, hence obtaining $\lnot c_1=(\lnot x_1\land\lnot x_2)$ and $\lnot c_3=(\lnot x_1\land\lnot x_3)$, and thus $c_1=\lnot(\lnot c_1)=(x_1\lor x_2)$, and $c_3=\lnot(\lnot c_3)=(x_1\lor x_3)$.
Following this line of reasoning the reader can compute the whole formula $\varphi=\cnf(f)=(x_1\lor x_2)\land(x_2\lor x_3)\land(x_1\lor x_3)\land(x_3\lor x_4)$.
Also in this case, the way to compute $\varphi$ from $\psi$ was essentially that of computing all the minimal transversals of the hypergraph $\H=\Hyp(\psi)$ (see \cref{fig:hyp_psi}) to obtain the hypergraph $\G=\Hyp(\varphi)$ (see \cref{fig:hyp_phi}).
Observe that $\G$ and $\H$ are such that $\G=\Tr(\H)$.

We have just seen that the transversal hypergraph computation is essentially the same task of computing the DNF prime form of a monotone Boolean function when its CNF prime form is given in input, and vice\nbdash-versa.
To see that this is also related to the computation of the dual function, i.e., the dualization problem properly said, the step is simple.
If, from the formula $\varphi=\cnf(f)$, we derive the formula $\psi=\dnf(f)$ through a procedure that essentially computes $\Tr(\Hyp(\varphi))$, then $\psi^\DM$ is a prime monotone CNF formula such that $\psi^\DM=\cnf(f^d)$ (remember that the `$\DM$' operator only interchanges $\land$ with $\lor$, and $0$ with $1$, and no negation $\lnot$ operator is introduced in the newly generated formula).
Similarly, for DNF formulas, if from $\psi=\dnf(f)$ we obtain $\varphi=\cnf(f)$ (through the computation of $\Tr(\Hyp(\psi))$), then $\varphi^\DM=\dnf(f^d)$.

Since we will deal with the decision problem version of the transversal hypergraph problem, let us now focus our attention on the duality problem.
This problem can be formulated in two equivalent forms: (1) given two prime CNF (or DNF) formulas, decide whether they represent dual Boolean functions; (2) given a prime monotone CNF formula and a prime monotone DNF formula, decide whether these two formulas are equivalent (i.e., they represent the same monotone Boolean function).
These problems, for the discussion above, are equivalent to deciding whether two given hypergraphs are each the transversal hypergraph of the other.

To relate our work to the other in the literature, let us see what including or excluding a vertex in our approach means for the other works focused on Boolean formulas.
Suppose that $\G$ and $\H$ are two hypergraphs, and we want to decide whether $\H=\Tr(\G)$.
In order to answer ``no'' to this question, if $\H$ consists only of transversal of $\G$, we need to find a transversal of $\G$ that is not in $\H$.
We call such a transversal, a \emph{new transversal of $\G$ \Wrt $\H$.}
To do so, inspired also by the algorithm of Gaur~\cite{Gaur1999,Gaur2004}, it is useful to keep track of two sets.
The set of the vertices already included in the ongoing building transversal, and the set of the vertices already excluded from the ongoing building transversal.
In this way, it is easy to check what hyperedges of $\G$ are already covered by and what hyperedges of $\H$ are already different from the candidate for a new transversal (more details of this will be given in \cref{sec:decomposition}).

Following the notation of \citet{Elbassioni2008} and \citet{Boros2009}, given a hypergraph $\G$ and a set $S$ of vertices,
we define hypergraphs $\G_S=\tuple{S,\{G\in \G\mid G\subseteq S\}}$, and $\G^S=\tuple{S,\min(\{G\cap S\mid G\in\G\})}$, where $\min(\H)$, for any hypergraph $\H$, denotes the set of inclusion minimal edges of $\H$.

Now assume, for example, that we want to include a particular vertex $v$ in the new transversal of $\G$.
After the inclusion of $v$, all the hyperedges of $\G$ containing $v$ are covered and hence they no longer need to be considered in the construction of the new transversal of $\G$.
So, the new hypergraph ``$\G$'' (of the pair) to be taken into consideration is $\G_{V\setminus\{v\}}$.
See, for example, \cref{fig:hyp_selection} in which is represented the graph $\G_{V\setminus\{v_1\}}$, where $\G$ is the hypergraph of \cref{fig:hyp_phi}.
On the other hand, only the exclusion of a vertex belonging to an edge $H\in\H$ makes $H$ certainly different from the ongoing building transversal.
In fact, when, on the contrary, vertices are included, the ongoing building transversal could grow to the point to be a superset of $H$, and hence the just built transversal would not be new \Wrt $\H$.
So, since including a vertex in the new transversal of $\G$ does not make any of the hyperedges of $\H$ different from the new candidate one, then all the hyperedges of $\H$ still need to be considered, apart from removing the included vertex and shrinking the proper hyperedges.
That is, the new hypergraph ``$\H$'' (of the pair) to be considered is $\H^{V\setminus\{v\}}$.
See \cref{fig:hyp_restriction} in which is depicted the graph $\H^{V\setminus\{v_1\}}$, where $\H$ is the hypergraph of \cref{fig:hyp_psi}.

\begin{figure}
  \centering%
  \begin{subfigure}[b]{0.33\textwidth}
  \centering%
    \includegraphics[width=0.5\textwidth]{./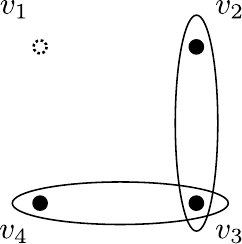}
    \caption{The hypergraph $\G_{V\setminus\{v_1\}}$.}
    \label{fig:hyp_selection}
  \end{subfigure}%
  \hspace{0.01\textwidth}
  \begin{subfigure}[b]{0.33\textwidth}
  \centering%
    \includegraphics[width=0.5\textwidth]{./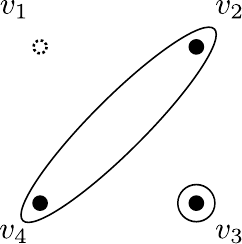}
    \caption{The hypergraph $\H^{V\setminus\{v_1\}}$.}
    \label{fig:hyp_restriction}
  \end{subfigure}
  \caption{The result of including vertex $v_1$ in the aim of building a new transversal of $\G$ \Wrt $\H$. Hypergraphs $\G$ and $\H$ are those depicted in \cref{fig:hyp_phi,fig:hyp_psi}, respectively. The dashed vertices do not belong any longer to the hypergraphs.}\label{fig:example_hypegraphs_including_vertex}
\end{figure}

We can observe the following connection.
Given a pair of hypergraphs $\tuple{\G,\H}$, with the aim of building a new transversal of $\G$ \Wrt $\H$, in our approach including a vertex $v_i$ in a candidate new transversal is equivalent to consider the ``updated'' pair $\tuple{\G_{V\setminus\{v_i\}},\H^{V\setminus\{v_i\}}}$ when we use the notation of \citet{Elbassioni2008} and \citet{Boros2009}.

Let us link this to the works focused on Boolean formulas, and in particular to the problem of deciding whether a CNF and a DNF formula are equivalent.
Consider the CNF formula $\varphi$ such that $\G=\Hyp(\varphi)$, and the DNF formula $\psi$ such that $\H=\Hyp(\psi)$.
Let $\tuple{\varphi,\psi}$ be the pair of formulas for which it has to be decided whether $\varphi$ and $\psi$ are equivalent.
We are going to see that including a vertex $v_i$ is equivalent to assigning \valtrue to the variable $x_i$ related to $v_i$, in both $\varphi$ and $\psi$.
Indeed, in the formula $\varphi$, since it is in CNF, assigning \valtrue to $x_i$ means covering all the clauses containing $x_i$, and hence they do not need to be considered further (as in $\G_{V\setminus\{v_i\}}$).
While assigning \valtrue to $x_i$ in $\psi$, since it is in DNF, alters its terms containing $x_i$ by deleting that literal, and does not alter at all the other terms (as in $\H^{V\setminus\{v_1\}}$).
Instead, if $\varphi$ is in DNF, and $\psi$ in CNF, including $v_i$ is equivalent to assigning \valfalse to $x_i$ in $\varphi$ and $\psi$.

By extension, including a set of vertices $S$ is equivalent to consider the new pair of hypergraphs $\tuple{\G_{V\setminus S},H^{V\setminus S}}$, and to assign \valtrue (\valfalse, resp.) to all the variables whose related vertices are in $S$ if $\varphi$ is in CNF (DNF, resp.), and $\psi$ is in DNF (CNF, resp.).

Similarly, excluding a set of vertices $S$ from the new transversal is equivalent to consider the new pair of hypergraphs $\tuple{\G^{V\setminus S},\H_{V\setminus S}}$, and to assign \valfalse (\valtrue, resp.) to all the variables whose related vertices are in $S$ if $\varphi$ is in CNF (DNF, resp.), and $\psi$ is in DNF (CNF, resp.).

By combining the two observations above, when $\In$ is a set of included vertices, and $\Ex$ is a set of excluded vertices, the updated pair is $\tuple{(\G_{V\setminus\In})^{V\setminus(\In\cup\Ex)},(\H_{V\setminus\Ex})^{V\setminus(\In\cup\Ex)}}$.
This expression will be used in \cref{sec:decomposition}.

When we consider the problem of deciding whether two formulas are dual, the link is as follows.
Consider the two CNF (DNF, resp.) formulas $\varphi$ and $\psi$ such that $\G=\Hyp(\varphi)$ and $\H=\Hyp(\psi)$.
Let $\tuple{\varphi,\psi}$ be the pair of formulas for which it has to be decided whether $\varphi$ and $\psi$ are dual.
Including a vertex $v_i$ in a candidate new transversal of $\G$ is equivalent to assign to $x_i$ \valtrue (\valfalse, resp.) in $\varphi$ and \valfalse (\valtrue, resp.) in $\psi$.
While excluding $v_i$ from the candidate new transversal of $\G$ is equivalent to assign to $x_i$ \valfalse (\valtrue, resp.) in $\varphi$ and \valtrue (\valfalse, resp.) in $\psi$.
These considerations can be generalized in the obvious way to the inclusion and exclusion of sets of vertices.

\section{Proofs of properties stated in Section~\ref{sec:prelim}}\label{sec:app_prove_property_transversal}
\noindent\textbf{\Cref*{lemma:transMin-ogniVertexCritico}.}
\textit{Let $\G$ be a hypergraph, and let $T\subseteq V$ be a transversal of $\G$.
Then, $T$ is a minimal transversal of $\G$ if and only if every vertex $v\in T$ is critical (and hence there is an edge $G_v\in\G$ witnessing so).}\nopagebreak\begin{proof}
If $\emptyset \in \G$, then the statement trivially holds because there is no transversal of $\G$ at all.
On the other hand, if $\G = \emptyset$, the only minimal transversal of $\G$ is the empty set for which the statement of the lemma holds.
Let us now consider the case in which $\G$ contains only non\nbdash-empty edges.

$(\Rightarrow)$
Let $T$ be a minimal transversal of $\G$, and assume by contradiction that there is a vertex $v\in T$ that is not critical.
This means that, for every edge $G\in\G$, if $v\in (G\cap T)$, then $|G\cap T|\geq 2$.
Now consider the set $T'=T\setminus\{v\}$.
Clearly, $|G\cap T'|\geq 1$ for every edge $G \in \G$.
Therefore, set $T'\subset T$ is a transversal of $\G$: a contradiction, because we are assuming that $T$ is a minimal transversal of $\G$.

$(\Leftarrow)$
Assume that all vertices of transversal $T$ are critical.
Consider any set $T'\subset T$ and let $v\in T\setminus T'$ be a vertex.
Since $v$ is critical, there is an edge $G_v\in\G$ such that $T\cap G_v=\{v\}$.
Hence, $T'\cap G_v=\emptyset$ and $T'$ is not a transversal of $\G$.
Therefore, $T$ is a minimal transversal of $\G$.
\end{proof}

\noindent\textbf{\Cref*{lemma:newTransversals_complementari}.}
\textit{Let $\G$ and $\H$ be two hypergraphs.
A set of vertices $T\subseteq V$ is a new transversal of $\G$ \Wrt $\H$ if and only if $\compl{T}$ is a new transversal of $\H$ \Wrt $\G$.}\nopagebreak\begin{proof}
If $\G$ or $\H$ contains an empty edge, then they do not admit transversals and independent sets.
Hence, in these cases, the statement trivially holds.
Let us consider the case in which $\G$ and $\H$ do not have empty edges.
If $T$ is a new transversal of $\G$ \Wrt $\H$, $T$ is an independent set of $\H$, and hence, for every edge $H\in\H$, there is a vertex $v\in H\setminus T$.
Therefore, $v\in\compl{T}$, and so $\compl{T}$ is a transversal of $\H$.
Moreover, set $\compl{T}$ cannot be a superset of any edge $G$ of $\G$, for otherwise $T$ would not intersect $G$, and $T$ would not be a transversal of $\G$.
Hence, $\compl{T}$ is an independent set of $\G$, which proves that $\compl{T}$ is a new transversal of $\H$ \Wrt $\G$.
For the other direction, just swap the roles of $\G$ and $\H$.
Therefore, if $\compl{T}$ is a new transversal of $\H$, then $\compl{\compl{T}}=T$ is a new transversal of $\G$.
\end{proof}

\noindent\textbf{\Cref*{lemma:no_new_transversal_iff_dual}.}
\textit{Let $\G$ and $\H$ be two hypergraphs.
Then, $\G$ and $\H$ are dual if and only if $\G$ and $\H$ are simple, satisfy the intersection property, and there is no new transversal of $\G$ \Wrt $\H$.}\nopagebreak\begin{proof}\hspace{0pt}
$(\Rightarrow)$
If $\G$ and $\H$ are dual, then they have to be simple.
Indeed, assume by contradiction that $\G$ is not simple.
Then there are two different edges $G',G''\in\G$ such that $G'\subset G''$.
There are two cases: either $G'$ is a transversal of $\H$, or it is not.
If $G'$ is a transversal of $\H$, then $G''$ is not a minimal transversal of $\H$, and hence $\G\neq\Tr(\H)$: a contraction, because we are assuming $\G$ and $\H$ to be dual.
On the other hand, if $G'$ is not a transversal of $\H$, then, again, $\G\neq\Tr(\H)$: a contraction, because we are assuming $\G$ and $\H$ to be dual.
Similarly, it can be shown that $\H$ has to be simple.

Assume now by contradiction that $\G$ and $\H$ do not satisfy the intersection property.
Then there are edges $G\in\G$ and $H\in\H$ such that $G\cap H=\emptyset$: a contradiction, because $\G$ and $\H$ are dual.

We now show that there is no new transversal of $\G$ \Wrt $\H$.
If $\G$ or $\H$ contains an empty edge, then they do no admit transversals and independent sets, hence there is no new transversal of $\G$ \Wrt $\H$.
If $\G$ or $\H$ is an empty hypergraph, then they have to be trivially dual.
This means that one of them contains an empty edge and hence there is no new transversal of $\G$ \Wrt $\H$ (see above).
Let us now consider the case in which both $\G$ and $\H$ contain only non\nbdash-empty edges.
Assume by contradiction that there is a new transversal $T$ of $\G$ \Wrt $\H$.
We can assume \Wlog that $T$ is minimal.
Since $T$ is an independent set of $\H$, for all edges $H\in\H$, there is a vertex $v$ such that $v\in H\setminus T$, and hence $H\neq T$.
Thus, $T$ is a minimal transversal of $\G$ missing in $\H$, and hence $\H\neq\Tr(\G)$: a contradiction, because we are assuming $\G$ and $\H$ to be dual.

$(\Leftarrow)$
First consider the case in which one of the hypergraphs contains an empty edge, and let us assume that $\emptyset \in \G$.
Since $\G$ is simple, it must be the case that $\G = \{\emptyset\}$.
Moreover, from $\G$ and $\H$ satisfying the intersection property follows that $\H = \emptyset$.
Observe that there is no new transversal of $\G$ \Wrt $\H$ and that $\G$ and $\H$ are (trivially) dual.
Similarly, if $\emptyset \in \H$, by swapping the roles of $\G$ and $\H$ in the discussion above, we can show that $\G$ and $\H$ are (trivially) dual.

Consider now the case in which one of the hypergraphs is empty, and let us assume that $\G = \emptyset$.
We claim that $\emptyset \in \H$.
Assume by contradiction that $\emptyset \notin \H$.
Then, the empty set is a transversal of $\G$ that is an independent set of $\H$: a contradiction, because we are assuming that there is no new transversal of $\G$ \Wrt $\H$.
Therefore, if $\G = \emptyset$, then $\emptyset \in \H$, and we are again in the case analyzed above.
Similarly, it can be shown that, if $\H = \emptyset$, then $\emptyset \in \G$, and we are again in the case analyzed above.

Let us now consider the case in which both $\G$ and $\H$ contain only non\nbdash-empty edges.
Let $T$ be a minimal transversals of $\G$.
Since there is no new transversal of $\G$, $T$ is not an independent set of $\H$, and hence there is an edge $H\in\H$ such that $H\subseteq T$.
Since all edges of $\H$ are transversals of $\G$ (because $\G$ and $\H$ satisfy the intersection property), we claim that $H=T$.
Indeed, since $H\subseteq T$ and $T$ is a minimal transversal of $\G$, from $H$ being a transversal of $\G$ it cannot happen that $H$ is strictly contained in $T$, hence $H=T$.
Because this holds for all the minimal transversals of $\G$, i.e., for any minimal transversal $T$ of $\G$, $T$ is in $\H$, it follows that $\Tr(\G)\subseteq \H$.
Moreover, because $\H$ is a simple hypergraph it must be the case that $\H=\Tr(\G)$.
Thus, hypergraphs $\G$ and $\H$ are dual.
\end{proof}

\section{Proofs of properties stated in Section~\ref{sec:decomposition}}\label{sec:app_prove_decomposition}

\noindent\textbf{\Cref*{lemma:assignment_coherent_new_Tr_then_not_covering}.}
\textit{%
Let $\G$ and $\H$ be two hypergraphs, and let $\sigma=\assign{\In,\Ex}$ be an assignment.
\begin{enumerate}[label=$(\mathit{\alph*})$]
\item If $\Ex$ is covering, then there is no set of vertices coherent with $\sigma$ that is a transversal of $\G$.
\item If $\In$ is covering, then there is no set of vertices coherent with $\sigma$ that is an independent set of $\H$.
\end{enumerate}
Hence, if $\sigma$ is covering, then there is no set of vertices coherent with $\sigma$ that is a new transversal of $\G$ \Wrt $\H$.
}\nopagebreak\begin{proof}\hspace{0pt}
\begin{enumerate}[label=$(\mathit{\alph*})$]
\item
If $\emptyset \in \G$, then there is no transversal of $\G$ at all and point $(a)$ trivially holds, because the consequent is true.
On the other hand, if $\G = \emptyset$, then any set $\Ex$ is trivially non\nbdash-covering and point $(a)$ trivially holds, because the antecedent is false.
Consider now the case in which $\G$ contains only non\nbdash-empty edges.
Assume by contradiction that there is an edge $G\in\G$ such that $G\subseteq\Ex$ and that there is a transversal $T$ of $\G$ coherent with $\sigma$.
Since $T$ is a transversal of $\G$, from $T\cap G\neq\emptyset$ follows $T\cap\Ex\neq\emptyset$: a contradiction, because $\sigma$ is coherent with $T$.

\item
If $\emptyset \in \H$, then there is no independent set of $\H$ at all and point $(b)$ trivially holds, because the consequent is true.
On the other hand, if $\H = \emptyset$, then any set $\In$ is trivially non\nbdash-covering and point $(b)$ trivially holds, because the antecedent is false.
Consider now the case in which $\H$ contains only non\nbdash-empty edges.
If $\In$ is covering, then there is an edge $H\in\H$ such that $H\subseteq\In$.
Clearly, no set of vertices $T$ coherent with $\sigma$ can be an independent set of $H$, because from $T$ being coherent with $\sigma$ follows $\In\subseteq T$, and hence $H\subseteq T$.
\end{enumerate}
\end{proof}

\begin{lemma}\label{lemma:summary_properties}
Let $\G$ and $\H$ be two hypergraphs, and let $\sigma=\assign{\In,\Ex}$ be an assignment.
Then:
\begin{enumerate}[label=$(\mathit{\alph*})$]
\item
\begin{itemize}[label=$-$,leftmargin=*,noitemsep]
\item $\emptyset \in \G \Leftrightarrow \emptyset \in \G_{V\setminus\In}$; and
\item $\emptyset \in \H \Leftrightarrow \emptyset \in \H_{V\setminus\Ex}$;
\end{itemize}

\item
\begin{itemize}[label=$-$,leftmargin=*,noitemsep]
\item $\emptyset \in \G_{V\setminus\In} \Rightarrow \G(\sigma) = \{\emptyset\}$; and,
\item $\emptyset \in \H_{V\setminus\Ex} \Rightarrow \H(\sigma) = \{\emptyset\}$;
\end{itemize}

\item
\begin{itemize}[label=$-$,leftmargin=*,noitemsep]
\item $\G = \emptyset \Rightarrow \G_{V\setminus\In} = \emptyset$; and
\item $\H = \emptyset \Rightarrow \H_{V\setminus\Ex} = \emptyset$;
\end{itemize}

\item
\begin{itemize}[label=$-$,leftmargin=*,noitemsep]
\item $\G_{V\setminus\In} = \emptyset \Leftrightarrow \G(\sigma) = \emptyset$; and,
\item $\H_{V\setminus\Ex} = \emptyset \Leftrightarrow \H(\sigma) = \emptyset$;
\end{itemize}


\item
\begin{itemize}[label=$-$,leftmargin=*,noitemsep]
\item $\G(\sigma) = \emptyset$ if and only if $\In$ is a transversal of $\G$; and,
\item $\H(\sigma) = \emptyset$ if and only if $\Ex$ is a transversal of $\H$;
\end{itemize}

\item
\begin{itemize}[label=$-$,leftmargin=*,noitemsep]
\item $\G(\sigma) = \{\emptyset\}$ if and only if $\Ex$ is covering; and,
\item $\H(\sigma) = \{\emptyset\}$ if and only if $\In$ is covering;
\end{itemize}

\item
\begin{itemize}[label=$-$,leftmargin=*,noitemsep]
\item $\G(\sigma)$ contains non\nbdash-empty edges if and only if $\In$ is not a transversal of $\G$ and $\Ex$ is not covering; and,
\item $\H(\sigma)$ contains non\nbdash-empty edges if and only if $\Ex$ is not a transversal of $\H$ and $\In$ is not covering.
\end{itemize}
\end{enumerate}
\end{lemma}
\begin{proof}\hspace{0pt}
\begin{enumerate}[label=$(\mathit{\alph*})$]
\item
The property follows from the fact that the empty edge is subset of any set of vertices and from $\G_{V\setminus\In} \subseteq \G$ and $\H_{V\setminus\Ex} \subseteq \H$.

\item
The property follows from the fact that the intersection of the empty edge with any set of vertices is the empty set and from the fact that $\G(\sigma)$ and $\H(\sigma)$ undergo a minimization operation.

\item
The property follows from $\G_{V\setminus\In} \subseteq \G$ and $\H_{V\setminus\Ex} \subseteq \H$.

\item
We prove the property for $\G$.
The proof for $\H$ is symmetric.

$(\Rightarrow)$
If $\G_{V\setminus\In} = \emptyset$, then $\G(\sigma) = \emptyset$ by definition.

$(\Leftarrow)$
Assume that $\G(\sigma) = \emptyset$ and let us assume by contradiction that $\G_{V\setminus\In} \neq \emptyset$ (i.e., $\G_{V\setminus\In}$ contains edges).
There are two cases: either (1) there is an edge $G\in\G_{V\setminus\In}$ such that $G\cap(V\setminus(\In\cup\Ex)) = \emptyset$ (observe that it may be the case that such edge $G$ is the empty one), or (2) there is not such an edge (i.e., all edges in $\G_{V\setminus\In}$ have a non\nbdash-empty intersection with $V\setminus(\In\cup\Ex)$).
In Case (1), since there is an edge $G\in\G_{V\setminus\In}$ such that $G\cap(V\setminus(\In\cup\Ex))$ is empty, $\emptyset \in \G(\sigma)$: a contradiction, because we are assuming $\G(\sigma) = \emptyset$.
For Case (2), since, for all edges $G\in\G_{V\setminus\In}$, $G\cap(V\setminus(\In\cup\Ex))$ is not empty, $\G(\sigma)$ contains non\nbdash-empty edges: a contradiction, because we are assuming $\G(\sigma) = \emptyset$.
Therefore, it must be the case that $\G_{V\setminus\In} = \emptyset$.

\item
We prove the property for $\G(\sigma)$.
The proof for $\H(\sigma)$ is symmetric.

If $\emptyset \in \G$, then there is no transversal of $\G$ at all and, by points $(a)$ and $(b)$, $\G(\sigma) = \{\emptyset\}$.
Hence, the property trivially holds ($\mathit{false}$ if and only if $\mathit{false}$).

If $\G = \emptyset$, then any set of vertices is a transversal of $\G$ and, by points $(c)$ and $(d)$, $\G(\sigma) = \emptyset$.
Hence, the property trivially holds ($\mathit{true}$ if and only if $\mathit{true}$).

Assume that $\G$ contains only non\nbdash-empty edges.
Consider $\G_{V\setminus\In}$ and observe that, by definition, when $\G \neq \emptyset$, $\G_{V\setminus\In} = \emptyset$ if and only if $\In$ is a transversal of $\G$.
Therefore, by point $(d)$, this property follows.

\item
We prove the property for $\G(\sigma)$.
The proof for $\H(\sigma)$ is symmetric.

If $\emptyset \in \G$, then any set of vertices $\Ex$ is covering, because any $\Ex$ is superset of the empty edge contained in $\G$, and, by points $(a)$ and $(b)$, $\G(\sigma) = \{\emptyset\}$.
Hence, the property trivially holds ($\mathit{true}$ if and only if $\mathit{true}$).

If $\G = \emptyset$, then any set of vertices $\Ex$ is non\nbdash-covering, because there is no edge $G\in\G$ for which $G\subseteq\Ex$, and, by points $(c)$ and $(d)$, $\G(\sigma) = \emptyset$.
Hence, the property trivially holds ($\mathit{false}$ if and only if $\mathit{false}$).

Assume that $\G$ contains only non\nbdash-empty edges.
First note that, since $\G$ contains only non\nbdash-empty edges, by point $(a)$, $\emptyset \notin \G_{V\setminus\In}$.

$(\Rightarrow)$
Since $\G(\sigma) = \{\emptyset\}$, by point $(d)$, $\G_{V\setminus\In} \neq \emptyset$.
Therefore, $\G_{V\setminus\In}$ contains only non\nbdash-empty edges.
Let us assume by contradiction that $\Ex$ is not covering, and let $G\in\G_{V\setminus\In}$ be an edge.
By definition of $\G_{V\setminus\In}$, for any $G\in\G_{V\setminus\In}$, $G\subseteq V\setminus\In$, and hence $G\cap\In=\emptyset$.
Since $\G_{V\setminus\In}\subseteq\G$ and $\Ex$ is not covering, there is a vertex $v\in(G\setminus\Ex)$.
From $G\cap\In=\emptyset$ and $v\in(G\setminus\Ex)$, it follows that $v\in (G\cap(V\setminus(\In\cup\Ex)))$ and hence that $G\cap(V\setminus(\In\cup\Ex))\neq\emptyset$.
Therefore, $\G(\sigma)$ contains a non-empty edge: a contradiction, because we are assuming $\G(\sigma) = \{\emptyset\}$.
Thus, $\Ex$ is covering.

$(\Leftarrow)$
Remember that $\G_{V\setminus\In}$ contains all and only the edges $G\in\G$ such that $G\cap\In=\emptyset$.
Since $\Ex$ is covering, there is an edge $G\in\G$ such that $G\subseteq\Ex$. Observe that $G$ must belong to $\G_{V\setminus\In}$ (because $\In$ and $\Ex$ are disjoint).
Therefore, from $G\in\G_{V\setminus\In}$ and $G\cap(V\setminus(\In\cup\Ex))=\emptyset$, it follows that $\G(\sigma) = \{\emptyset\}$.

\item
The property follows from points $(e)$ and $(f)$.
\end{enumerate}
\end{proof}

\begin{lemma}\label{lemma:covering_assign_generates_dual}
Let $\G$ and $\H$ be two hypergraphs satisfying the intersection property, and let $\sigma=\assign{\In,\Ex}$ be a covering assignment.
Then:
\begin{enumerate}[label=$(\mathit{\alph*})$]
\item
$\In$ and $\Ex$ cannot be both covering;

\item
$\G(\sigma)$ and $\H(\sigma)$ are trivially dual.
In particular, if $\In$ is covering, then $\G(\sigma) = \emptyset$ and $\H(\sigma) = \{\emptyset\}$; and, symmetrically, if $\Ex$ is covering, then $\G(\sigma) = \{\emptyset\}$ and $\H(\sigma) = \emptyset$.
\end{enumerate}
\end{lemma}
\begin{proof}\hspace{0pt}
\begin{enumerate}[label=$(\mathit{\alph*})$]
\item
If $\G = \emptyset$ (resp., $\H = \emptyset$), then $\Ex$ (resp., $\In$) is trivially non\nbdash-covering, because there is no edge $G\in\G$ (resp., $H\in\H$) such that $G\subseteq\Ex$ (resp., $H\subseteq\In$).

If $\emptyset \in \G$ (resp., $\emptyset \in \H$), then $\H = \emptyset$ (resp., $\G = \emptyset$) by the intersection property, and hence we are again in the case above.

Assume that both $\G$ and $\H$ contain only non\nbdash-empty edges.
Assume by contradiction that there are two edges $G\in\G$ and $H\in\H$ such that $G\subseteq\Ex$ and $H\subseteq\In$.
Because of the intersection property, from $G\cap H\neq\emptyset$ follows $\In\cap\Ex\neq\emptyset$: a contradiction, because $\sigma$ is an assignment.

\item
If $\emptyset \in \G$, then $\H = \emptyset$ by the intersection property.
In this case, any set of vertices $\Ex$ is covering, because any $\Ex$ is superset of the empty edge contained in $\G$.
By \cref{lemma:summary_properties} points $(a)$ and $(b)$, $\G(\sigma) = \{\emptyset\}$, and, by \cref{lemma:summary_properties} points $(c)$ and $(d)$, $\H = \emptyset$.
Symmetrically, if $\emptyset \in \H$, then $\G = \emptyset$ by the intersection property, $\In$ is covering, $\G = \emptyset$, and $\H(\sigma) = \{\emptyset\}$.

If $\G = \emptyset$, then $\H \neq \emptyset$ for otherwise $\sigma$ would not be covering.
Since $\G = \emptyset$, for $\sigma$ to be covering it must be the case that $\In$ is covering.
By \cref{lemma:summary_properties} point $(f)$, $\H(\sigma) = \{\emptyset\}$, and by \cref{lemma:summary_properties} points $(c)$ and $(d)$, $\G(\sigma) = \emptyset$.
Symmetrically, if $\H = \emptyset$, then $\G \neq \emptyset$ for otherwise $\sigma$ would not be covering, and it can be shown that $\Ex$ is covering, $\G(\sigma) = \{\emptyset\}$, and $\H = \emptyset$.

Assume that both $\G$ and $\H$ contain only non\nbdash-empty edges.
If $\In$ is covering, then $\In$ is a transversal of $\G$ because $\G$ and $\H$ satisfy the intersection property.
Thus, by \cref{lemma:summary_properties} point $(e)$, $\G(\sigma)=\emptyset$.
Since $\In$ is covering, by \cref{lemma:summary_properties} point $(f)$, $\H(\sigma)=\{\emptyset\}$.
Symmetrically, it can be shown that, if $\Ex$ is covering, then $\G(\sigma)=\{\emptyset\}$ and $\H(\sigma)=\emptyset$.
\end{enumerate}
\end{proof}

\begin{lemma}\label{lemma:proiezione_hitting_property}
Let $\G$ and $\H$ be two hypergraphs.
Then, $\G$ and $\H$ satisfy the intersection property if and only if, for all assignments $\sigma$, $\G(\sigma)$ and $\H(\sigma)$ satisfy the intersection property.
\end{lemma}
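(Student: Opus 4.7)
My plan is to argue by case analysis on whether the assignment $\sigma=\assign{\In,\Ex}$ is covering. If $\sigma$ is covering, then Lemma~\ref{lemma:assign_covering_genera_dual} already says that one of $\G(\sigma)$, $\H(\sigma)$ is the empty hypergraph and the other is the empty-edge hypergraph, so the intersection property holds vacuously (either because there are no edges to check on one side, or because the single empty edge is a transversal of the empty hypergraph). So the real work is confined to the non-covering case.

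In the non-covering case, let $F=V\setminus(\In\cup\Ex)$ be the free vertices. I will unfold the definitions of $\G(\sigma)=(\G_{V\setminus \In})^{F}$ and $\H(\sigma)=(\H_{V\setminus \Ex})^{F}$: an edge $G'\in\G(\sigma)$ is a minimal element of the family $\{G\cap F\mid G\in\G,\ G\cap\In=\emptyset\}$, and analogously an edge $H'\in\H(\sigma)$ has the form $H\cap F$ with $H\in\H$ and $H\cap\Ex=\emptyset$. Since $\sigma$ is not covering, none of these intersections with $F$ is empty (otherwise the underlying edge would sit inside $\Ex$ or $\In$, contradicting non-coveringness), so both $\G(\sigma)$ and $\H(\sigma)$ consist only of non-empty edges.

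The core of the proof is then a single four-line calculation. Take any $G'\in\G(\sigma)$ and any $H'\in\H(\sigma)$, write $G'=G\cap F$ and $H'=H\cap F$ with $G\in\G$, $H\in\H$, $G\cap\In=\emptyset$, and $H\cap\Ex=\emptyset$. By the intersection property of $\G$ and $\H$ applied to the original edges $G$ and $H$, there exists a vertex $v\in G\cap H$. From $v\in G$ and $G\cap\In=\emptyset$ we get $v\notin\In$; from $v\in H$ and $H\cap\Ex=\emptyset$ we get $v\notin\Ex$. Hence $v\in F$, and therefore $v\in(G\cap F)\cap(H\cap F)=G'\cap H'$, so $G'\cap H'\neq\emptyset$.

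I do not expect any serious obstacle here: the minimization step in the definition of $\G^{S}$ only selects certain members of the family $\{G\cap F\}$ but does not shrink them further, so it is transparent to the argument above. The only mildly delicate point is the bookkeeping for the degenerate cases $\G(\sigma)=\emptyset$, $\H(\sigma)=\emptyset$, or $\emptyset\in\G(\sigma)\cup\H(\sigma)$, all of which are cleanly handled by Lemma~\ref{lemma:assign_covering_genera_dual} together with Lemma~\ref{lemma:summary_properties}.
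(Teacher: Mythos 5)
Your proof is correct and follows essentially the same route as the paper's: both reduce to the covering case via Lemma~\ref{lemma:assign_covering_genera_dual} and then, in the non-covering case, lift a pair of edges of $\G(\sigma)$ and $\H(\sigma)$ back to edges $G\in\G$, $H\in\H$ with $G\cap\In=\emptyset$ and $H\cap\Ex=\emptyset$ and use the intersection property of the original hypergraphs. The only cosmetic difference is that you argue directly that the common vertex $v\in G\cap H$ survives into the free set $F$, whereas the paper phrases the same computation as a proof by contradiction.
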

\begin{proof}
$(\Rightarrow)$
If $\G = \emptyset$ (resp., $\H = \emptyset$), then, by \cref{lemma:summary_properties} points $(c)$ and $(d)$, $\G(\sigma) = \emptyset$ (resp., $\H(\sigma) = \emptyset$).
Since $\G(\sigma) = \emptyset$ (resp., $\H(\sigma) = \emptyset$), trivially $\G(\sigma)$ and $\H(\sigma)$ satisfy the intersection property.

If $\emptyset \in \G $ (resp., $\emptyset \in \H$), then $\H = \emptyset$ (resp., $\G = \emptyset$) by the intersection property, and we are again in the case above.

Assume that both $\G$ and $\H$ contain only non\nbdash-empty edges.
Let $\sigma=\assign{\In,Ex}$.
If $\sigma$ is covering, then, by \cref{lemma:covering_assign_generates_dual} point $(b)$, $\G(\sigma)$ and $\H(\sigma)$ are (trivially) dual, and hence they satisfy also the intersection property.

In case $\sigma$ is non\nbdash-covering, if $\In$ (resp.\ $\Ex$) is a transversal of $\G$ (resp.\ $\H$), then, by \cref{lemma:summary_properties} point $(e)$, $\G(\sigma)=\emptyset$ (resp.\ $\H(\sigma)=\emptyset$).
In these cases, since at least one of the two hypergraphs $\G(\sigma)$ and $\H(\sigma)$ is empty, they satisfy the intersection property.

Let us consider now the case in which both $\In$ and $\Ex$ are not transversals of $\G(\sigma)$ and $\H(\sigma)$, respectively.
By \cref{lemma:summary_properties} point $(g)$, both $\G(\sigma)$ and $\H(\sigma)$ contain only non\nbdash-empty edges.
Assume by contradiction that there are two edges $G'\in\G(\sigma)$ and $H'\in\H(\sigma)$ such that $G'\cap H'=\emptyset$.
Since $G'\in\G(\sigma)$ there is an edge $G\in\G$ such that $G=G'\cup A$, with $\emptyset\subseteq A\subseteq\Ex$, and with $G\cap\In=\emptyset$ (for otherwise $G'$ would not be in $\G(\sigma)$).
Moreover, since $H'\in\H(\sigma)$ there is an edge $H\in\H$ such that $H=H'\cup B$, with $\emptyset\subseteq B\subseteq\In$, and with $H\cap\Ex=\emptyset$ (for otherwise $H'$ would not be in $\H(\sigma)$).
From this follows $G\cap H=\emptyset$: a contradiction, because $\G$ and $\H$ satisfy the intersection property.
Therefore, $\G(\sigma)$ and $\H(\sigma)$ satisfy the intersection property.

$(\Leftarrow)$
Since $\G$ and $\H$ do not satisfy the intersection property, there are two edges $G\in\G$ and $H\in\H$ such that $G \cap H = \emptyset$.
Assume \Wlog that $G$ and $H$ are not superset of other edges (remember that, in the statement of the lemma, $\G$ and $\H$ are not assumed to be simple).
Consider the empty assignment $\emptyassign$.
Observe that $G\in\G(\emptyassign)$ and $H\in\H(\sigma)$ and hence $\G(\emptyassign)$ and $\H(\emptyassign)$ do not satisfy the intersection property.
\end{proof}

\begin{lemma}\label{lemma:composition_property_transversals_expansion}
Let $\G$ and $\H$ be two hypergraphs, and let $\sigma=\assign{\In,\Ex}$ be an assignment.
\begin{enumerate}[label=$(\mathit{\alph*})$]
\item
If $T'$ is a transversal of $\G(\sigma)$, then $T=T'\cup\In$ is a transversal of $\G$.
\item
If $T'$ is an independent set of $\H(\sigma)$, then $T=T'\cup\In$ is an independent set of $\H$.
\end{enumerate}
Hence, if $T'$ is a new transversal of $\G(\sigma)$ \Wrt $\H(\sigma)$, then $T=T'\cup\In$ is a new transversal of $\G$ \Wrt $\H$.
\end{lemma}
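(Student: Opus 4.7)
I recall that $\G(\sigma)$ and $\H(\sigma)$ are hypergraphs over the vertex set $V\setminus(\In\cup\Ex)$, obtained by first restricting to edges disjoint from $\In$ (resp.\ $\Ex$) and then projecting and minimizing. By the convention that a transversal or independent set of a hypergraph is a subset of its vertex set, both candidates $T'$ live in $V\setminus(\In\cup\Ex)$; in particular $T'\cap\Ex=\emptyset$ and $T\cap\Ex=(T'\cup\In)\cap\Ex=\emptyset$. This disjointness of $T$ from $\Ex$ is the pivotal property I will exploit in both parts.

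For part $(a)$, I take an arbitrary $G\in\G$ and argue by cases. If $G\cap\In\neq\emptyset$, then $T\supseteq\In$ already meets $G$. Otherwise $G\cap\In=\emptyset$, so $G\in\G_{V\setminus\In}$, and its projection $G\setminus\Ex$ is nonempty (else $\emptyset\in\G(\sigma)$, making it impossible for $T'$ to be a transversal of $\G(\sigma)$). By definition of $\min$ inside $\G(\sigma)$, there exists some $\tilde G\in\G(\sigma)$ with $\tilde G\subseteq G\setminus\Ex$. Since $T'$ is a transversal of $\G(\sigma)$, $T'\cap\tilde G\neq\emptyset$, and hence $T\cap G\supseteq T'\cap(G\setminus\Ex)\neq\emptyset$.

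For part $(b)$, I argue by contradiction: suppose $H\in\H$ satisfies $H\subseteq T=T'\cup\In$. If $H\cap\Ex\neq\emptyset$, then $H\not\subseteq T$ because $T\cap\Ex=\emptyset$, contradiction. So $H\cap\Ex=\emptyset$, i.e., $H\in\H_{V\setminus\Ex}$. From $H\subseteq T'\cup\In$ it follows that $H\setminus\In\subseteq T'$. If $H\setminus\In=\emptyset$, then $\emptyset$ would arise among the projected edges and force $\emptyset\in\H(\sigma)$, ruling out any independent set; hence $H\setminus\In\neq\emptyset$. By the $\min$ operation defining $\H(\sigma)$, there is some $\tilde H\in\H(\sigma)$ with $\tilde H\subseteq H\setminus\In$, and therefore $\tilde H\subseteq T'$, contradicting the independence of $T'$ in $\H(\sigma)$.

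Combining $(a)$ and $(b)$ immediately yields the final claim that $T=T'\cup\In$ is a new transversal of $\G$ \Wrt $\H$ whenever $T'$ is a new transversal of $\G(\sigma)$ \Wrt $\H(\sigma)$. The only mildly delicate point is the handling of the minimization: one must observe that the mere existence of $T'$ with the claimed property rules out $\emptyset$ appearing in $\G(\sigma)$ or $\H(\sigma)$, which in turn guarantees that the projected sets $G\setminus\Ex$ and $H\setminus\In$ contain some actual edge of $\G(\sigma)$ or $\H(\sigma)$ that can be invoked. This is the step I would write out most carefully; everything else is bookkeeping on the definitions of $\G_S$, $\G^S$, $\In$, and $\Ex$.
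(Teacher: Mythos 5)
Your proof is correct and follows essentially the same route as the paper's: a case split on whether an edge of $\G$ meets $\In$ (resp.\ an edge of $\H$ meets $\Ex$), followed by pulling back a minimal projected edge of $\G(\sigma)$ (resp.\ $\H(\sigma)$) contained in $G\setminus\Ex$ (resp.\ $H\setminus\In$), with the degenerate empty-edge case dispatched because $\{\emptyset\}$ admits no transversal or independent set. The only cosmetic difference is that you phrase part $(b)$ as a contradiction where the paper argues directly.
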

\begin{proof}
We remind the reader that, since $\G(\sigma)$ (resp., $\H(\sigma)$) undergoes a minimization operation, it is not possible that $\emptyset \in \G(\sigma)$ and $\G(\sigma) \neq \{\emptyset\}$ (resp., $\emptyset \in \H(\sigma)$ and $\H(\sigma) \neq \{\emptyset\}$) at the same time.
Therefore, $\G(\sigma)$ (resp., $\H(\sigma)$) fulfills exactly one of these conditions: is empty, or contains only the empty edge, or contains only non\nbdash-empty edges.

\begin{enumerate}[label=$(\mathit{\alph*})$]
\item
If $\G(\sigma)=\{\emptyset\}$, then there are no transversals of $\G(\sigma)$ at all, and hence the property trivially holds because the antecedent is false.

If $\G(\sigma) = \emptyset$, then, by \cref{lemma:summary_properties} point $(e)$, $\In$ is a transversal of $\G$.
Hence, for any $T'$, $T = T' \cup \In$ is a transversal of $\G$ as well.

Assume that $\G(\sigma)$ contains only non\nbdash-empty edges.
By \cref{lemma:summary_properties} points $(a)$, $(b)$, $(c)$, and $(d)$, $\G$ contains only non\nbdash-empty edges.
Observe that, for each edge $G\in\G$, there are two cases: either $G\cap\In\neq\emptyset$ or $G\cap\In=\emptyset$.
If $G\cap\In\neq\emptyset$, then $T\cap G\neq\emptyset$ because $\In\subseteq T$.
If $G\cap\In=\emptyset$, then there is an edge $\emptyset\neq G'\in\G(\sigma)$ such that $G'\subseteq G$.
Hence, $T\cap G\neq\emptyset$ because $T'\cap G'\neq\emptyset$, $G'\subseteq G$, and $T'\subseteq T$.

\item
If $\H(\sigma)=\{\emptyset\}$, then there are no independent sets of $\H(\sigma)$ at all, and hence the property trivially holds because the antecedent is false.

If $\H(\sigma) = \emptyset$, then, by \cref{lemma:summary_properties} point $(e)$, $\Ex$ is a transversal of $\H$.
Therefore, $V\setminus\Ex$ is an independent set of $\H$.
Observe that the vertex set of $\H(\sigma)$ is $V\setminus(\In\cup\Ex)$, hence $T' \subseteq V\setminus(\In\cup\Ex)$.
Thus, any set $T = T' \cup \In$ is such that $T\subseteq V\setminus\Ex$, and, because $V\setminus\Ex$ is an independent set of $\H$, $T$ is an independent set of $\H$ as well.

Assume that $\H(\sigma)$ contains only non\nbdash-empty edges.
Since the vertex set of $\H(\sigma)$ is $V\setminus(\In\cup\Ex)$, $T' \subseteq V\setminus(\In\cup\Ex)$.
Observe that, for each edge $H\in\H$, there are two cases: either $H\cap\Ex\neq\emptyset$ or $H\cap\Ex=\emptyset$.
From $T'\subseteq V\setminus (\In\cup\Ex)$ and $\In\cap\Ex=\emptyset$, it follows that $T\cap\Ex=\emptyset$.
So, if $H\cap\Ex\neq\emptyset$, then $H\not\subseteq T$.
On the other hand, if $H\cap\Ex=\emptyset$, then there is an edge $\emptyset\neq H'\in\H(\sigma)$ such that $H'\subseteq H$.
Since $T'$ is an independent set of $\H(\sigma)$, there is a vertex $v\in(H'\setminus T')$.
From $H'\subseteq V\setminus(\In\cup\Ex)$ and $v\in H'$, it follows that $v\notin\In$.
Therefore, since $T=T'\cup\In$, $v\notin T$, and, because $v\in H'\subseteq H$, $H\not\subseteq T$.
\end{enumerate}
\end{proof}

Observe that, by the symmetrical nature of the \Dual problem, the roles of $\G$ and $\H$ can be swapped in an instance of \Dual.
By this reason, \cref{lemma:composition_property_transversals_expansion} can be easily adapted to state that transversals of $\H(\sigma)$ and independent sets of $\G(\sigma)$ can be extended, in this case by adding $\Ex$, to be transversals of $\H$ and independent sets of $\G$, respectively.

\begin{lemma}\label{lemma:composition_property_transversals_projection}
Let $\G$ and $\H$ be two hypergraphs.
\begin{enumerate}[label=$(\mathit{\alph*})$]
\item
If $T$ is a transversal of $\G$, then, for any assignment $\sigma=\assign{\In,\Ex}$ coherent with $T$, $T'=T\setminus\In$ is a transversal of $\G(\sigma)$.
\item
If $T$ is an independent set of $\H$, then, for any assignment $\sigma=\assign{\In,\Ex}$ coherent with $T$, $T'=T\setminus\In$ is an independent set of $\H(\sigma)$.
\end{enumerate}
Hence, if $T$ is a new transversal of $\G$ \Wrt $\H$, then, for any assignment $\sigma=\assign{\In,\Ex}$ coherent with $T$, $T'=T\setminus\In$ is a new transversal of $\G(\sigma)$ \Wrt $\H(\sigma)$.
\end{lemma}
\begin{proof}\hspace{0pt}
\begin{enumerate}[label=$(\mathit{\alph*})$]
\item
If $\emptyset \in \G$, then there are no transversals of $\G$ at all, and hence the property trivially holds because the antecedent is false.

If $\G = \emptyset$, then, by \cref{lemma:summary_properties} points $(c)$ and $(d)$, $\G(\sigma) = \emptyset$.
Since $\G(\sigma) = \emptyset$, any set of vertices $T'$ (even the empty one) is a transversal of $\G(\sigma)$, hence the property trivially holds because the consequent is true.

Assume that $\G$ contains only non\nbdash-empty edges.
There are two cases: either $\In$ is a transversal of $\G$, or it is not.
If $\In$ is a transversal of $\G$, then, by \cref{lemma:summary_properties} point $(e)$, $\G(\sigma)=\emptyset$.
Hence, trivially any set of vertices is a transversal of $\G(\sigma)$, and so is $T'$.
Consider now the case in which $\In$ is not a transversal of $\G$.
Since $T$ is a transversal of $\G$ coherent with $\sigma$, by (the contrapositive of) \cref{lemma:assignment_coherent_new_Tr_then_not_covering} point $(a)$, $\Ex$ is not covering.
Therefore, by \cref{lemma:summary_properties} point $(g)$, $\G(\sigma)$ contains only non\nbdash-empty edges.
Let $G'\in\G(\sigma)$ be an edge.
By definition of $\G(\sigma)$, there is an edge $G\in\G$ such that $G\cap\In=\emptyset$ (for otherwise $G'$ would not be in $\G(\sigma)$) and $G'=G\cap(V\setminus(\In\cup\Ex))$.
Since $T$ is a transversal of $\G$, $T\cap (G\setminus\In)\neq\emptyset$, because $G\cap\In=\emptyset$, and $T\cap (G\setminus\Ex)\neq\emptyset$, because $\sigma\dblsubeq T$ and so $T\cap\Ex=\emptyset$.
Therefore, $T\cap (G\setminus(\In\cup\Ex))\neq\emptyset$.
Since $G'=G\cap(V\setminus(\In\cup\Ex))$, $T'=T\setminus\In$ has a non\nbdash-empty intersection with $G'$, and hence $T'$ is a transversal of $\G(\sigma)$.

\item
If $\emptyset \in \H$, then there are no independent sets of $\H$ at all, and hence the property trivially holds because the antecedent is false.

If $\H = \emptyset$, then, by \cref{lemma:summary_properties} points $(c)$ and $(d)$, $\H(\sigma) = \emptyset$.
Since $\H(\sigma) = \emptyset$, any set of vertices $T'$ (even the empty one) is an independent set of $\H(\sigma)$, hence the property trivially holds because the consequent is true.

Assume that $\H$ contains only non\nbdash-empty edges.
There are two cases: either $\Ex$ is a transversal of $\H$, or it is not.
If $\Ex$ is a transversal of $\H$, then, by \cref{lemma:summary_properties} point $(e)$, $\H(\sigma)=\emptyset$.
Hence, trivially any set of vertices is an independent set of $\H(\sigma)$, and so is $T'$.
Consider now the case in which $\Ex$ is not a transversal of $\H$.
Since $T$ is an independent set of $\H$ coherent with $\sigma$, by (the contrapositive of) \cref{lemma:assignment_coherent_new_Tr_then_not_covering} point $(b)$, $\In$ is not covering.
Therefore, by \cref{lemma:summary_properties} point $(g)$, $\H(\sigma)$ contains only non\nbdash-empty edges.
Let $H'\in\H(\sigma)$ be an edge.
By definition of $\H(\sigma)$, there is an edge $H\in\H$ such that $H\cap\Ex=\emptyset$ (for otherwise $H'$ would not be in $\H(\sigma)$) and $H'=H\cap(V\setminus(\In\cup\Ex))$.
Since $T$ is an independent set of $\H$, there is a vertex $v\in (H\setminus T)$ such that $v\notin\Ex$, because $H\cap\Ex=\emptyset$, and $v\notin\In$, because $\sigma\dblsubeq T$ and so $\In\subseteq T$.
Therefore, from $v\notin\Ex$ and $v\notin\In$, it follows that $v\in H'$ because $H'=H\cap(V\setminus(\In\cup\Ex))$, and from $v\notin T$ (because $v\in (H\setminus T)$), it follows that $v\notin T'$.
Hence $H'\not\subseteq T'$, and thus $T'$ is an independent set of $\H(\sigma)$.
\end{enumerate}
\end{proof}

Observe again that, by the symmetrical nature of the \Dual problem, by swapping the roles of $\G$ and $\H$, and by considering the reversed assignment $\compl{\sigma}=\assign{\Ex,\In}$, \cref{lemma:composition_property_transversals_projection} can be easily adapted to state that transversals of $\H$ and independent sets of $\G$ coherent with $\compl{\sigma}$ can be shrunk, in this case by removing $\Ex$, to be transversals of $\H(\sigma)$ and independent sets of $\G(\sigma)$, respectively.

The following corollary descends directly from \cref{lemma:composition_property_transversals_expansion,lemma:composition_property_transversals_projection} and highlights an interesting property of decompositions.
\begin{corollary}
Let $\G$ and $\H$ be two hypergraphs.
Then, for all assignments $\sigma$, there is a new transversal of $\G(\sigma)$ \Wrt $\H(\sigma)$ if and only if there is a new transversal of $\G$ \Wrt $\H$ coherent with $\sigma$.
\end{corollary}

From the previous corollary, by noticing that the empty assignment is coherent with any set of vertices, we obtain the next property.

\begin{corollary}\label{corol:no_new_tr_iff_no_new_tr_in_subinstances}
Let $\G$ and $\H$ be two hypergraphs.
There is no new transversal of $\G$ \Wrt $\H$ if and only if, for all assignments $\sigma$, there is no new transversal of $\G(\sigma)$ \Wrt $\H(\sigma)$.
\end{corollary}

\begin{lemma}\label{lemma:if_dual_then_all_subinstaces_dual}
Let $\G$ and $\H$ be two dual hypergraphs.
Then, for all assignments $\sigma$, $\G(\sigma)$ and $\H(\sigma)$ are dual.
\end{lemma}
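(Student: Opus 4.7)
The plan is to apply Lemma~\ref{lemma:no_new_transversal_iff_dual}, which characterizes duality via simplicity, the intersection property, and the absence of a new transversal, to the pair $\G(\sigma),\H(\sigma)$. Since that characterization is stated for hypergraphs that are neither empty nor contain the empty edge, I would first split into cases according to whether $\sigma$ is covering and whether $\In$ or $\Ex$ already covers a whole edge on the other side, so that the degenerate shapes of $\G(\sigma)$ and $\H(\sigma)$ are pinned down by Lemmas~\ref{lemma:summary_properties} and~\ref{lemma:assign_covering_genera_dual} rather than by Lemma~\ref{lemma:no_new_transversal_iff_dual}.

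First, if $\sigma$ is covering, then because $\G$ and $\H$ being dual implies they satisfy the intersection property (Lemma~\ref{lemma:no_new_transversal_iff_dual}), Lemma~\ref{lemma:assign_covering_genera_dual} instantly gives that $\G(\sigma)$ and $\H(\sigma)$ are trivially dual. Second, if $\sigma$ is non\nbdash-covering but $\In$ is a transversal of $\G$, Lemma~\ref{lemma:summary_properties}(c) yields $\G(\sigma)=\emptyset$; using $\H=\Tr(\G)$, some minimal transversal $T\subseteq\In$ of $\G$ lies in $\H$, and since $T\cap\Ex=\emptyset$ and $T\setminus(\In\cup\Ex)=\emptyset$, the empty set survives into $\H(\sigma)$ after projection and minimization, giving $\H(\sigma)=\{\emptyset\}$. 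So the pair is trivially dual. The symmetric sub-case in which $\Ex$ is a transversal of $\H$ is handled identically using $\G=\Tr(\H)$.

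The remaining (generic) case is when $\sigma$ is non\nbdash-covering, $\In$ is not a transversal of $\G$, and $\Ex$ is not a transversal of $\H$. Then Lemma~\ref{lemma:summary_properties}(e) ensures that $\G(\sigma)$ and $\H(\sigma)$ contain only non\nbdash-empty edges; they are simple by construction, and they satisfy the intersection property by Lemma~\ref{lemma:proiezione_hitting_property}. Hence, by Lemma~\ref{lemma:no_new_transversal_iff_dual}, it suffices to rule out a new transversal of $\G(\sigma)$ \Wrt $\H(\sigma)$. Suppose for contradiction that $T'$ is one; then Lemma~\ref{lemma:composition_property_transversals_expansion} lifts $T'$ to a new transversal $T=T'\cup\In$ of $\G$ \Wrt $\H$, contradicting the duality of $\G$ and $\H$ via Lemma~\ref{lemma:no_new_transversal_iff_dual} in the opposite direction.

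The main obstacle is purely the bookkeeping for the degenerate cases where $\G(\sigma)$ or $\H(\sigma)$ collapses to $\emptyset$ or to $\{\emptyset\}$: the clean ``lift a new transversal back to $\G,\H$'' argument only runs in the generic third case, so one must verify the corners by hand. Once the three cases are separated as above, however, each reduces to a direct invocation of an already\nbdash-proved lemma, so no new combinatorial argument is required beyond the trivial projection $T'\mapsto T'\cup\In$.
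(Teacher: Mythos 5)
Your proof is correct and follows essentially the same route as the paper's: dispose of covering assignments via Lemma~\ref{lemma:assign_covering_genera_dual}, and in the generic case combine simplicity of $\G(\sigma)$ and $\H(\sigma)$, the intersection property from Lemma~\ref{lemma:proiezione_hitting_property}, and the lifting Lemma~\ref{lemma:composition_property_transversals_expansion} with the characterization in Lemma~\ref{lemma:no_new_transversal_iff_dual}. The only (harmless) difference is the non\nbdash-covering sub\nbdash-case where $\In$ is a transversal of $\G$: the paper notes that duality then forces $\In$ to cover an edge of $\H$, so this sub\nbdash-case is vacuous, whereas you work through it and conclude the pair is trivially dual --- both resolutions are fine.
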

\begin{proof}
If $\G$ and $\H$ are trivially dual, then, by \cref{lemma:summary_properties} points $(a)$, $(b)$, $(c)$, and $(d)$, for any assignment $\sigma$, $\G(\sigma)$ and $\H(\sigma)$ are trivially dual.

Consider now the case in which $\G$ and $\H$ are non\nbdash-trivially dual.
Since $\G$ and $\H$ are dual, they are simple, satisfy the intersection property, and there is no new transversal of $\G$ \Wrt $\H$ (see \cref{lemma:no_new_transversal_iff_dual}).
Observe that, for any assignment $\sigma$, $\G(\sigma)$ and $\H(\sigma)$ are simple by definition and, by \cref{lemma:proiezione_hitting_property}, satisfy the intersection property.
Because there is no new transversal of $\G$ \Wrt $\H$, by \cref{corol:no_new_tr_iff_no_new_tr_in_subinstances}, there is no new transversal of $\G(\sigma)$ \Wrt $\H(\sigma)$.
Therefore, from \cref{lemma:no_new_transversal_iff_dual} it follows that $\G(\sigma)$ and $\H(\sigma)$ are dual.
\end{proof}

\begin{lemma}\label{lemma:if_all_subinstaces_dual_then_dual}
Let $\G$ and $\H$ be two simple hypergraphs.
If, for all assignments $\sigma$, $\G(\sigma)$ and $\H(\sigma)$ are dual, then $\G$ and $\H$ are dual.
\end{lemma}
\begin{proof}
Since, for all assignments $\sigma$, $\G(\sigma)$ and $\H(\sigma)$ are dual, by \cref{lemma:no_new_transversal_iff_dual}, for all assignments $\sigma$, $\G(\sigma)$ and $\H(\sigma)$ satisfy the intersection property and there is no new transversal of $\G(\sigma)$ \Wrt $\H(\sigma)$.
Hence, by \cref{lemma:proiezione_hitting_property}, $\G$ and $\H$ satisfy the intersection property.
Because, for all assignments $\sigma$, there is no new transversal of $\G(\sigma)$ \Wrt $\H(\sigma)$, by \cref{corol:no_new_tr_iff_no_new_tr_in_subinstances}, there is no new transversal of $\G$ \Wrt $\H$.
Given that $\G$ and $\H$ are assumed to be simple, by \cref{lemma:no_new_transversal_iff_dual} it follows that $\G$ and $\H$ are dual.
\end{proof}

It is interesting to observe that in the statement of the previous lemma it is necessary to assume that hypergraphs $\G$ and $\H$ are simple.
Indeed, for any assignment $\sigma$, $\G(\sigma)$ and $\H(\sigma)$ are simple by definition, because they undergo a minimization operation.
Hence, from $\G(\sigma)$ and $\H(\sigma)$ being dual (and hence also simple) we cannot derive $\G$ and $\H$ being simple.
To give an example, consider hypergraphs $\G = \emptyset$ and $\H$ being a hypergraph containing an empty edge and an other edge.
Hypergraph $\H$ is not simple, and hence $\G$ and $\H$ are not dual.
However, for any assignment $\sigma$, $\G(\sigma) = \emptyset$ and $\H(\sigma) = \{\emptyset\}$ (see \cref{lemma:summary_properties} points $(a)$, $(b)$, $(c)$, and $(d)$), and hence $\G(\sigma)$ and $\H(\sigma)$ are dual.

Given the properties above, the following lemma is a direct consequence of \cref{lemma:no_new_transversal_iff_dual,lemma:if_dual_then_all_subinstaces_dual,lemma:if_all_subinstaces_dual_then_dual}.

\medskip

\noindent\textbf{\Cref*{lemma:dual_iff_all_sub-instances_dual}.}
\textit{%
Two hypergraphs $\G$ and $\H$ are dual if and only if $\G$ and $\H$ are simple, satisfy the intersection property, and, for all assignments $\sigma$, $\G(\sigma)$ and $\H(\sigma)$ are dual (or, equivalently, there is no new transversal of $\G(\sigma)$ \Wrt $\H(\sigma)$).}

\section{A deterministic algorithm for \texorpdfstring{\DualProbHyp}{DUAL}}\label{sec:alg-gaur-det}
In this section we propose a deterministic duality algorithm \textsc{\Alg}, which is an extension of that proposed by \citet{Gaur1999} (see also \citet{Gaur2004}).
Algorithm \textsc{\Alg} here presented is in somewhat different from \citeauthor{Gaur1999}'s because the latter checks \emph{self}\nbdash-duality of a single DNF Boolean formula, while ours verifies duality between two hypergraphs.

Given two hypergraphs $\G$ and $\H$, our algorithm \textsc{\Alg}, like many others, to disprove that $\G$ and $\H$ are dual aims at finding, via sub\nbdash-procedure \textsc{\DetNewTrAlg}, a new transversal of $\G$ \Wrt $\H$.
To do so, the algorithm builds up, step after step, by including and excluding vertices, a set of vertices intersecting all edges of $\G$ that is different from all edges of $\H$ (i.e., a set of vertices that is a transversal of $\G$ and an independent set of $\H$, and hence a new transversal of $\G$ \Wrt $\H$).

As already discussed, choosing vertices to exclude allows us to decrease the number of edges of $\H$ that are not different (yet) from the candidate for a new transversal.
In particular, when \textsc{\DetNewTrAlg} excludes specific vertices, the number of edges of $\H$ still needed to be considered is halved (see \cref{lemma:riduzione_taglia_assignment_tree} and the discussion in \cref{sec:logarithmic_refuters}).

Let us now see the details of algorithm \textsc{\Alg}.
After exhibiting the algorithm, we will formally prove some of its properties and its correctness.
Algorithm \textsc{\Alg}, and more specifically sub\nbdash-procedure \textsc{\DetNewTrAlg} which checks the existence of new transversals, uses three sets to keep track of the included, excluded, and free vertices of the currently considered assignment, which are denoted by $\IncAlg$, $\ExcAlg$, and $\FreeAlg$, respectively.
We remind the reader that, to recognize an assignment as a non\nbdash-duality witness, we need to know what edges of $\G$ are separated from and what edges of $\H$ are still compatible with the assignment (and transversal) under construction.
To this purpose, in the algorithm we use the sets denoted by $\SepAlg$, and $\ComAlg$, respectively.
Observe that, for simplicity of the presentation and better readability of the algorithm, we here explicitly store sets $\IncAlg$, $\ExcAlg$, $\FreeAlg$, $\SepAlg$, and $\ComAlg$. This obviously requires more than quadratic logspace.
However, by storing instead the labels only of the extensions leading to the current assignment, and evaluating the aforementioned sets dynamically (see the proof of \cref{lemma:algoritmo_quadratic_logspace} and \cref{sec:logic}), the space complexity of the algorithm is bounded by \DSpace[$\log^2 \InputSize$].

The \textsc{\Alg} algorithm is listed below as \cref{alg:dual_guar_params_copia}.
The aim of the procedure \textsc{\CheckSimpleIP} is checking that hypergraphs $\G$ and $\H$ are simple and satisfy the intersection property.

\begin{algorithm}[!ht]
\caption{A deterministic duality algorithm based on \citeauthor{Gaur1999}'s.}\label{alg:dual_guar_params_copia}
\begin{algorithmic}[1]
\Procedure{\Alg}{$\G,\H$}
\If{$\lnot$\Call{\CheckSimpleIP}{$\G,\H$}}\label{line:det-alg_checkSimpleHP}
\textbf{return} \valfalse;
\EndIf
\State \textbf{return} $\lnot$\Call{\DetNewTrAlg}{$\G,\H,\emptyset,\emptyset, V$};\label{line:det-alg_callNewTr}
\EndProcedure
\Statex
\Procedure{\DetNewTrAlg}{$\G,\H,\IncAlg,\ExcAlg,\FreeAlg$}
    \If{$(\exists G)(G\in\G\land G\subseteq\ExcAlg)\lor(\exists H)(H\in\H\land H\subseteq\IncAlg)$}\label{line:test-defCom}\label{line:test-defSep}\label{line:inizio-test-defComSep}\label{line:inizio-test-completo-witness}
    \textbf{return} \valfalse;\label{line:ret-false-defCom}\label{line:ret-false-defSep}
    \EndIf\label{line:fine-test-defComSep}
    \State $\SepAlg\gets{}\{G\in\G\mid G\cap\IncAlg=\emptyset\}$;
    \State $\ComAlg\gets{}\{H\in\H\mid H\cap\ExcAlg=\emptyset\}$;
    \If{$\SepAlg=\emptyset \lor \ComAlg=\emptyset$}\label{line:inizio-check-witness}
        \textbf{return} \valtrue;\label{line:ret-true-witness}
    \EndIf\label{line:fine-test-completo-witness}\label{line:fine-check-witness}
    \State $U\gets \{v\in\FreeAlg \mid v\text{ belongs to at least half of the edges of }\ComAlg\}$;\label{line:inizio-calcolo-U}\label{line:fine-calcolo-U}\label{line:calcolo-U}
    \For{each $v : v\in U$}\label{line:inizio-test-firstType}
        \If{\Call{\DetNewTrAlg}{$\G,\H,\IncAlg,\ExcAlg\cup\{v\},\FreeAlg\setminus\{v\}$}}\label{line:call-ric-firstType}
            \textbf{return} \valtrue;
        \EndIf
    \EndFor\label{line:fine-test-firstType}
    \State $\FreeAlg\gets\FreeAlg\setminus U$;\label{line:inizio-inclusione-U}
    \State $\IncAlg\gets\IncAlg\cup U$;\label{line:inclusione-U}\label{line:fine-inclusione-U}
    \If{$(\exists H)(H\in\H\land H\subseteq\IncAlg)$}\label{line:test-defCom-dopoU}
        \textbf{return} \valfalse;\label{line:ret-false-defCom-dopoU}
    \EndIf
    \State $\SepAlg\gets{}\{G\in\G\mid G\cap\IncAlg=\emptyset\}$;
    \If{$\SepAlg=\emptyset$}\label{line:inizio-secondo-test-witness}
        \textbf{return} \valtrue;\label{line:ret-true-witness-dopoU}
    \EndIf\label{line:fine-secondo-test-witness}
    \For{each $v : v\in \FreeAlg$}\label{line:inizio-test-thirdType}
        \For{each $G : v\in G \land G\in \SepAlg$}
            \If{\Call{\DetNewTrAlg}{$\G,\H,\IncAlg\cup\{v\},\ExcAlg\cup (G\setminus\{v\}),\FreeAlg\setminus G$}}\label{line:call-ric-thirdType}
                \textbf{return} \valtrue;
            \EndIf
        \EndFor
    \EndFor\label{line:fine-test-thirdType}
    \State\textbf{return} \valfalse;\label{line:ret-false-esaurimento}
\EndProcedure
\end{algorithmic}
\end{algorithm}

Note that, for simplicity, procedure \textsc{\DetNewTrAlg} is meant to be called by value.
This means that the parameters passed to the procedure are local copies for each specific recursive call.
Therefore, any modification to those sets affects only the sets of the call currently executed.

We recall here that two hypergraphs $\G$ and $\H$ are dual if and only if they are simple, satisfy the intersection property, and there is no new transversal of $\G$ \Wrt $\H$ (see \cref{lemma:no_new_transversal_iff_dual}).
So, after checking that $\G$ and $\H$ are simple and satisfy the intersection property (line~\ref*{line:det-alg_checkSimpleHP}), it is checked that there is no new transversal of $\G$.
This is achieved by calling \textsc{\DetNewTrAlg}$(\G,\H,\emptyset,\emptyset,V)$ at line~\ref*{line:det-alg_callNewTr},
where the sets $\IncAlg$ and $\ExcAlg$ are initialized to $\emptyset$, and $\FreeAlg=V$.
This procedure is devised to answer \valtrue if and only if it finds a new transversal of $\G$ coherent with the assignment on which it is executed.
Given a pair of hypergraphs $\tuple{\G,\H}$, and an assignment $\pi=\assign{\In,\Ex}$, we say that procedure \textsc{\DetNewTrAlg} is executed on $\pi$ whenever it is called with the following parameters: \textsc{\DetNewTrAlg}$(\G,\H,\In,\Ex,V\setminus(\In\cup \Ex))$.
For notational convenience we denote it by \textsc{\DetNewTrAlg}$(\G,\H,\pi)$, or even more simply by \textsc{\DetNewTrAlg}$(\pi)$ when it is clear from the context what the two hypergraphs $\G$ and $\H$ are.

\medskip

Let us now analyze intuitively the execution of procedure \textsc{\DetNewTrAlg}.
Consider a generic call of the procedure executed on the pair of hypergraphs $\tuple{\G,\H}$ and on assignment $\pi$.
At line \ref*{line:inizio-test-defComSep} we check whether $\pi$ is a covering assignment, in which case clearly there is no new transversal of $\G$ coherent with $\pi$ (see \cref{lemma:assignment_coherent_new_Tr_then_not_covering}).
If this is not the case, then the procedure checks (at line \ref*{line:inizio-check-witness}) whether $\pi$ is (already) a witness.
Then the procedure computes a set $U$ (line~\ref*{line:calcolo-U}).
This set is locally computed in the call currently executed, and hence, for the following discussion, let us call it $U_\pi$.
We use the subscript $\pi$ because set $U$ depends on the history of the recursive calls having led to the one currently being executed, and hence depends on the currently considered assignment $\pi$ having been built so far (and encoded in sets $\IncAlg$ and $\ExcAlg$).

Set $U_\pi$ is the set of the free frequent vertices of $\pi$.
First, the procedure tries to exclude individually each of the vertices of $U_\pi$ (lines \ref*{line:inizio-test-firstType}\nbdash--\ref*{line:fine-test-firstType}).
If none of these attempts results in the construction of a witness (all the tests performed at line~\ref*{line:call-ric-firstType} return \valfalse), all vertices of $U_\pi$ are included (lines \ref*{line:inizio-inclusione-U}\nbdash--\ref*{line:fine-inclusione-U}).
Let us call $\sigma$ the assignment resulting after the inclusion of the vertices of $U_\pi$.
Then, the procedure checks again whether $\sigma$ is a covering assignment (line~\ref*{line:test-defCom-dopoU}).
Otherwise, the procedure tests whether $\sigma$ is a witness (line~\ref*{line:inizio-secondo-test-witness}).
If this is not the case, then the procedure tries to include each of the free vertices of $\sigma$ as a critical vertex with an edge from $\Sep(\sigma)$ witnessing its criticality (lines \ref*{line:inizio-test-thirdType}\nbdash--\ref*{line:fine-test-thirdType}).
If for none of these attempts it is possible to find a new transversal of $\G$ (all the tests performed at line~\ref*{line:call-ric-thirdType} return \valfalse), then the procedure answers \valfalse at line~\ref*{line:ret-false-esaurimento}, meaning that there is no new transversal of $\G$ coherent with $\pi$.

\medskip

Let us now make some observations on the procedure \textsc{\DetNewTrAlg}.
Throughout the whole execution of the procedure and its recursive calls, the sets $\IncAlg$, $\ExcAlg$, and $\FreeAlg$, are always a partition of the set of vertices $V$, and hence $\IncAlg$ and $\ExcAlg$ constitute a consistent assignment (i.e., $\IncAlg$ and $\ExcAlg$ do not overlap).
This descends from the fact that the extensions implemented in procedure \textsc{\DetNewTrAlg} are the same of those considered in tree $\Tree(\G,\H)$, and we have already discussed in \cref{sec:decomposition} that those extensions generate consistent assignments.

Besides this, every recursive call performed by \textsc{\DetNewTrAlg}$(\pi)$ is executed on an assignment $\pi'$ such that $\pi\dblsub\pi'$.
This implies that the set of free vertices becomes smaller and smaller from one recursive call to the next.
As a result, since the procedure tries to include or exclude vertices taken only from the set of the free ones, every recursion path traversed by \textsc{\DetNewTrAlg}$(\pi)$ is finite, because at some recursion level the set of free vertices is empty.

The tests performed at lines \ref*{line:inizio-test-completo-witness}\nbdash--\ref*{line:fine-test-completo-witness} and at lines \ref*{line:inizio-secondo-test-witness}\nbdash--\ref*{line:fine-secondo-test-witness} essentially check whether the assignment is a witness.
In particular, first it is ruled out that the current assignment is covering, and then is checked whether $\SepAlg$ or $\ComAlg$ is a transversal of $\G$ or $\H$, respectively.
Note that at lines \ref*{line:inizio-secondo-test-witness}\nbdash--\ref*{line:fine-secondo-test-witness} is tested only whether $\SepAlg$ is a new transversal of $\G$ \Wrt $\H$.
At that point of the execution is reasonable to do so because only $\SepAlg$ has changed after the previous check of the witnessing condition performed at lines \ref*{line:inizio-test-completo-witness}\nbdash--\ref*{line:fine-test-completo-witness}.

We can now formally prove the correctness of algorithm \textsc{\Alg}.
\begin{theorem}\label{theo:algoritmo_corretto}
Let $\G$ and $\H$ be two hypergraphs.
Then, the call \textsc{\Alg}$(\G,\H)$ outputs \valtrue if and only if $\G$ and $\H$ are dual.
\end{theorem}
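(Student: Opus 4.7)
The plan is to reduce the theorem to a correctness statement for the auxiliary procedure \textsc{\DetNewTrAlg}. By Lemma~\ref{lemma:no_new_transversal_iff_dual}, $\G$ and $\H$ are dual if and only if both are simple, they satisfy the intersection property, and there is no new transversal of $\G$ \Wrt $\H$. The top\nbdash-level call \textsc{\CheckSimpleIP} at line~\ref{line:det-alg_checkSimpleHP} already disposes of the first two conditions, so the whole theorem follows once the following key claim is proved: for every assignment $\pi=\assign{\In,\Ex}$, the call \textsc{\DetNewTrAlg}$(\G,\H,\In,\Ex,V\setminus(\In\cup\Ex))$ returns \valtrue if and only if there exists a new transversal of $\G$ \Wrt $\H$ coherent with $\pi$. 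Instantiating the claim at $\pi=\emptyassign$ then yields the theorem.

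I would prove the claim by induction on $|V\setminus(\In\cup\Ex)|$, observing that every recursive call strictly shrinks the free set: the calls at line~\ref{line:call-ric-firstType} pass $\FreeAlg\setminus\{v\}$, and the calls at line~\ref{line:call-ric-thirdType} (executed after the in\nbdash-place inclusion of $U_\pi$, the set of free frequent vertices of $\pi$) drop at least the branching vertex $v$. The soundness direction is essentially bookkeeping: \valtrue is returned only at line~\ref{line:ret-true-witness} (when $\pi$ is non\nbdash-covering and satisfies $\Sep(\pi)=\emptyset\lor\Com(\pi)=\emptyset$, so $\pi$ is itself a witness by Lemma~\ref{lemma:witnessing_condition}), at line~\ref{line:ret-true-witness-dopoU} (the same test applied to the extended assignment $\sigma=\pi\addassign\assign{U_\pi,\emptyset}$), or as a consequence of a recursive call returning \valtrue; in the latter case, the inductive hypothesis supplies a new transversal coherent with a strict extension of $\pi$, which is then automatically coherent with $\pi$ itself.

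The completeness direction is the core of the proof. Given a new transversal $T$ of $\G$ \Wrt $\H$ coherent with $\pi$, I would first sharpen it to a ``$\pi$\nbdash-locally minimal'' new transversal $T^{\ast}$ by greedily deleting vertices $v\in T\setminus\In$ whose removal preserves the property of being a transversal of $\G$. The resulting $T^{\ast}$ is still a transversal by construction, still independent of $\H$ (since $T^{\ast}\subseteq T$), and still coherent with $\pi$; moreover, every $v\in T^{\ast}\setminus\In$ is critical in $T^{\ast}$ witnessed by some $G_v\in\G$ with $G_v\cap T^{\ast}=\{v\}$, and since $v\notin\In$ we get $G_v\cap\In=\emptyset$, i.e., $G_v\in\Sep(\pi)$. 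With such a $T^{\ast}$ in hand, the remainder is a case analysis mirroring the algorithm. If some $v\in U_\pi$ lies outside $T^{\ast}$, then $\pi\addassign\assign{\emptyset,\{v\}}$ is still coherent with $T^{\ast}$ and the recursion at line~\ref{line:call-ric-firstType} returns \valtrue by induction. Otherwise $U_\pi\subseteq T^{\ast}$, so $\sigma=\pi\addassign\assign{U_\pi,\emptyset}$ is coherent with $T^{\ast}$ and non\nbdash-covering (since $T^{\ast}$ is independent of $\H$); either $\sigma$ is already a witness, so the algorithm returns \valtrue at line~\ref{line:ret-true-witness-dopoU}, or $\Sep(\sigma)\neq\emptyset$, in which case any vertex $v\in T^{\ast}\setminus(\In\cup U_\pi)$ (which must exist since $T^{\ast}$ meets every $G\in\Sep(\sigma)$ and such $G$ avoids $\In\cup U_\pi$) carries a witness edge $G_v\in\Sep(\pi)$ that, by $G_v\cap U_\pi\subseteq G_v\cap T^{\ast}\cap U_\pi=\{v\}\cap U_\pi=\emptyset$, also lies in $\Sep(\sigma)$, so the corresponding call at line~\ref{line:call-ric-thirdType} returns \valtrue by induction.

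The main obstacle, and the point where the specific design of the algorithm really matters, is exactly this ``trim to local minimality'' step: without it, the key Lemma~\ref{lemma:aggiungere_un_nodo_come_critico} cannot be applied to guarantee that a free vertex $v$ in the new transversal comes with a criticality witness in $\Sep(\pi)$, and the branching at line~\ref{line:call-ric-thirdType} could miss the ``good'' edge along which to extend. Once this refinement is in place, all remaining sub\nbdash-arguments (coherence preservation along extensions, the passage from witnesses in $\Sep(\pi)$ to witnesses in $\Sep(\sigma)$ after the bulk inclusion of $U_\pi$, the non\nbdash-covering check for $\sigma$, and the degenerate case $T^{\ast}=\In\cup U_\pi$ handled via $\Sep(\sigma)=\emptyset$ and line~\ref{line:ret-true-witness-dopoU}) reduce to routine manipulations of the notions introduced in Section~\ref{sec:decomposition}.
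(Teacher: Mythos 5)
Your proof is correct, and the top-level reduction (via Lemma~\ref{lemma:no_new_transversal_iff_dual} and the claim that \textsc{\DetNewTrAlg}$(\pi)$ returns \valtrue iff some new transversal is coherent with $\pi$) is exactly the paper's; the claim itself is the paper's Lemma~\ref{lemma:esecuzione_generale_corretta}, which the paper proves via the one-step Lemma~\ref{lemma:esecuzione_singola_corretta} and a ``foremost longest sequence'' argument that is morally your induction on the number of free vertices. Where you genuinely diverge is in the key completeness step for the branching at line~\ref{line:call-ric-thirdType}. The paper does \emph{not} minimize the transversal: its Lemma~\ref{lemma:intersezione_minima} works with an arbitrary (possibly non-minimal) new transversal $T$, picks an edge $\wh{G}\in\Sep(\sigma)$ minimizing $|\wh{G}\cap T|$, and shows that $T'=T\setminus(\wh{G}\setminus\{v\})$ is still a new transversal coherent with the extended assignment --- the Fredman--Khachiyan/Gaur ``minimal intersection edge'' trick. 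You instead pre-trim $T$ to a $\pi$-locally minimal $T^{\ast}$ so that every vertex of $T^{\ast}\setminus\In$ carries a genuine criticality witness in $\Sep(\pi)$, in the spirit of Lemma~\ref{lemma:aggiungere_un_nodo_come_critico}; your check that the witness edge $G_v$ survives into $\Sep(\sigma)$ after the bulk inclusion of $U_\pi$ (because $G_v\cap U_\pi\subseteq G_v\cap T^{\ast}=\{v\}$ and $v\notin U_\pi$) is sound, as is the dichotomy on whether $U_\pi\subseteq T^{\ast}$, which replaces the paper's appeal to Lemma~\ref{lemma:transversal_include_o_esclude_vertice}. Both routes are valid. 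The paper's version buys a monotonically shrinking transversal $T'\subseteq T$ along the recursion without ever invoking minimization, which it later reuses (in the proof of Theorem~\ref{theo:algoritmo_trova_minimali}) to show the accepting witness is coherent with a \emph{minimal} new transversal; your version is arguably more transparent and aligns the deterministic algorithm's analysis with the criticality-based labels $\IncNodeCrit{v}{G}$ used for $\Tree(\G,\H)$ and the nondeterministic algorithm, at the small cost of an extra greedy trimming argument at each level of the induction.
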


To prove \cref{theo:algoritmo_corretto} we need some intermediate lemmas.
The following property is at the base of the decomposition used in Algorithm ``A'' of \citet{fred-khac-96}.
We state it in a form appropriate for our own subsequent discussion.

\begin{lemma}\label{lemma:transversal_include_o_esclude_vertice}
Let $\G$ and $\H$ be two hypergraphs, let $\sigma$ be an assignment, and let $v$ be a free vertex in $\sigma$.
Then, there is a new transversal of $\G$ \Wrt $\H$ coherent with $\sigma$ if and only if $\sigma\addassign\assign{\{v\},\emptyset}$ or $\sigma\addassign\assign{\emptyset,\{v\}}$ is coherent with a new transversal of $\G$ \Wrt $\H$.\footnote{This lemma essentially states that if there are new transversals of $\G$ \Wrt $\H$ coherent with $\sigma$, then they include or exclude the free vertex $v$. (If there is no new transversal either including or excluding $v$, then there is no new transversal coherent with $\sigma$ at all.)}
\end{lemma}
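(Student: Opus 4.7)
The plan is to prove both directions by a straightforward case split on whether the free vertex $v$ belongs to a given new transversal, together with the monotonicity of coherence under extension of assignments.

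For the forward direction, I would start by assuming $T$ is a new transversal of $\G$ \Wrt $\H$ that is coherent with $\sigma = \assign{\In,\Ex}$, so by definition of coherence $\In\subseteq T$ and $\Ex\cap T=\emptyset$. Since $v\in V$ and $v$ is free in $\sigma$, exactly one of $v\in T$ or $v\notin T$ holds. If $v\in T$, then $\In\cup\{v\}\subseteq T$ while $\Ex\cap T=\emptyset$ is unchanged, so the extension $\sigma\addassign\assign{\{v\},\emptyset}=\assign{\In\cup\{v\},\Ex}$ is well-defined (its two components are disjoint because $v$ was free, hence $v\notin\Ex$) and coherent with $T$. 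Symmetrically, if $v\notin T$ then $v\in\compl{T}$, so $(\Ex\cup\{v\})\cap T=\emptyset$ while $\In\subseteq T$ is unchanged, and the extension $\sigma\addassign\assign{\emptyset,\{v\}}$ is well-defined (using $v\notin\In$) and coherent with $T$.

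For the backward direction, I would simply observe that both $\sigma\addassign\assign{\{v\},\emptyset}$ and $\sigma\addassign\assign{\emptyset,\{v\}}$ are proper extensions of $\sigma$ in the sense of $\dblsubeq$. Hence if a set of vertices $T$ is coherent with either extension (i.e.\ $\sigma'\dblsubeq T$ for the relevant $\sigma'$), then by transitivity of $\dblsubeq$ we have $\sigma\dblsubeq\sigma'\dblsubeq T$, so $T$ is coherent with $\sigma$. If moreover $T$ is a new transversal of $\G$ \Wrt $\H$, this gives a new transversal coherent with $\sigma$, as required.

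There is no real obstacle here: the lemma is essentially a definitional consequence of the way extensions act on the included/excluded components, combined with the tautology that $v$ either belongs to $T$ or does not. The only points needing a brief word are (i) checking that the $\addassign$ operation is legal in each case (using that $v$ is free), and (ii) making explicit use of the monotonicity of coherence, both of which follow directly from the definitions given in Section~\ref{sec:decomposition}.
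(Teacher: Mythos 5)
Your proof is correct and follows essentially the same route as the paper's: a case split on whether $v\in T$ for the forward direction, and the observation that coherence is inherited from extensions for the (trivial) backward direction. The paper's own proof is just a terser version of exactly this argument, so no further comparison is needed.
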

\begin{proof}
Let $T$ be a new transversal of $\G$ \Wrt $\H$ coherent with $\sigma$.
If $v\in T$, then $T$ is coherent with $\sigma\addassign\assign{\{v\},\emptyset}$, symmetrically if $v\notin T$, then $\sigma\addassign\assign{\emptyset,\{v\}}$ is coherent with $T$.
The other direction of the proof is obvious.
\end{proof}

On the other hand, the following property was used by \citeauthor{Gaur1999} in his algorithm~\citep{Gaur1999,Gaur2004}.
Again, we state it in a form appropriate for our own discussion.
(Note the difference with the statement of \cref{lemma:aggiungere_un_nodo_come_critico}.
In this case, the assignment $\sigma$ mentioned in the statement of the following lemma is coherent with a new transversal that is \emph{not} required to be minimal.)
\begin{lemma}\label{lemma:intersezione_minima}
Let $\G$ and $\H$ be two hypergraphs, and let $\sigma$ be a non\nbdash-witnessing assignment coherent with a new (not necessarily minimal) transversal $T$ of $\G$ \Wrt $\H$.
Then, there is an edge $G\in\Sep(\sigma)$ and a free vertex $v\in G$, such that $\sigma\addassign\assign{\{v\},G\setminus\{v\}}$ is coherent with a new transversal $T'$ of $\G$ \Wrt $\H$, and $T'\subseteq T$.
\end{lemma}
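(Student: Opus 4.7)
The plan is to reduce to the setting of Lemma~\ref{lemma:aggiungere_un_nodo_come_critico} by producing a subset $T^* \subseteq T$ that still qualifies as a new transversal of $\G$ \Wrt $\H$ coherent with $\sigma$, but with the added property that every vertex in $T^* \setminus \In$ is critical in $T^*$. Once such a $T^*$ is available, one simply picks any vertex $v$ in $T^* \setminus \In$ together with its witness $G_v \in \G$, and the required coherence of $\sigma \addassign \assign{\{v\}, G_v \setminus \{v\}}$ with $T^*$ will follow automatically.

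First I would establish that $\Sep(\sigma) \neq \emptyset$. By Lemma~\ref{lemma:assignment_coherent_new_Tr_then_not_covering} applied to the new transversal $T$, the assignment $\sigma$ is non-covering, so $\In$ is an independent set of $\H$; if $\In$ were also a transversal of $\G$ it would be a new transversal, making $\sigma$ a witness and contradicting the definition of precursor. Hence $\In$ is not a transversal of $\G$, so there exists an edge missed by $\In$ and $\Sep(\sigma) \neq \emptyset$.

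Next I would construct $T^*$ by iteratively removing vertices of $T \setminus \In$ whose removal still leaves a transversal of $\G$, until no further removal is possible. The resulting $T^*$ satisfies $\In \subseteq T^* \subseteq T$; it is a transversal of $\G$; and every vertex $w \in T^* \setminus \In$ is critical in $T^*$, since otherwise $w$ could have been removed. Because $T^* \subseteq T$ and $T$ is independent in $\H$, so is $T^*$, making $T^*$ a new transversal of $\G$ \Wrt $\H$. Moreover $\In \subsetneq T^*$, for otherwise $\In$ would be a transversal of $\G$, contradicting the first step; hence $T^* \setminus \In$ is nonempty.

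Finally I would pick any $v \in T^* \setminus \In$ together with an edge $G_v \in \G$ witnessing its criticality, i.e., $G_v \cap T^* = \{v\}$. Since $\In \subseteq T^*$ and $v \notin \In$, we get $G_v \cap \In \subseteq (G_v \cap T^*) \setminus \{v\} = \emptyset$, so $G_v \in \Sep(\sigma)$; and since $v \in T^* \subseteq T$ while $\Ex \cap T = \emptyset$, the vertex $v$ is free in $\sigma$. Setting $T' := T^*$ and $\sigma' := \sigma \addassign \assign{\{v\}, G_v \setminus \{v\}}$, coherence of $T'$ with $\sigma'$ is then immediate: $\In \cup \{v\} \subseteq T^*$ by construction, while $\Ex \cap T^* \subseteq \Ex \cap T = \emptyset$ and $(G_v \setminus \{v\}) \cap T^* = \emptyset$ by the choice of $G_v$. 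The only subtle point is the second step: because $T$ need not be minimal, one must minimise \emph{relative to $\In$} rather than absolutely, as a globally minimal transversal inside $T$ may drop vertices of $\In$. This is exactly the modification that lifts the proof of Lemma~\ref{lemma:aggiungere_un_nodo_come_critico} from minimal to arbitrary new transversals.
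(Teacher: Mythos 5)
Your proof is correct, but it takes a genuinely different route from the paper's. The paper argues by an extremal choice of edge: it picks $\wh{G}\in\arg\min_{G\in\Sep(\sigma)}|G\cap T|$, takes any (necessarily free) vertex $v\in\wh{G}\cap T$, and sets $T'=T\setminus(\wh{G}\setminus\{v\})$; the fact that $T'$ remains a transversal is then proved by contradiction, since an edge missed by $T'$ would lie in $\Sep(\sigma)$ and have a strictly smaller intersection with $T$ than $\wh{G}$. You instead minimise $T$ \emph{relative to} $\In$ to obtain $T^*$ with $\In\subseteq T^*\subseteq T$ in which every vertex of $T^*\setminus\In$ is critical, and then read off $v$ and its criticality witness $G_v$; this is essentially the relativised form of Lemmas~\ref{lemma:transMin-ogniVertexCritico} and~\ref{lemma:aggiungere_un_nodo_come_critico}, as you note. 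Each step of your argument checks out: $\Sep(\sigma)\neq\emptyset$ because $\In$ cannot be a transversal of $\G$ (else $\sigma$ would be a witness), the relative minimisation preserves $\In$ and independence in $\H$, $T^*\setminus\In\neq\emptyset$, $G_v\in\Sep(\sigma)$ since $G_v\cap T^*=\{v\}$ and $\In\subseteq T^*$ with $v\notin\In$, and the coherence conditions for $\sigma\addassign\assign{\{v\},G_v\setminus\{v\}}$ with $T'=T^*$ all hold. What the two approaches buy: yours is conceptually cleaner in that it reduces the non-minimal case to the already-established criticality machinery with a single uniform modification (minimise relative to $\In$ rather than absolutely); the paper's is more local and self-contained, producing a $T'$ that differs from $T$ only on the vertices of one edge of $\G$, at the cost of the slightly less transparent minimum-intersection contradiction.
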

\begin{proof}
Let $T$ be a new transversal of $\G$ \Wrt $\H$, and let $\sigma=\assign{\In,\Ex}$ be a non\nbdash-witnessing assignment coherent with $T$.
Consider an edge $\wh{G}\in\arg\min_{G\in\Sep(\sigma)}\{|G\cap T|\}$, and let $v$ be any free vertex belonging to $\wh{G}\cap T$.
We will show that such an edge $\wh{G}$ and such a vertex $v$ are well defined.

Indeed, since $\sigma$ is a non\nbdash-witnessing assignment, $\Sep(\sigma)\neq\emptyset$ (for otherwise $\sigma$ would be a witness), and hence there is an edge $\wh{G}$ belonging to $\arg\min_{G\in\Sep(\sigma)}\{|G\cap T|\}$.
Moreover, by $\wh{G}\in\Sep(\sigma)$, $\wh{G}\cap\In = \emptyset$ and, because $\sigma$ is coherent with $T$, $\wh{G}\not\subseteq\Ex$, therefore there is a free vertex of $\sigma$ belonging to $\wh{G}$.
Observe that, by the fact that $\wh{G}\cap \In=\emptyset$, it follows that $\sigma\addassign\assign{\{v\},\wh{G}\setminus\{v\}}$ is actually a consistent assignment (i.e., the sets of included and excluded vertices do not overlap).

Now, let $T'=T\setminus (\wh{G}\setminus\{v\})$ (see \cref{fig:intersezione_minima}).
By definition, $T'$ is coherent with $\sigma\addassign\assign{\{v\},\wh{G}\setminus\{v\}}$.
We now claim that $T'$ is a transversal of $\G$.
Assume by contradiction that this is not the case.
Then, there is an edge $\wt{G}\in\G$ such that $T'\cap\wt{G}=\emptyset$ (see \cref{fig:intersezione_minima}).
\begin{figure}[!ht]
  \centering%
  \includegraphics[width=0.3\textwidth]{./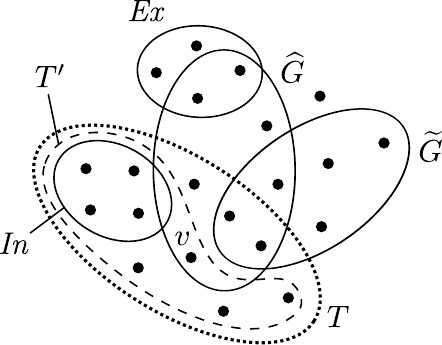}%
  \caption{Illustration for \cref{lemma:intersezione_minima}\label{fig:intersezione_minima}.}%
\end{figure}
From $\In\subset T$ (because $\sigma$ is coherent with $T$) and $\wh{G}\cap\In=\emptyset$ follows $\In\subset T'$ (because $T'=T\setminus (\wh{G}\setminus\{v\})$).
This, with $T'\cap\wt{G}=\emptyset$, implies that $\wt{G}\in\Sep(\sigma)$.
Since $T\cap\wt{G}\neq\emptyset$ (because $T$ is a transversal of $\G$) and $T'\cap\wt{G}=\emptyset$, by the definition of $T'$ it follows that $(\wt{G}\cap T)\subseteq ((\wh{G}\setminus\{v\})\cap T)$.
For this reason, $|\wt{G}\cap T|\leq|(\wh{G}\setminus\{v\})\cap T|<|\wh{G}\cap T|$ (because $v\in \wh{G}\cap T$): a contradiction, because $\wh{G}$ was chosen as one of the edges in $\Sep(\sigma)$ minimizing the size of its intersection with $T$.
Therefore, $T'$ is a transversal of $\G$.
To conclude, observe that $T'\subseteq T$ by definition of $T'$, and hence $T'$ is an independent set of $\H$ because $T$, being a new transversal of $\G$ \Wrt $\H$, is an independent set of $\H$.
Thus, $T'$ is a new transversal of $\G$ \Wrt $\H$.
\end{proof}

We now focus on the properties of procedure \textsc{\DetNewTrAlg}.
\begin{lemma}\label{lemma:esecuzione_singola_corretta}
Let $\G$ and $\H$ be two hypergraphs, and let $\pi$ be an assignment coherent with a new transversal of $\G$ \Wrt $\H$.
Then, either \textsc{\DetNewTrAlg}$(\pi)$ answers \valtrue at line~\ref*{line:ret-true-witness} or at line~\ref*{line:ret-true-witness-dopoU}, or among its recursive calls there is one executed on an assignment $\pi'$, with $\pi\dblsub\pi'$, coherent with a new transversal of $\G$ \Wrt $\H$.
\end{lemma}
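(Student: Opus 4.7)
The plan is to trace through a generic invocation of \textsc{\DetNewTrAlg}$(\G,\H,\pi)$ and verify, by case analysis on its exit point, that one of the three alternatives in the conclusion must hold. Fix a new transversal $T$ of $\G$ \Wrt $\H$ with $\pi\dblsubeq T$. First, Lemma~\ref{lemma:assignment_coherent_new_Tr_then_not_covering} forces $\pi$ to be non\nbdash-covering, so the test at line~\ref{line:inizio-test-defComSep} cannot produce a \valfalse; moreover, by Lemma~\ref{lemma:witnessing_condition} this means that $\pi$ is a witness iff $\Sep(\pi)=\emptyset$ or $\Com(\pi)=\emptyset$. In that witness case the procedure returns \valtrue at line~\ref{line:ret-true-witness}, yielding the first alternative of the statement. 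From now on I assume $\pi$ is not a witness, so the procedure reaches Loop~1 with $U_\pi$ equal to the set of its free frequent vertices.

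I would then split on whether $U_\pi\subseteq T$. If some $v\in U_\pi$ lies outside $T$, the recursive call on $\pi\addassign\assign{\emptyset,\{v\}}$ enumerated by Loop~1 is still coherent with $T$ and supplies the desired proper extension $\pi'$. If instead $U_\pi\subseteq T$, then after the unconditional inclusion step at lines~\ref{line:inizio-inclusione-U}--\ref{line:fine-inclusione-U} the outer\nbdash-call state corresponds to $\sigma=\pi\addassign\assign{U_\pi,\emptyset}\dblsubeq T$, which is again non\nbdash-covering by Lemma~\ref{lemma:assignment_coherent_new_Tr_then_not_covering}, so the check at line~\ref{line:test-defCom-dopoU} cannot fire. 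If $\Sep(\sigma)=\emptyset$ the procedure returns \valtrue at line~\ref{line:ret-true-witness-dopoU}, matching the second alternative. Otherwise I use that $\ExcAlg$ was not touched between lines~\ref{line:inizio-check-witness} and~\ref{line:inizio-test-secondType-dopo-test-defCom}, so $\Com(\sigma)=\Com(\pi)\neq\emptyset$; combined with $\Sep(\sigma)\neq\emptyset$ and $\sigma\dblsubeq T$, this makes $\sigma$ a precursor of $T$. Lemma~\ref{lemma:intersezione_minima} applied to $\sigma$ then yields an edge $G\in\Sep(\sigma)$ and a free vertex $v\in G$ such that $\sigma\addassign\assign{\{v\},G\setminus\{v\}}$ is coherent with a new transversal of $\G$ \Wrt $\H$; this is precisely one of the recursive calls enumerated by Loop~2.

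The main obstacle will be the short\nbdash-circuit behaviour of Loops~1 and~2: if an earlier iteration returns \valtrue, subsequent iterations are skipped, so the specific ``good'' call identified above may never actually be issued. To deal with this I plan to argue that any return of \valtrue inside the procedure is inherited, through the recursion stack, from a successful witness test at line~\ref{line:ret-true-witness} or line~\ref{line:ret-true-witness-dopoU} in some descendant invocation; at that descendant the local assignment satisfies Condition~\eqref{eq:witnessing_condition} of Lemma~\ref{lemma:witnessing_condition}, so either its $\In$ is a new transversal of $\G$ \Wrt $\H$ or, via Lemma~\ref{lemma:newTransversals_complementari}, the complement of its $\Ex$ is. Every ancestor on the call stack is therefore coherent with this new transversal, in particular the immediate child of $\pi$ that triggered the early exit. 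Hence whenever a loop of \textsc{\DetNewTrAlg}$(\pi)$ exits early the triggering call is already on a proper extension of $\pi$ coherent with a new transversal, and otherwise the case analysis above produces such a call explicitly.
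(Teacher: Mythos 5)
Your proof is correct and follows essentially the same route as the paper's: rule out the covering tests via Lemma~\ref{lemma:assignment_coherent_new_Tr_then_not_covering}, handle the two witness exits at lines~\ref{line:ret-true-witness} and~\ref{line:ret-true-witness-dopoU}, and otherwise either find a coherent Loop-1 exclusion or apply Lemma~\ref{lemma:intersezione_minima} to $\sigma=\pi\addassign\assign{U_\pi,\emptyset}$ to exhibit a coherent Loop-2 call. Your two refinements --- splitting directly on $U_\pi\subseteq T$ rather than appealing to Lemma~\ref{lemma:transversal_include_o_esclude_vertice}, and observing that an early \valtrue exit of either loop is triggered by a child call whose assignment is itself coherent with a new transversal (by soundness of the witness tests) --- are both valid; the latter makes explicit a point the paper's proof leaves implicit.
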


\begin{proof}
Let $T$ be a new transversal of $\G$ \Wrt $\H$ coherent with $\pi$.
Since $\pi\dblsubeq T$, and $T$ is a new transversal of $\G$, $\pi$ cannot be a covering assignment (see \cref{lemma:assignment_coherent_new_Tr_then_not_covering}), and hence \textsc{\DetNewTrAlg}$(\pi)$ cannot return \valfalse at line~\ref*{line:ret-false-defCom}.
If $\pi$ is already a witness, then \textsc{\DetNewTrAlg}$(\pi)$ answers \valtrue at line~\ref*{line:ret-true-witness} (and the statement of the lemma would be proven).

If this is not the case, then $\pi$ is a non\nbdash-witnessing assignment coherent with $T$.
Let $U_\pi$ be the set $U$ computed by \textsc{\DetNewTrAlg}$(\pi)$ at line~\ref*{line:calcolo-U}.
First, let us assume that $U_\pi\neq\emptyset$.
At lines \ref*{line:inizio-test-firstType}\nbdash--\ref*{line:fine-test-firstType} recursive calls are performed on the various assignments $\pi'=\pi\addassign\assign{\emptyset,\{v\}}$, for each vertex $v\in U_\pi$ (note that $\pi\dblsub\pi'$).
If one of them is coherent with a new transversal of $\G$, then the statement of the lemma is proven.
If this is not the case, then, by \cref{lemma:transversal_include_o_esclude_vertice}, assignment $\sigma=\pi\addassign\assign{U_\pi,\emptyset}$ is coherent with $T$.

Again, since $\sigma\dblsubeq T$, $\sigma$ cannot be a covering assignment (see \cref{lemma:assignment_coherent_new_Tr_then_not_covering}), and hence the call \textsc{\DetNewTrAlg}$(\pi)$ cannot answer \valfalse at line~\ref*{line:ret-false-defCom-dopoU}.
If $\sigma$ is a witness, then \textsc{\DetNewTrAlg}$(\pi)$ returns \valtrue at line~\ref*{line:ret-true-witness-dopoU} (and the statement of the lemma would be proven).

If this is not the case, then also $\sigma$ is a non\nbdash-witnessing assignment coherent with $T$.
By \cref{lemma:intersezione_minima}, there is an edge $G\in\Sep(\sigma)$ and a free vertex $v\in G$ of $\sigma$ such that $\pi'=\sigma\addassign\assign{\{v\},G\setminus\{v\}}$ is coherent with a new transversal of $\G$ (again, with $\pi\dblsub\pi'$).
Note that such an assignment belongs exactly to those on which a recursive call is performed at lines \ref*{line:inizio-test-thirdType}\nbdash--\ref*{line:fine-test-thirdType}.
Hence the statement is proven.

To conclude, let us consider the case in which $U_\pi=\emptyset$.
In this case, \textsc{\DetNewTrAlg}$(\pi)$ does not execute the loop at lines \ref*{line:inizio-test-firstType}\nbdash--\ref*{line:fine-test-firstType}, and lines \ref*{line:inizio-inclusione-U}\nbdash--\ref*{line:fine-inclusione-U} do not alter in any way sets $\IncAlg$ and $\FreeAlg$.
This means that, after line~\ref*{line:fine-inclusione-U}, sets $\IncAlg$ and $\FreeAlg$ still reflect the original assignment $\pi$ on which the procedure was called.
Said so, the discussion is the very same as above, since \cref{lemma:intersezione_minima} guarantees that at least one of the recursive call performed at lines \ref*{line:inizio-test-thirdType}\nbdash--\ref*{line:fine-test-thirdType} is performed on an assignment coherent with a new transversal of $\G$.
\end{proof}

\begin{lemma}\label{lemma:esecuzione_generale_corretta}
Let $\G$ and $\H$ be two hypergraphs, and let $\pi$ be an assignment.
Then, the call \textsc{\DetNewTrAlg}$(\pi)$ answers \valtrue if and only if there is a new transversal of $\G$ \Wrt $\H$ coherent with $\pi$.
\end{lemma}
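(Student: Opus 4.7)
The plan is to prove both directions by induction on the number $|V\setminus(\In\cup\Ex)|$ of free vertices in $\pi$. The key observation is that every recursive call invoked by \textsc{\DetNewTrAlg}$(\pi)$ is executed on an assignment $\pi'$ with $\pi\dblsub\pi'$, so the free-vertex count strictly decreases along any recursion branch and the induction is well-founded.

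For the ($\Leftarrow$) direction, suppose a new transversal of $\G$ \Wrt $\H$ coherent with $\pi$ exists. By Lemma~\ref{lemma:esecuzione_singola_corretta}, either \textsc{\DetNewTrAlg}$(\pi)$ already returns \valtrue at line~\ref{line:ret-true-witness} or line~\ref{line:ret-true-witness-dopoU}, or among its recursive calls there is one executed on a proper extension $\pi'$ of $\pi$ that is coherent with a new transversal of $\G$ \Wrt $\H$. In the latter case, the inductive hypothesis applied to $\pi'$ yields that this recursive call returns \valtrue, which is then propagated upwards by the outer call (inspecting lines~\ref{line:call-ric-firstType} and~\ref{line:call-ric-thirdType}).

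For the ($\Rightarrow$) direction, suppose \textsc{\DetNewTrAlg}$(\pi)$ returns \valtrue. The inspection of the pseudo\nbdash-code shows that a \valtrue answer can originate only at one of three points. If it originates at line~\ref{line:ret-true-witness}, then the checks at lines~\ref{line:inizio-test-defComSep}--\ref{line:fine-test-defComSep} rule out $\Mis(\pi)\neq\emptyset$ and $\Cov(\pi)\neq\emptyset$, while the test at line~\ref{line:inizio-check-witness} ensures $\Sep(\pi)=\emptyset$ or $\Com(\pi)=\emptyset$, so $\pi$ satisfies Condition~\eqref{eq:witnessing_condition} of Lemma~\ref{lemma:witnessing_condition} and is therefore a witness. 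If $\In$ is a new transversal of $\G$ \Wrt $\H$, it is trivially coherent with $\pi$; if instead $\Ex$ is a new transversal of $\H$ \Wrt $\G$, then by Lemma~\ref{lemma:newTransversals_complementari} the set $\compl{\Ex}$ is a new transversal of $\G$ \Wrt $\H$, and because $\In\subseteq\compl{\Ex}$ and $\Ex\cap\compl{\Ex}=\emptyset$, this transversal is coherent with $\pi$. An analogous argument handles a \valtrue answer at line~\ref{line:ret-true-witness-dopoU}, where the current assignment is $\sigma=\pi\addassign\assign{U_\pi,\emptyset}$: any new transversal coherent with $\sigma$ is also coherent with $\pi$ since $U_\pi$ consists of free vertices of $\pi$. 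Finally, if \valtrue originates from a recursive call at line~\ref{line:call-ric-firstType} or line~\ref{line:call-ric-thirdType}, the inductive hypothesis applied to the corresponding proper extension $\pi'$ of $\pi$ supplies a new transversal coherent with $\pi'$, which is also coherent with $\pi$ since $\pi\dblsubeq\pi'$.

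The main obstacle is the bookkeeping around the symmetric nature of witnesses: one must carefully convert the case where $\Ex$ is a new transversal of $\H$ into an actual new transversal of $\G$ coherent with $\pi$, which is where Lemma~\ref{lemma:newTransversals_complementari} is invoked. Beyond this subtlety, the argument is a routine structural induction that simply threads Lemmas~\ref{lemma:witnessing_condition} and~\ref{lemma:esecuzione_singola_corretta} through the three exit points and the two recursive sub\nbdash-loops of \textsc{\DetNewTrAlg}.
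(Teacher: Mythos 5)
Your proof is correct and follows essentially the same route as the paper's: both directions rest on Lemma~\ref{lemma:esecuzione_singola_corretta} and Lemma~\ref{lemma:witnessing_condition}, with the only difference being presentational --- you package the recursion as a well-founded induction on the number of free vertices, whereas the paper argues via the monotonic growth of $\IncAlg$ and $\ExcAlg$ for ($\Rightarrow$) and an explicit ``foremost and longest'' sequence of coherent recursive calls for ($\Leftarrow$). Your handling of the symmetric witness case via Lemma~\ref{lemma:newTransversals_complementari} makes explicit a step the paper leaves implicit, but the substance is identical.
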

\begin{proof}
$(\Rightarrow)$
Let $\pi=\assign{\In,\Ex}$, and let us assume that \textsc{\DetNewTrAlg}$(\pi)$ answers \valtrue.
Therefore, \textsc{\DetNewTrAlg}$(\pi)$ itself or one of the recursive call spawned directly or indirectly by \textsc{\DetNewTrAlg}$(\pi)$ answers \valtrue either at line~\ref*{line:ret-true-witness} or at line~\ref*{line:ret-true-witness-dopoU} (see \cref{lemma:esecuzione_singola_corretta}).
We have already seen that those parts of the algorithm check whether the built assignment is a witness.
Moreover, observe that in the algorithm vertices of $\In$ are never removed from set $\IncAlg$, and vertices of $\Ex$ are never removed from set $\ExcAlg$.
Hence, if the algorithm answers \valtrue it is because it has actually built a witness $\sigma$ such that $\pi\dblsubeq\sigma$.
Hence, there is a new transversal of $\G$ \Wrt $\H$ coherent with $\pi$.

$(\Leftarrow)$
Assume now that there is a new transversal of $\G$ \Wrt $\H$ coherent with $\pi$.
By \cref{lemma:esecuzione_singola_corretta} we know that either \textsc{\DetNewTrAlg}$(\pi)$ answers \valtrue at line~\ref*{line:ret-true-witness} or at line~\ref*{line:ret-true-witness-dopoU} (and hence this direction of the proof would be proven), or among its recursive calls there is one executed on an assignment coherent with a new transversal of $\G$.

Let us consider the latter case, and let $\hat{\pi}^1$ be the first assignment coherent with a new transversal of $\G$ on which \textsc{\DetNewTrAlg}$(\pi)$ executes a recursive call.
Since $\hat{\pi}^1$ is coherent with a new transversal of $\G$, \cref{lemma:esecuzione_singola_corretta} applies to \textsc{\DetNewTrAlg}$(\hat{\pi}^1)$ as well.

By the recursive application of \cref{lemma:esecuzione_singola_corretta}, we conclude that among all the possible sequences of recursive calls rooted in \textsc{\DetNewTrAlg}$(\pi)$, there are sequences of calls successively executed on assignments coherent with a new transversal of $\G$.
Let us pose, for notational convenience, $\pi=\hat{\pi}^0$.

Let $\hat{p}=(\hat{\pi}^0,\hat{\pi}^1,\dots,\hat{\pi}^k)$ be the ``foremost'' and ``longest'' of those sequences such that, for all $1\leq i\leq k$, assignment $\hat{\pi}^i$ is coherent with a new transversal of $\G$ and is one of the assignments on which \textsc{\DetNewTrAlg}$(\hat{\pi}^{i-1})$ executes a recursive call.
With ``foremost'' we mean that, for every $1\leq i\leq k$, assignment $\hat{\pi}^i$ is the \emph{first} coherent with a new transversal of $\G$ on which \textsc{\DetNewTrAlg}$(\hat{\pi}^{i-1})$ executes a recursive call.
With ``longest'' we mean that \textsc{\DetNewTrAlg}$(\hat{\pi}^k)$ does not perform further recursive calls on assignments coherent with new transversals of $\G$.
Observe that such a sequence is finite because every recursive path traversed by the algorithm is finite (we have already discussed this).

Since $\hat{\pi}^k$ is coherent with a new transversal of $\G$ and $\hat{p}$ is the foremost and longest sequence of calls successively executed on assignments coherent with a new transversal of $\G$, from \cref{lemma:esecuzione_singola_corretta} it follows that \textsc{\DetNewTrAlg}$(\hat{\pi}^k)$ returns \valtrue either at line~\ref*{line:ret-true-witness} or at line~\ref*{line:ret-true-witness-dopoU}.
This positive answer propagates throughout all the recursion path and thus \textsc{\DetNewTrAlg}$(\pi)$ answers \valtrue{} as well.
\end{proof}

\begin{proof}[Proof of \Cref*{theo:algoritmo_corretto}]
By \cref{lemma:no_new_transversal_iff_dual}, $\G$ and $\H$ are dual if and only if they are simple, satisfy the intersection property, and there is no new transversal of $\G$ \Wrt $\H$.
Therefore, since $\G$ and $\H$ are explicitly checked to be simple hypergraphs satisfying the intersection property (line~\ref*{line:det-alg_checkSimpleHP}), the correctness of the algorithm \textsc{\Alg} directly follows from the correctness of the procedure \textsc{\DetNewTrAlg} (see \cref{lemma:esecuzione_generale_corretta}): in fact, any new transversal of $\G$, if exists, is coherent with $\emptyassign$ which is the assignment on which \textsc{\DetNewTrAlg} is invoked by the procedure \textsc{\Alg} (line~\ref*{line:det-alg_callNewTr}).
\end{proof}

Let us now consider the time complexity of the algorithm.
\begin{lemma}\label{lemma:dimezzamento_ricorsivo}
Let $\G$ and $\H$ be two hypergraphs satisfying the intersection property, and let $\pi$ be an assignment.
If \textsc{\DetNewTrAlg}$(\pi')$ is any of the recursive calls performed by \textsc{\DetNewTrAlg}$(\pi)$, then $|\Com(\pi')|\leq\frac{1}{2}|\Com(\pi)|$.
\end{lemma}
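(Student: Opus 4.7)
The plan is to split the recursive calls made by \textsc{\DetNewTrAlg}$(\pi)$ into the two syntactic families in which they appear and to apply Lemma~\ref{lemma:riduzione_taglia_assignment_tree} to each family, exploiting the fact that the set $U_\pi$ computed at line~\ref{line:calcolo-U} consists precisely of the free frequent vertices of $\pi$.

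\emph{Setting.} Let $U_\pi\subseteq\FreeAlg$ be the set computed at line~\ref{line:calcolo-U} during the execution of \textsc{\DetNewTrAlg}$(\pi)$, i.e.\ the set of free vertices of $\pi$ that belong to at least half of the edges of $\Com(\pi)$; by definition $U_\pi=\Freq(\pi)$. Let $\sigma=\pi\addassign\assign{U_\pi,\emptyset}$ be the intermediate assignment reached after line~\ref{line:fine-inclusione-U}. Since $\Ex(\sigma)=\Ex(\pi)$, we immediately have $\Com(\sigma)=\Com(\pi)$, and therefore $\varepsilon_v^{\Com(\sigma)}=\varepsilon_v^{\Com(\pi)}$ for every vertex $v$.

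\emph{Case 1: recursive calls at lines~\ref{line:inizio-test-firstType}--\ref{line:fine-test-firstType}.} Here $\pi'=\pi\addassign\assign{\emptyset,\{v\}}$ for some $v\in U_\pi$. Since $\G$ and $\H$ satisfy the intersection property and $v$ is a free vertex of $\pi$, Lemma~\ref{lemma:riduzione_taglia_assignment_tree} gives
\[
|\Com(\pi')|=(1-\varepsilon_v^{\Com(\pi)})\cdot|\Com(\pi)|.
\]
Because $v\in U_\pi=\Freq(\pi)$, we have $\varepsilon_v^{\Com(\pi)}\geq\tfrac12$, and thus $|\Com(\pi')|\leq\tfrac12|\Com(\pi)|$.

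\emph{Case 2: recursive calls at lines~\ref{line:inizio-test-thirdType}--\ref{line:fine-test-thirdType}.} Here $\pi'=\sigma\addassign\assign{\{v\},G\setminus\{v\}}$ for some $G\in\Sep(\sigma)$ and some free vertex $v$ of $\sigma$ with $v\in G$. The free vertices of $\sigma$ are exactly those vertices that were free in $\pi$ but were \emph{not} included in $U_\pi$, i.e., the free vertices of $\pi$ that are \emph{infrequent} in $\pi$. Hence $\varepsilon_v^{\Com(\pi)}<\tfrac12$, and since $\Com(\sigma)=\Com(\pi)$, also $\varepsilon_v^{\Com(\sigma)}<\tfrac12$. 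By the second part of Lemma~\ref{lemma:riduzione_taglia_assignment_tree} (which requires exactly the intersection property assumed in the statement),
\[
|\Com(\pi')|\leq\varepsilon_v^{\Com(\sigma)}\cdot|\Com(\sigma)|<\tfrac12|\Com(\pi)|,
\]
which in particular yields the desired bound.

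\emph{Main obstacle.} There is no deep obstacle; the only delicate point is the bookkeeping observation that $\Com$ is invariant under the bulk-inclusion step performed at lines~\ref{line:inizio-inclusione-U}--\ref{line:fine-inclusione-U} (because that step only grows $\IncAlg$), so that the frequent/infrequent status of the remaining free vertices is the same in $\sigma$ as it was in $\pi$. This is what lets us transfer the ``$v$ is infrequent'' condition from $\pi$ to $\sigma$ and so apply the intersection-property half of Lemma~\ref{lemma:riduzione_taglia_assignment_tree} in Case~2.
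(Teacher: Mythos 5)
Your proof is correct and follows essentially the same route as the paper's: split the recursive calls into the exclusion calls on frequent vertices and the critical-inclusion calls on (now infrequent) free vertices of $\sigma=\pi\addassign\assign{U_\pi,\emptyset}$, observe that $\Com(\sigma)=\Com(\pi)$ since the bulk inclusion does not touch the excluded set, and apply the two halves of Lemma~\ref{lemma:riduzione_taglia_assignment_tree}. The only cosmetic difference is that the paper treats the case $U_\pi=\emptyset$ separately, whereas your argument absorbs it (Case~1 is then vacuous and Case~2 applies with $\sigma=\pi$).
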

\begin{proof}
If \textsc{\DetNewTrAlg}$(\pi)$ does not perform any recursive call at all, then the statement of the lemma holds (because the antecedent is false).
Let us assume that \textsc{\DetNewTrAlg}$(\pi)$ performs recursive calls, and let $U_\pi$ be the set $U$ computed by \textsc{\DetNewTrAlg}$(\pi)$ at line~\ref*{line:calcolo-U}.
Set $U_\pi$ can be either empty or not, and we first consider the case in which it is non\nbdash-empty.

Let us focus on the recursive calls performed at line~\ref*{line:call-ric-firstType}.
By the definition of $U_\pi$, if $v\in U_\pi$, then $\varepsilon_v^{\Com(\pi)}\geq\frac{1}{2}$, and thus $|\Com(\pi')|\leq\frac{1}{2}|\Com(\pi)|$, for the assignment $\pi'=\pi\addassign\assign{\emptyset,\{v\}}$ (see \cref{lemma:riduzione_taglia_assignment_tree}).

Now, let us consider the calls at line~\ref*{line:call-ric-thirdType}.
If \textsc{\DetNewTrAlg}$(\pi)$ arrives at that stage of its execution, it means that all vertices belonging to $U_\pi$ have been already included (at line~\ref*{line:inclusione-U}).
Let us denote by $\sigma=\pi\addassign\assign{U_\pi,\emptyset}$ the assignment encoded by sets $\IncAlg$ and $\ExcAlg$ at that stage of the execution flow.
Observe that the set of the excluded vertices is the very same of that at the beginning of the execution of the call \textsc{\DetNewTrAlg}$(\pi)$, implying that $\Com(\sigma)=\Com(\pi)$.

Since the algorithm is in the loop at lines \ref*{line:inizio-test-thirdType}\nbdash--\ref*{line:fine-test-thirdType}, it is taking into consideration recursive calls on assignments $\pi'=\sigma\addassign\assign{\{v\},G\setminus\{v\}}$, where $G\in\Sep(\sigma)$ and $v\in G$ is a free vertex in $\sigma$.
Note that $\varepsilon_v^{\Com(\sigma)}<\frac{1}{2}$ because $\Com(\sigma) = \Com(\pi)$ and $v\notin U_\pi$.
Hence, by the intersection property and \cref{lemma:riduzione_taglia_assignment_tree}, it follows that $|\Com(\pi')|\leq\frac{1}{2} |\Com(\pi)|$ for every assignment on which a recursive call is invoked at line~\ref*{line:call-ric-thirdType}.

To conclude, if $U_\pi$ is empty, then only few things change.
Indeed, in this case, there is no recursive call at all at line~\ref*{line:call-ric-firstType}, and for the recursive calls at line~\ref*{line:call-ric-thirdType} a similar discussion to the that above proves that $|\Com(\pi')|\leq \frac{1}{2}|\Com(\pi)|$ (simply note that, being $U_\pi$ empty, then $\sigma=\pi$, $\Com(\sigma) = \Com(\pi)$, $v\notin U_\pi$, and hence, again, $\varepsilon_v^{\Com(\sigma)}<\frac{1}{2}$, and \cref{lemma:riduzione_taglia_assignment_tree} applies).
\end{proof}

\begin{lemma}\label{lemma:ramo_ricorsivo_logaritmico}
Let $\G$ and $\H$ be two hypergraphs satisfying the intersection property, and let $\pi$ be an assignment.
Then, the maximum depth of every recursive path traversed by \textsc{\DetNewTrAlg}$(\pi)$ is $O(\log |\Com(\pi)|)$.
\end{lemma}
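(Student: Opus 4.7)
The plan is essentially to iterate Lemma~\ref{lemma:dimezzamento_ricorsivo} along any recursion path. First I would observe that $\G$ and $\H$ are fixed input parameters passed unchanged through all (direct or indirect) recursive calls of \textsc{\DetNewTrAlg}---only the assignment (encoded by $\IncAlg, \ExcAlg, \FreeAlg$) changes---so the hypothesis of Lemma~\ref{lemma:dimezzamento_ricorsivo} that $\G$ and $\H$ satisfy the intersection property is preserved all the way down the recursion tree. Thus, along any sequence $\pi = \pi_0, \pi_1, \ldots, \pi_k$ of assignments generated by successive recursive calls (where \textsc{\DetNewTrAlg}$(\pi_i)$ is a direct recursive call made by \textsc{\DetNewTrAlg}$(\pi_{i-1})$), a straightforward induction on $i$ combined with Lemma~\ref{lemma:dimezzamento_ricorsivo} yields
\[
|\Com(\pi_i)| \leq \tfrac{1}{2^i}\,|\Com(\pi)|.
\]

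Next I would argue that the recursion must halt once $|\Com|$ drops to $0$. Indeed, if $\Com(\pi_k) = \emptyset$, then the test at line~\ref{line:inizio-check-witness} of \textsc{\DetNewTrAlg} succeeds and the procedure returns \valtrue at line~\ref{line:ret-true-witness} without issuing any further recursive call. Since $|\Com(\pi_i)|$ is always a nonnegative integer bounded above by $\tfrac{1}{2^i}\,|\Com(\pi)|$, it must vanish as soon as $\tfrac{1}{2^i}\,|\Com(\pi)| < 1$, i.e., as soon as $i > \log|\Com(\pi)|$. Consequently, every recursion path rooted at \textsc{\DetNewTrAlg}$(\pi)$ has depth at most $\lfloor \log|\Com(\pi)|\rfloor + 1$, which is $O(\log|\Com(\pi)|)$.

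There is no real obstacle here: the heavy lifting was already done in Lemma~\ref{lemma:dimezzamento_ricorsivo}, and the remaining argument is the classical observation that an integer quantity that is at least halved at each step of a process reaches $0$ within logarithmically many steps. The only point worth making explicit is that Lemma~\ref{lemma:dimezzamento_ricorsivo} applies at \emph{every} level of recursion---not merely the first---which is immediate because its hypotheses (intersection property of $\G$ and $\H$) are invariants of the recursion rather than properties of the particular assignment being processed.
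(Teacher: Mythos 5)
Your proof is correct and follows essentially the same route as the paper: iterate the halving guarantee of Lemma~\ref{lemma:dimezzamento_ricorsivo} along any recursion path and observe that a call with empty $\Com$ issues no further recursive calls. One cosmetic remark: when $\Com(\pi_k)=\emptyset$ the call may also terminate with \valfalse{} at the covering test of line~\ref{line:test-defSep} rather than with \valtrue{} at line~\ref{line:ret-true-witness}, but in either case no recursive call is made, so your depth bound stands (the paper instead disposes of the \valfalse{}-answering branches by appealing to the correctness of \textsc{\DetNewTrAlg}, which your more direct argument does not need).
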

\begin{proof}
By \cref{lemma:dimezzamento_ricorsivo}, it follows that the size of set $\ComAlg$ halves at every recursion step, and this happens for all the recursive calls.
Since the procedure \textsc{\DetNewTrAlg} is correct (\cref{lemma:esecuzione_generale_corretta}), every recursive path leading to a \valfalse answer has to return its (\valfalse) answer before $\ComAlg$ becomes empty, for otherwise an incorrect answer would be returned by \textsc{\DetNewTrAlg}$(\pi)$.
Therefore, the depth of every recursive path leading to a \valfalse answer is logarithmic in $|\Com(\pi)|$.

Clearly, also the depth of a recursive path leading to a \valtrue answer is logarithmic in the size of $\Com(\pi)$ (again, due to the halving size of $\ComAlg$).
\end{proof}

\begin{theorem}\label{theo:tempo_esecuzione_algoritmo_hitting-property}
Let $\G$ and $\H$ be two hypergraphs satisfying the intersection property.
Then, the time complexity of the algorithm \textsc{\Alg} is $O(\InputSize^{O(\log \InputSize)})$.
\end{theorem}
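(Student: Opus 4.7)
The plan is to combine a depth bound on the recursion tree generated by \textsc{\DetNewTrAlg} with a polynomial bound on its branching factor and per-node work, yielding the quasi-polynomial time estimate. All of the hard combinatorial work has already been done in Lemma~\ref{lemma:ramo_ricorsivo_logaritmico}, so this will essentially be an arithmetic exercise.

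First, I would invoke Lemma~\ref{lemma:ramo_ricorsivo_logaritmico} on the initial call \textsc{\DetNewTrAlg}$(\G,\H,\emptyset,\emptyset,V)$, whose assignment is $\emptyassign$ with $|\Com(\emptyassign)| = |\H| \leq \InputSize$. This gives that every root-to-leaf path in the recursion tree has length $O(\log |\H|) = O(\log \InputSize)$. Here is exactly where the intersection-property hypothesis is used: it feeds into Lemma~\ref{lemma:dimezzamento_ricorsivo} (via Lemma~\ref{lemma:riduzione_taglia_assignment_tree}), guaranteeing the halving of $|\ComAlg|$ at each recursion step and hence the logarithmic depth.

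Next, I would bound the branching at each node \textsc{\DetNewTrAlg}$(\pi)$. Recursive calls are spawned only by the two loops at lines~\ref{line:inizio-test-firstType}--\ref{line:fine-test-firstType} and~\ref{line:inizio-test-thirdType}--\ref{line:fine-test-thirdType}. The first loop issues at most $|U_\pi| \leq |V| \leq \InputSize$ calls. The second loop iterates over pairs $(G,v)$ with $G \in \SepAlg$ and $v \in G \cap \FreeAlg$, so the number of such pairs is bounded by $\sum_{G \in \G} |G| \leq \|\G\| \leq \InputSize$. Hence the branching factor at every node is $O(\InputSize)$, and the total number of nodes in the recursion tree is at most $\InputSize^{O(\log \InputSize)}$.

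Finally, I would observe that the local work at each node is polynomial in $\InputSize$: computing $\SepAlg$, $\ComAlg$, and $U$, checking the covering and witness conditions, and enumerating the candidate extensions, are all feasible in $\mathrm{poly}(\InputSize)$ time. Multiplying the per-node work by the number of nodes yields total time $\mathrm{poly}(\InputSize) \cdot \InputSize^{O(\log \InputSize)} = O(\InputSize^{O(\log \InputSize)})$. There is no real obstacle here—everything substantial has been absorbed into Lemmas~\ref{lemma:dimezzamento_ricorsivo} and~\ref{lemma:ramo_ricorsivo_logaritmico}; the only care needed is to ensure the branching count is taken on the right quantities (vertices of $U_\pi$ and $\G$-edge/free-vertex incidences, not blindly on $|V| \cdot |\G|$).
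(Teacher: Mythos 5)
Your proposal is correct and follows essentially the same route as the paper's proof: logarithmic recursion depth from Lemma~\ref{lemma:ramo_ricorsivo_logaritmico} (with the intersection property feeding the halving argument), a polynomial branching factor from the two loops, and polynomial per-call work, multiplied together. Your count of $O(\InputSize)$ for the second loop via $\sum_{G\in\G}|G|\leq\|\G\|$ is in fact slightly sharper than the paper's $O(\InputSize^2)$, but this makes no difference to the final $O(\InputSize^{O(\log \InputSize)})$ bound.
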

\begin{proof}
Checking whether hypergraphs $\G$ and $\H$ are simple and satisfy the intersection property (line~\ref*{line:det-alg_checkSimpleHP}) is feasible in $O(\InputSize^2)$.

Regarding the procedure \textsc{\DetNewTrAlg}, the size of set $U$, computed at line~\ref*{line:calcolo-U}, is bounded by $|V|$.
Therefore there are no more than $O(\InputSize)$ recursive calls performed at line~\ref*{line:call-ric-firstType}.
Moreover, there are no more than $O(\InputSize^2)$ recursive calls performed at line~\ref*{line:call-ric-thirdType}, because, given any edge $G\in\G$, there are at most $O(\InputSize)$ different vertices belonging to $G$ and $|\G|$ is $O(\InputSize)$.

Hence, every call of the procedure \textsc{\DetNewTrAlg} performs $O(\InputSize^2)$ recursive calls.
By \cref{lemma:ramo_ricorsivo_logaritmico}, every recursive path has a depth $O(\log |\H|)$ that is also $O(\log \InputSize)$, hence at most $O(\InputSize^{2 O(\log \InputSize)})$ calls of \textsc{\DetNewTrAlg} are executed.
Since each call of \textsc{\DetNewTrAlg} executes in time $O(\InputSize^2)$
to perform its computations, the overall time complexity of the algorithm \textsc{\Alg} is $O(\InputSize^{2(O(\log \InputSize)+1)})$, which is $O(\InputSize^{O(\log \InputSize)})$.
\end{proof}

If we assume that the input hypergraphs $\G$ and $\H$ are guaranteed to fulfill $\G\subseteq\Tr(\H)$ and $\H\subseteq\Tr(\G)$ (i.e., we assume the stricter condition imposed on the input hypergraphs by \citet{Boros2009}), and are such that each vertex belongs to at least an edge in both $\G$ and $\H$, then we can carry out a finer time complexity analysis.
Note that, if a vertex $v$ is not included in any edge of $\G$ and $\H$, then $v$ is not relevant for the purpose of checking the duality of $\G$ and $\H$, because $v$ can be completely ignored.
On the other hand, if a vertex $v$ belongs to an edge in one hypergraph and, at the same time, $v$ does not belong to any edge of the other hypergraph, then $\G$ and $\H$ are not dual.
Indeed, assume, for example, that vertex $v$ belongs to an edge $G$ of $\G$ and does not belong to any edge of $\H$.
The set of vertices $G\setminus\{v\}$ is clearly a transversal of $\H$, therefore $G$ is not a minimal transversal of $\H$, and hence $\G\neq\Tr(\H)$.
We need the following properties.
\begin{lemma}\label{lemma:witness_distinti}
Let $\G$ be a hypergraph, and $T=\{v_1,\dots,v_t\}\subseteq V$ be a minimal transversal of $\G$.
Then, there are (at least) $|T|$ distinct edges $G_1,\dots,G_t$ of $\G$ such that $T\cap G_i=\{v_i\}$.
\end{lemma}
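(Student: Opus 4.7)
The plan is to derive the result as a direct consequence of Lemma~\ref{lemma:transMin-ogniVertexCritico}, which characterizes minimal transversals as transversals in which every vertex is critical. The only additional work is to observe that the witness edges obtained from that characterization must be pairwise distinct, and this follows immediately from the definition of criticality.

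First I would apply Lemma~\ref{lemma:transMin-ogniVertexCritico} to the minimal transversal $T=\{v_1,\dots,v_t\}$: for each $i\in\{1,\dots,t\}$, the vertex $v_i$ is critical in $T$ with respect to $\H$, so there exists some edge $H_i\in\H$ witnessing its criticality, i.e., satisfying $T\cap H_i=\{v_i\}$. This yields a collection $H_1,\dots,H_t$ of edges of $\H$ of the required form; the only question is whether the chosen witnesses can be taken to be distinct.

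For distinctness, I would argue by contradiction. Suppose that $H_i=H_j$ for some $i\neq j$. Then $\{v_i\}=T\cap H_i=T\cap H_j=\{v_j\}$, which forces $v_i=v_j$, contradicting the assumption that $T$ consists of $t$ pairwise distinct vertices (recall the listing $T=\{v_1,\dots,v_t\}$ is a genuine enumeration of $|T|$ elements). Hence the $t$ witnesses $H_1,\dots,H_t$ are automatically distinct, establishing the claim. I do not anticipate any real obstacle here: the statement is essentially a repackaging of Lemma~\ref{lemma:transMin-ogniVertexCritico} together with the trivial pigeonhole\nobreakdash-style observation on criticality witnesses.
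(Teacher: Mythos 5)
Your proof is correct and follows essentially the same route as the paper's: both apply Lemma~\ref{lemma:transMin-ogniVertexCritico} to obtain a criticality witness for each vertex of $T$, and both conclude distinctness from the observation that an edge $H$ with $T\cap H=\{v\}$ determines $v$ uniquely (the paper phrases this as a pigeonhole-style contradiction on the number of witnesses, you phrase it as a direct contradiction from $H_i=H_j$). No gap.
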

\begin{proof}
Let $W=\{G_1,\dots,G_p\}$ be the set of all the edges of $\G$ witnessing the criticality of vertices in $T$.
Since $T$ is a minimal transversal of $\G$, by \cref{lemma:transMin-ogniVertexCritico}, for each vertex $v\in T$ there is at least one edge in $W$ witnessing $v$'s criticality.
Now, assume by contradiction that $p<t$.
Then, there must be two different vertices in $T$, say $v_i$ and $v_j$, and an edge $G_k\in W$ such that $\{v_i\}=T\cap G_k$ and $\{v_j\}=T\cap G_k$: a contradiction, because $v_i$ and $v_j$ are assumed to be different.
\end{proof}

\begin{corollary}\label{corol:bound_size_transversal}
Let $\G$ be a hypergraph, and let $T$ be a minimal transversal of $\G$.
Then, $|T|\leq |\G|$.
\end{corollary}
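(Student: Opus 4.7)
The plan is to derive the corollary as an immediate consequence of Lemma~\ref{lemma:witness_distinti}. That lemma guarantees that, given a minimal transversal $T = \{v_1, \dots, v_t\}$ of $\H$, one can find $t = |T|$ pairwise distinct edges $H_1, \dots, H_t \in \H$, where each $H_i$ witnesses the criticality of $v_i$ (i.e., $T \cap H_i = \{v_i\}$).

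First, I would invoke Lemma~\ref{lemma:witness_distinti} to obtain such a collection of $|T|$ distinct witnessing edges in $\H$. Since these edges are elements of the edge set of $\H$, their number is bounded above by $|\H|$, the total number of edges of $\H$. Therefore $|T| \leq |\H|$.

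There is no real obstacle here: the content of the corollary is just a rephrasing of the existence part of Lemma~\ref{lemma:witness_distinti}, projected down to a cardinality comparison. The only thing worth noting is that the lemma's conclusion already handles the potential subtlety (namely that a single edge of $\H$ could in principle witness the criticality of at most one vertex of $T$, because $T \cap H_i$ is forced to be the singleton $\{v_i\}$), which is precisely what rules out the pigeonhole collision that would otherwise allow $|T| > |\H|$. Hence the proof is a one-line appeal to Lemma~\ref{lemma:witness_distinti}.
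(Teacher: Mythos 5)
Your proof is correct and follows exactly the paper's own argument: invoke Lemma~\ref{lemma:witness_distinti} to obtain $|T|$ pairwise distinct witnessing edges of $\H$, and conclude $|T|\leq|\H|$ since these are all edges of $\H$. Nothing is missing.
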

\begin{proof}
By \cref{lemma:witness_distinti}, for every minimal transversal $T$ of size $t$ there are at least $t$ distinct edges of $\G$ witnessing the criticality of the vertices of $T$.
So, a minimal transversal of $\G$ cannot have more vertices than the number of edges of $\G$, for otherwise there would not be enough criticality's witnesses.
\end{proof}

\begin{lemma}\label{lemma:bound_nodi}
Let $\G$ and $\H$ be two hypergraphs such that $\G\subseteq\Tr(\H)$ or $\H\subseteq\Tr(\G)$ and such that each vertex $v\in V$ belongs to an edge in $\G$ and to an edge in $\H$.
Then, $|V|\leq|\G|\cdot|\H|$.
\end{lemma}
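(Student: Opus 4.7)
The plan is to reduce the claim to the already-established size bound for minimal transversals (Corollary~\ref{corol:bound_size_transversal}), using only the convention, stated in Section~\ref{sec:prelim}, that every vertex of a hypergraph belongs to at least one of its edges.

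Without loss of generality assume $\G \subseteq \Tr(\H)$; the other case is symmetric in the roles of $\G$ and $\H$. Under this assumption, every edge $G \in \G$ is, by definition of $\Tr(\H)$, a minimal transversal of $\H$. Applying Corollary~\ref{corol:bound_size_transversal} to each such $G$ then yields the uniform bound
\[
|G| \;\leq\; |\H| \qquad \text{for every } G \in \G.
\]

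Next, I would invoke the standing convention that $V$ contains no ``isolated'' vertices, i.e., every $v \in V$ belongs to at least one edge of $\G$. Hence $V = \bigcup_{G \in \G} G$, and a trivial union bound gives
\[
|V| \;\leq\; \sum_{G \in \G} |G| \;\leq\; |\G| \cdot |\H|,
\]
which is exactly the desired inequality.

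There is essentially no technical obstacle here: the only thing to be careful about is matching the assumption $\G \subseteq \Tr(\H)$ (or its dual $\H \subseteq \Tr(\G)$) to the correct application of Corollary~\ref{corol:bound_size_transversal}, and explicitly invoking the ``no isolated vertices'' convention to justify $V = \bigcup_{G \in \G} G$ in the first case (respectively $V = \bigcup_{H \in \H} H$ in the second).
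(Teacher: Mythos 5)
Your proof is correct and follows essentially the same route as the paper's: apply Corollary~\ref{corol:bound_size_transversal} to each edge of $\G$ (each being a minimal transversal of $\H$), then sum and use $|V|\leq\sum_{G\in\G}|G|$. The only difference is that you make explicit the ``no isolated vertices'' convention that the paper uses implicitly when writing $|V|\leq\sum_{G\in\G}|G|$.
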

\begin{proof}
If $\G\subseteq\Tr(\H)$, then from \cref{corol:bound_size_transversal} it follows that, for every edge $G\in\G$, $|G|\leq|\H|$, because $G$ is a minimal transversal of $\H$.
By summing these relations over all edges of $\G$, we obtain $\sum_{G\in\G}|G|\leq|\G|\cdot|\H|$, which, combined with $|V|\leq \sum_{G\in\G}|G|$ (because all vertices belong to at least an edge of $\G$), proves the statement.
If $\H\subseteq\Tr(\G)$, then the statement follows by symmetry.
\end{proof}

\begin{theorem}\label{theo:tempo_esecuzione_algoritmo_intersection-property}
Let $\G$ and $\H$ be two hypergraphs such that $\G\subseteq\Tr(\H)$ and $\H\subseteq\Tr(\G)$ and such that each vertex $v\in V$ belongs to an edge in $\G$ and to an edge in $\H$.
Then, the time complexity of algorithm \textsc{\Alg}, in which the condition of $\G$ and $\H$ being such that $\G\subseteq\Tr(\H)$ and $\H\subseteq\Tr(\G)$ is checked, instead of them being simple and satisfying the intersection property, is $O((|\G|\cdot|\H|)^{O(\log |\H|)})$.
\end{theorem}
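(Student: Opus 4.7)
The plan is to mimic the proof of Theorem~\ref{theo:tempo_esecuzione_algoritmo_hitting-property}, but to replace the coarse bound $\InputSize$ by bounds expressed purely in terms of $|\G|$ and $|\H|$ using Corollary~\ref{corol:bound_size_transversal} and Lemma~\ref{lemma:bound_nodi}. First I would verify that the modified input check (``$\G\subseteq\Tr(\H)$ and $\H\subseteq\Tr(\G)$'') can still be carried out in polynomial time: for every $G\in\G$ we test that $G$ is a transversal of $\H$ in time $O(|G|\cdot|\H|)$ and that for every $v\in G$ the set $G\setminus\{v\}$ fails to be a transversal in time $O(|G|^2\cdot|\H|)$, and symmetrically for edges of $\H$. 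Since $|G|\leq|\H|$ and $|H|\leq|\G|$ by Corollary~\ref{corol:bound_size_transversal}, this initial test costs at most $O(|\G|\cdot|\H|\cdot(|\G|+|\H|)^{2})$, which will be absorbed into the final bound. Moreover, the new precondition implies both simplicity and the intersection property, so all lemmas used in the earlier complexity analysis—in particular Lemmas~\ref{lemma:riduzione_taglia_assignment_tree}, \ref{lemma:dimezzamento_ricorsivo}, and~\ref{lemma:ramo_ricorsivo_logaritmico}—remain available.

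Next I would re-bound the branching factor of a single call of \textsc{\DetNewTrAlg}. The set $U$ computed at line~\ref{line:calcolo-U} satisfies $|U|\leq|V|\leq|\G|\cdot|\H|$ by Lemma~\ref{lemma:bound_nodi}, so the loop at lines~\ref{line:inizio-test-firstType}--\ref{line:fine-test-firstType} produces at most $|\G|\cdot|\H|$ recursive calls. For the loop at lines~\ref{line:inizio-test-thirdType}--\ref{line:fine-test-thirdType}, the number of pairs $(G,v)$ with $G\in\SepAlg\subseteq\G$ and $v\in G$ is bounded by $\sum_{G\in\G}|G|\leq|\G|\cdot|\H|$, again using $|G|\leq|\H|$ from Corollary~\ref{corol:bound_size_transversal}. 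Hence each invocation spawns $O(|\G|\cdot|\H|)$ recursive calls, and it is straightforward to check that the non-recursive work inside one invocation (computing $\SepAlg$ and $\ComAlg$, computing $U$, and testing the covering/witness conditions) is polynomial in $|\G|,|\H|$ and therefore absorbable.

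Finally I would combine these ingredients with the logarithmic recursion depth: by Lemma~\ref{lemma:ramo_ricorsivo_logaritmico}, since the intersection property is satisfied at the top call (the initial assignment being empty and $\Com(\emptyassign)=\H$), every recursion path rooted at \textsc{\DetNewTrAlg}$(\G,\H,\emptyassign)$ has depth at most $O(\log|\H|)$. The total number of recursive calls is therefore $(|\G|\cdot|\H|)^{O(\log|\H|)}$, and multiplying by the polynomial per-call cost yields the claimed time bound $O((|\G|\cdot|\H|)^{O(\log|\H|)})$.

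The main technical point to be careful about is that Lemma~\ref{lemma:ramo_ricorsivo_logaritmico} is originally stated for inputs satisfying only the intersection property, which we already have; no additional work is needed to justify its use. The only genuine improvement over Theorem~\ref{theo:tempo_esecuzione_algoritmo_hitting-property} comes from replacing the generic vertex-set bound $|V|\leq\InputSize$ by the sharper $|V|\leq|\G|\cdot|\H|$ and the edge-size bound $|G|\leq|\H|$, both of which fail without the stronger hypothesis $\G\subseteq\Tr(\H)$, $\H\subseteq\Tr(\G)$. Thus no new algorithmic idea is required—only a tighter accounting.
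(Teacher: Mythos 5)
Your proposal is correct and follows essentially the same route as the paper's own proof: both re-run the accounting of Theorem~\ref{theo:tempo_esecuzione_algoritmo_hitting-property}, replacing the generic bound on $|V|$ by $|V|\leq|\G|\cdot|\H|$ from Lemma~\ref{lemma:bound_nodi} and the edge-size bound $|G|\leq|\H|$ from Corollary~\ref{corol:bound_size_transversal}, and then combine the $O(|\G|\cdot|\H|)$ branching factor with the $O(\log|\H|)$ recursion depth of Lemma~\ref{lemma:ramo_ricorsivo_logaritmico}. Your additional remarks on checking the modified precondition and on absorbing the per-call polynomial cost match what the paper does implicitly.
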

\begin{proof}
To prove the statement we need a similar proof to that of \cref{theo:tempo_esecuzione_algoritmo_hitting-property}.
Just observe that verifying that $\G$ and $\H$ are such that $\G\subseteq\Tr(\H)$ and $\H\subseteq\Tr(\G)$ is feasible in $O(\InputSize^2)$ and verifying that each vertex is included in an edge in $\G$ and in an edge in $\H$ is feasible in $O(\InputSize)$.
Moreover, by \cref{lemma:bound_nodi}, the size of the set $U$, computed at line~\ref*{line:calcolo-U}, is bounded by $|\G|\cdot|\H|$.
In addition, there are no more than $|\G|\cdot|\H|$ recursive calls performed at line~\ref*{line:call-ric-thirdType}, because given any edge $G\in\G$, from $\G\subseteq\Tr(\H)$ and \cref{corol:bound_size_transversal} it follows that there are at most $|\H|$ different vertices belonging to $G$.

Therefore, every call of procedure \textsc{\DetNewTrAlg} performs $O(|\G|\cdot|\H|)$ recursive calls.
Since each call executes actually in time $O(|\G|\cdot|\H|)$
to perform its computations, the overall time complexity of the algorithm \textsc{\Alg} (with the modifications mentioned in the statement of the lemma) is $O((|\G|\cdot|\H|)^{O(\log |\H|)+1})$, which is $O((|\G|\cdot|\H|)^{O(\log |\H|)})$.
\end{proof}

It is now interesting to focus on the following fact.
By the proof of \cref{theo:tempo_esecuzione_algoritmo_hitting-property} (see the extended technical report~\cite{Gottlob_Malizia:DUAL_arXiv}), the leaves of the recursion tree of procedure \textsc{\DetNewTrAlg}, which are candidate to certify the existence of a new transversal of $\G$, are $O(\InputSize^{2 O(\log \InputSize)})$, but in principle there could be an exponential number of new minimal transversals of $\G$ \Wrt $\H$.
For example, let us consider the class of pairs of hypergraphs ${\{\tuple{\G_i,\H_i}\}}_{i\geq 1}$, defined as follows: $V_i=\{x_1,y_1,\dots,x_i,y_i\}$, $\G_i=\{\{x_j,y_j\}\mid 1\leq j\leq i\}$, and $\H_i=\{\{x_1,\dots,x_i\},\{y_1,\dots,y_i\}\}$.
For every $i\geq 1$, hypergraphs $\G_i$ and $\H_i$ satisfy the intersection property, $|\G_i|=i$, $|\H_i|=2$, and the number of minimal transversals of $\G_i$ missing in $\H_i$ is $\Theta(2^i)$.
So, it is not possible that each leaf of the recursion tree of \textsc{\DetNewTrAlg} identifies a unique new minimal transversal of $\G$.
For this reason we want to know what new transversals of $\G$ are identified by \textsc{\DetNewTrAlg} when it answers \valtrue.

\begin{theorem}\label{theo:algoritmo_trova_minimali}
Let $\G$ and $\H$ be two hypergraphs.
If \textsc{\DetNewTrAlg}$(\emptyassign)$ answers \valtrue, then the witness $\sigma$ on which the procedure has answered \valtrue at the end of its recursion is coherent with a new \emph{minimal} transversal of $\G$ \Wrt $\H$.
\end{theorem}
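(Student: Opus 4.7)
The plan is as follows. Let $\sigma=\assign{\In,\Ex}$ be the witness on which the procedure terminates with \valtrue. By Lemma~\ref{lemma:witnessing_condition}, either $\In$ is a transversal of $\G$ with $\Cov(\sigma)=\emptyset$, or $\Ex$ is a transversal of $\H$ with $\Mis(\sigma)=\emptyset$; in the first case $\In$ itself is already a new transversal of $\G$ \Wrt $\H$, and in the second $\compl{\Ex}$ is a new transversal of $\G$ \Wrt $\H$ (by Lemma~\ref{lemma:newTransversals_complementari}) that contains $\In$ and lies in $\compl{\Ex}$. I will unify the two cases by exhibiting a single new \emph{minimal} transversal $T^*$ with $\In\subseteq T^*\subseteq\compl{\Ex}$, from which coherence of $\sigma$ with $T^*$ is immediate.

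The construction of $T^*$ is greedy: start with $\In$ if $\In$ is a transversal of $\G$, otherwise with $\compl{\Ex}$, and iteratively remove any vertex in $T^*\setminus\In$ whose deletion still leaves a transversal. The resulting $T^*$ is a transversal by construction; it contains $\In$ and is contained in $\compl{\Ex}$; and it is an independent set of $\H$ (either because $T^*=\In$ and $\Cov(\sigma)=\emptyset$, or because $T^*\subseteq\compl{\Ex}$ and $\compl{\Ex}$ is independent of $\H$ in the second case). Thus $T^*$ is a new transversal of $\G$ \Wrt $\H$, and every vertex of $T^*\setminus\In$ is critical in $T^*$ by construction. The remaining task---and the crux of the proof---is to show that every vertex of $\In$ is also critical in $T^*$, so that by Lemma~\ref{lemma:transMin-ogniVertexCritico} $T^*$ will be a minimal transversal.

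To this end I would trace back the sequence of recursive calls $\pi_0=\emptyassign,\pi_1,\dots,\pi_k$ leading to the leaf on which \valtrue was first produced. Each $v\in\In$ entered $\In$ along this path in one of two ways: either (A) as the critical inclusion spawned by a call at line~\ref{line:call-ric-thirdType}, accompanied by a witness edge $G_v\in\G$ whose remaining vertices $G_v\setminus\{v\}$ were simultaneously added to the excluded set, or (B) as part of a wholesale $U$-batch inclusion performed at line~\ref{line:inizio-inclusione-U}, i.e., $v\in U_{\pi_j}$ for some level $j$. Case~(A) is immediate: because $\Ex$ only grows along the recursion, $G_v\setminus\{v\}\subseteq\Ex$, so $G_v\cap T^*\subseteq G_v\cap\compl{\Ex}=\{v\}$; combined with $v\in T^*$, $v$ is critical in $T^*$ with witness $G_v$.

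The main obstacle is Case~(B), in which $v$ carries no explicit criticality witness built by the algorithm. Here the idea is to exploit the correctness of the procedure indirectly: the algorithm performs the $U$-batch inclusion at level $j$ only after every exclusion attempt in the loop at lines~\ref{line:inizio-test-firstType}--\ref{line:fine-test-firstType} has returned \valfalse, so in particular \textsc{\DetNewTrAlg}$(\pi_j\addassign\assign{\emptyset,\{v\}})$ returned \valfalse. By Lemma~\ref{lemma:esecuzione_generale_corretta}, no new transversal of $\G$ \Wrt $\H$ is coherent with $\pi_j\addassign\assign{\emptyset,\{v\}}$. If $v$ were not critical in $T^*$, then $T^*\setminus\{v\}$ would still be a transversal of $\G$, still an independent set of $\H$ (as a subset of $T^*$), and coherent with $\pi_j\addassign\assign{\emptyset,\{v\}}$ (because $v$ is free in $\pi_j$ so $v\notin\In_{\pi_j}$, while $\Ex_{\pi_j}\subseteq\Ex$ is disjoint from $T^*$)---contradicting the above. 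Hence $v$ is critical in $T^*$, which closes the minimality argument.
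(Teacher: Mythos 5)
Your proof is correct, and it reaches the conclusion by a genuinely different route than the paper's. The paper argues by contradiction: it takes an arbitrary new minimal transversal $\wh{T}$ strictly contained in the new transversal $\wt{T}$ with which $\sigma$ is coherent, locates the first recursion level $j$ at which the algorithm's included set escapes $\wh{T}$, and derives a contradiction in each of your two cases --- for a critical inclusion, because the offending vertex would be critical in $\wt{T}$ yet absent from the transversal $\wh{T}\subset\wt{T}$; for a $U$-batch inclusion, because $\wh{T}$ would be coherent with the exclusion assignment $\pi^j\addassign\assign{\emptyset,\{v\}}$ whose recursive call returned \valfalse. You instead construct an explicit minimal witness $T^*$ with $\In\subseteq T^*\subseteq\compl{\Ex}$ and verify criticality of every vertex directly via Lemma~\ref{lemma:transMin-ogniVertexCritico}. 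The decisive ingredient is identical in both arguments: a vertex absorbed in a $U$-batch carries an implicit certificate because its individual exclusion was already attempted at line~\ref{line:call-ric-firstType} and, by the correctness of \textsc{\DetNewTrAlg} (Lemma~\ref{lemma:esecuzione_generale_corretta}), failed only because no coherent new transversal exists. Your handling of the critically included vertices is arguably cleaner (since $G_v\setminus\{v\}\subseteq\Ex$ and $T^*\subseteq\compl{\Ex}$, one gets $G_v\cap T^*=\{v\}$ outright), and your version buys constructiveness: it exhibits the minimal transversal rather than merely refuting its absence. The one point worth spelling out in a final write-up is that the greedy pruning deletes only vertices of $T^*\setminus\In$, so ``critical by construction'' applies exactly to those vertices and the residual burden is genuinely confined to $\In$ --- which your case analysis then discharges.
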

\begin{proof}
Since $\sigma$ is a witness, $\sigma$ must be coherent with a new transversal $\wt{T}$ of $\G$.
Let us assume by contradiction that $\sigma$ is not coherent with any minimal one, and let $\wh{T}$ be any new minimal transversal \emph{strictly} contained in $\wt{T}$.

Let $(\pi^0,\pi^1,\dots,\pi^k)$ be the sequence of the assignments successively considered by the stack of the recursive calls of \textsc{\DetNewTrAlg} resulting in the construction of the witness $\sigma$, where $\pi^0=\emptyassign=\assign{\emptyset,\emptyset}$, $\pi^k=\sigma$, and $\pi^i=\assign{\In^i,\Ex^i}$ for all $0\leq i\leq k$.

Remember that assignments $\pi^i$ are such that $\pi^\ell\dblsub\pi^{\ell+1}$, for every $0\leq \ell\leq k-1$ (see \cref{lemma:esecuzione_singola_corretta}).
This implies that, for every $0\leq i\leq k$, assignment $\pi^i$ is such that $\pi^i\dblsubeq\pi^k=\sigma\dblsubeq \wt{T}$, and hence $\Ex^i\cap \wh{T}=\emptyset$ (because $\Ex^i\cap \wt{T}=\emptyset$ and $\wh{T}\subset \wt{T}$).
For this reason, in order for $\pi^k=\sigma$ not to be coherent with $\wh{T}$, it must be the case that $\In^k\not\subseteq \wh{T}$.
Since $\emptyset=\In^0\subseteq \wh{T}$, $\In^k\not\subseteq \wh{T}$, and $\In^\ell\subseteq \In^{\ell+1}$, for every $0\leq \ell \leq k-1$, there must be an index $j$, with $0\leq j\leq k-1$, such that $\In^j\subseteq \wh{T}$ and $\In^{j+1}\not\subseteq \wh{T}$.

Let $v$ be a vertex belonging to $\In^{j+1}\setminus \wh{T}$.
This means that $v\notin \wh{T}$, and from $\In^{j}\subseteq \wh{T}$ it follows that $v\notin \In^{j}$.
Therefore, vertex $v$ is included during the call \textsc{\DetNewTrAlg}$(\pi^j)$ either at lines \ref*{line:inizio-inclusione-U}\nbdash--\ref*{line:fine-inclusione-U}, or during a recursive call performed at line~\ref*{line:call-ric-thirdType}.

First consider the latter case.
If $v$ were included at line~\ref*{line:call-ric-thirdType}, then it would be included as a critical vertex along with an edge of $\G$, say $G$, chosen at that moment, witnessing the criticality of $v$ in $\In^{j+1}$.
This means that $v\in \In^{j+1}$, and $(G\setminus\{v\})\subseteq \Ex^{j+1}$.
From $\pi^{j+1}\dblsubeq\pi^k\dblsubeq \wt{T}$ it follows that $v\in \wt{T}$ and that $\wt{T}\cap (G\setminus\{v\})=\emptyset$ (because $\wt{T}\cap \Ex^{j+1}=\emptyset$), and hence that $\wt{T}\cap G=\{v\}$.
Therefore, $v$ is critical in $\wt{T}$.
However, $v\notin \wh{T}$ and we are assuming that $\wh{T}$ is a minimal transversal of $\G$ such that $\wh{T}\subset \wt{T}$: a contradiction, because $v$ is at the same time critical and non\nbdash-critical in $\wt{T}$.
As a consequence, $v$ has to be included at lines \ref*{line:inizio-inclusione-U}\nbdash--\ref*{line:fine-inclusione-U}, as part of the set $U_{\pi^j}$ during the call \textsc{\DetNewTrAlg}$(\pi^j)$.

Since the execution flow of \textsc{\DetNewTrAlg}$(\pi^j)$ goes beyond line~\ref*{line:fine-test-firstType}, all the recursive calls performed at line~\ref*{line:call-ric-firstType} return \valfalse.
Among those calls, also \textsc{\DetNewTrAlg}$(\pi^j\addassign\assign{\emptyset,\{v\}})$ is performed (because $v\in U_{\pi^j}$), and the fact that it returns \valfalse implies, by \cref{lemma:esecuzione_generale_corretta}, that there is no new transversal of $\G$ \Wrt $\H$ coherent with $\pi^j\addassign\assign{\emptyset,\{v\}}$.
However, observe that $\In^j\subseteq \wh{T}$, and that $(\Ex^j\cup\{v\})\cap \wh{T}=\emptyset$ (because $\Ex^j\cap \wh{T}=\emptyset$ and $v\notin \wh{T}$).
Hence, $\wh{T}$ is a new transversal of $\G$ coherent with $\pi^j\addassign\assign{\emptyset,\{v\}}$: a contradiction, because we are assuming that \textsc{\DetNewTrAlg}$(\pi^j\addassign\assign{\emptyset,\{v\}})$ returns \valfalse.
Thus $\sigma$ is coherent with a new minimal transversal of $\G$ \Wrt $\H$.
\end{proof}

Let $\G$ and $\H$ be two hypergraphs.
Assume that \textsc{\DetNewTrAlg}$(\emptyassign)$ answers \valtrue, and let $\sigma=\assign{\In,\Ex}$ be the assignment on which the procedure answers \valtrue at the end of its recursion.
There are two cases: (1) $\Sep(\sigma)=\emptyset$, or (2) $\Sep(\sigma)\neq\emptyset$.

In Case (1), by \cref{theo:algoritmo_trova_minimali}, $\sigma$ is coherent with a new minimal transversal of $\G$, and $\sigma$ is such that $\Sep(\sigma)=\emptyset$, hence it follows that $\In$ is a new minimal transversal of $\G$ \Wrt $\H$.

Consider Case (2).
Since $\sigma$ is a witness and $\Sep(\sigma)\neq\emptyset$, it must be the case that $\Com(\sigma) = \emptyset$, and hence $\Ex$ is a transversal of $\H$.
By this, $\compl{\Ex}$ is an independent set of $\H$.
Let us denote by $\mathit{Free}(\sigma) = V\setminus(\In\cup\Ex)$ the set of free vertices in $\sigma$.
Consider the set $T_\sigma=\{\In\cup S\mid S\in\Tr(\Sep(\sigma)^{\FreeSet(\sigma)})\}$.\footnote{With a slight abuse of notation, we here regard $\Sep(\sigma)$ as if it were a hypergraph. In this case, observe that $\Sep(\sigma)^{\mathit{Free}(\sigma)}$ corresponds to ${(\G_{V\setminus\In})}^{V\setminus(\In\cup\Ex)}$.}
Transversals $S$ of $\Sep(\sigma)^{\FreeSet(\sigma)}$, by definition, are such that $S\cap\Ex = \emptyset$, and hence $S\subseteq\compl{\Ex}$ which means that $S$ is an independent set of $\H$.
Therefore, from an adaptation of \cref{lemma:composition_property_transversals_expansion} to $\Sep(\sigma)$, it follows that $T_\sigma$ is a set of new transversals of $\G$.
Note that some of the elements of $T_\sigma$ are new minimal transversals of $\G$, while others are not minimal.
However, there is no guarantee that there are no more minimal transversals, i.e., there are new minimal transversals of $\G$ not belonging to $T_\sigma$.

\end{appendices}

\end{document}